\documentclass{article}

\usepackage{microtype}
\usepackage{graphicx}
\usepackage{subcaption}
\usepackage{booktabs} 
\usepackage{float}
\usepackage{array}

\usepackage{makecell}
\usepackage[table]{xcolor}
\definecolor{ESLOrange}{RGB}{255,127,0}
\definecolor{ICMLBlue}{RGB}{0,0,128}

\usepackage[
  backref=page,
  colorlinks=true,
  linkcolor=ICMLBlue,
  citecolor=ICMLBlue
]{hyperref}

\usepackage{mdframed}
\mdfsetup{
  linewidth=0.8pt,
  roundcorner=4pt,
}
\usepackage{witharrows}

\usepackage{multirow}
\usepackage{booktabs} 

\usepackage[preprint]{icml2026}

\usepackage{amsmath}
\usepackage{amssymb}
\usepackage{mathtools}
\usepackage{amsthm}

\usepackage[capitalize,noabbrev]{cleveref}

\theoremstyle{plain}

\newtheorem{theorem}{Theorem}

\newtheorem{fact}[theorem]{Fact}

\newtheorem{lemma}[theorem]{Lemma}
\newtheorem{corollary}[theorem]{Corollary}
\newtheorem{definition}[theorem]{Definition}
\newtheorem{assumption}[theorem]{Assumption}
\newtheorem{remark}[theorem]{Remark}

\usepackage[textsize=tiny]{todonotes}
\usepackage{subcaption} 

\icmltitlerunning{Faster $k$-means Seeding Under The Manifold Hypothesis}
\usepackage{mathrsfs}

\newcommand{\opt}{\operatorname{opt}}

\newcommand{\cost}{\operatorname{cost}}
\newcommand{\Insert}{\textsc{Insert}}
\newcommand{\Query}{\textsc{Query}}
\newcommand{\eps}{\varepsilon}
\newcommand{\E}{\mathbf{E}}
\renewcommand{\Pr}{\mathbf{Pr}}
\newcommand{\N}{\mathbf{N}}
\newcommand{\R}{\mathbb{R}}
\newcommand{\U}{\mathcal{U}}
\renewcommand{\H}{\mathcal{H}}
\newcommand{\M}{\mathcal{M}}

\newcommand{\X}{\mathcal{X}}
\newcommand{\Y}{\mathcal{Y}}

\newcommand{\dvol}{\operatorname{dvol}}
\newcommand{\vol}{\operatorname{vol}}
\newcommand{\diam}{\operatorname{diam}}
\newcommand{\defeq}{\overset{\mathrm{def}}{=}}

\renewcommand{\L}{\mathscr{L}}

\newcommand{\F}{\mathtt{F}}
\renewcommand{\rho}{\varrho}
\newcommand{\afkmc}{\operatorname{AFKMC}^2}
\newcommand{\rejsample}{\operatorname{RejectionSampling}}

\newcommand{\pronecoreset}{\operatorname{PRONECoreset}}
\newcommand{\fastkmeans}{\operatorname{FastKMeans}}

\newcommand{\qkmeans}{\operatorname{Qkmeans}}
\renewcommand{\phi}{\varphi}
\newcommand{\C}{\mathcal{C}}
\newcommand{\rownumber}{\stepcounter{rownum}\arabic{rownum}}

\begin{document}

\twocolumn[
  \icmltitle{Fast $k$-means Seeding Under The Manifold Hypothesis }



  \icmlsetsymbol{equal}{*}

  \begin{icmlauthorlist}
    \icmlauthor{Poojan Shah}{yyy}
    \icmlauthor{Shashwat Agrawal}{yyy}
    \icmlauthor{Ragesh Jaiswal}{yyy}

  \end{icmlauthorlist}

  \icmlaffiliation{yyy}{Department of Computer Science and Engineering, Indian Institute of Technology Delhi}

  \icmlcorrespondingauthor{Poojan Shah}{shahpoojan2004@gmail.com}

  \icmlkeywords{Machine Learning, ICML}

  \vskip 0.3in
]



\printAffiliationsAndNotice{}  

\begin{abstract}
    We study \emph{beyond worst case analysis} for the $k$-means problem where the goal is to model typical instances of $k$-means arising in practice. Existing theoretical approaches provide guarantees under certain assumptions on the optimal solutions to $k$-means, making them difficult to validate in practice. We propose the \emph{manifold hypothesis}, where data obtained in ambient dimension $D$ concentrates around a low dimensional manifold of intrinsic dimension $d \ll D$, as a reasonable assumption to model real world clustering instances. We identify key geometric properties of datasets which have theoretically predictable scaling laws depending on the \emph{quantization exponent} $\eps = 2/d$ using techniques from optimum quantization theory. We show how to exploit these regularities to design a fast seeding method called $\qkmeans$\footnote{The $\operatorname{Q}$ stands for Quantization, as the analysis of our algorithm was motivated by results from optimum quantization theory.} which provides $O(\rho^{-2} \log k)$ approximate solutions to the 
    $k$-means problem in time $O(nD) + \widetilde{O}(\eps ^{1+\rho}\rho^{-1}k^{1+\gamma})$; where the exponent $\gamma = \eps + \rho$ for an input parameter $\rho < 1$. This allows us to obtain new runtime - quality tradeoffs. We perform a large scale empirical study across various domains to validate our theoretical predictions and algorithm performance to bridge theory and practice for beyond worst case data clustering.  
\end{abstract}

\section{Introduction}
Data clustering is a fundamental task in machine learning and data science. The goal is to partition a dataset into subsets so that points within the same subset (or cluster) are more similar to each other than to points in other clusters. Such partitions help uncover latent structure and compress data into a small number of interpretable groups. Among the many approaches to clustering, the $k$-means formulation is perhaps the most widely used. In this formulation, we are given a  dataset $\X = \{x_1,\dots,x_n\} \subset \R^D$, and our task is to find a set of $k$ centers $C = \{c_1,\dots,c_k\} \subset \R^D$ which minimize the total within-cluster variance:
\[
  \cost(\X,C) \defeq \sum_{x \in \X} \min_{c \in C} \|x-c\|^2.
\]
These centers induce a partition $\X = \bigcup_{j=1}^k C_j$, where $C_j$ contains the points in the dataset $\X$ closest to $c_j$. An optimal set $C_k^*$ achieves cost $\opt_k(\X) = \cost(\X,C_k^*)$. Optimizing the $k$-means cost then serves as a proxy for uncovering some natural clustering of the dataset. This formulation has become a key component of algorithm design in several fields such as vector quantization and compression \citep{Linde1980}, approximate nearest neighbor search \cite{Jegou2011,Johnson2017} along with numerous others.\\ 

Given the immense importance of $k$-means in modern data processing, it is natural that the problem has been thoroughly investigated from both theoretical and empirical perspectives. The $k$-means problem is known to be $\mathsf{NP}$-hard \citep{dasgupta_08,mahajan_09}, so it is unlikely to have a polynomial time algorithm. Much research has been done on designing polynomial time {\em approximation schemes}\footnote{These were developed in a long line of research by \cite{kumar_10,jaiswal_14,jaiswal_15,cohen-addad_18,friggstad_19,cohen-addad_19,bhattacharya_20}} for the $k$-means problem. 
However, the algorithm most popularly used in practice\footnote{Ranked as the second most popular data mining algorithm in \citep{wu2008top10}.}  to solve $k$-means instances is a heuristic known as the $k$-means algorithm ({\it not to be confused with the $k$-means problem}).
This heuristic, also known as Lloyd's iterations \citep{lloyd1982least}, iteratively improves the solution in several rounds. 
The heuristic starts with an arbitrarily chosen set of $k$ centers. In every iteration, it (i) partitions the points based on the nearest center and (ii) updates the center set to the centroids of the $k$ partitions.
In the classical computational model, it is easy to see that every Lloyd iteration costs $O(nkD)$ time.
This hill-climbing approach may get stuck at a local minimum or take a huge amount of time to converge, and hence, it does not give provable guarantees on the quality of the final solution or the running time \citep{dasgupta_03,har-peled_sadri_05,arthur_vassilvitskii_06a,arthur_vassilvitskii_06b}.
In practice, Lloyd's iterations are usually preceded by the $k$-means++ algorithm \cite{arthur2007kmeanspp}, a fast sampling-based approach for picking the initial $k$ centers that also gives an approximation guarantee.
So, Lloyd's iterations, preceded by the $k$-means++ algorithm, give the best of both worlds, theory and practice. Hence, it is unsurprising that a lot of work has been done on these two algorithms. 
The work ranges from efficiency improvements in specific settings to implementations in distributed and parallel models.\\ 

\textbf{Beyond worst case analysis.} A large fraction of theoretical algorithms for $k$-means are designed to be used as \emph{off the shelf} algorithms in a \emph{data oblivious} manner. This means that they assume that the input dataset $\X$ can be any arbitrary set of points in $\R^D$ and provide worst case guarantees. The pervasive issue is that algorithms designed to resist pathological instances, i.e. inputs that maximize resource usage-often yield complexity bounds far removed from their typical empirical performance. The realization that real datasets rarely resemble these theoretical worst-case examples provides the core motivation for moving beyond pessimistic, data-oblivious constraints \citep{roughgarden2019beyond,balcan2021datadriven,agarwal2015approx}. This \emph{data driven} paradigm acts as a necessary bridge between theoretical analysis and practical applicability and so has gained attention in the algorithm design community. \\ 

For $k$-means in particular, this comprises of several approaches. The performance of Lloyd-type methods and $k$-means++  under separability conditions was analyzed by \cite{ostrovsky2013effectiveness,jaiswal2012separable}. An instance $(\X,k)$ of the $k$-means problem is said to be $\epsilon$-separated if $\frac{\opt_k(\X)}{\opt_{k-1}(\X)} \leq \epsilon $. This definition is motivated by the usual practice of using the \emph{elbow method} to determine a good value of $k$ to be used for clustering $\X$. Another line of research studies  the notion of \emph{approximation stability} in \citep{balcan2009approximate,blum2021approx,agarwal2015approx}. Approximation stability is the idea that any clustering whose cost is close to optimal must be similar to the ground truth clustering. An instance $(\X,k)$ is called $(1+\epsilon,\alpha)$-approximation stable with respect to the cost function $\cost$ and a ground truth target clustering $C_T$ if every $1+\epsilon$ approximate solution to $\cost$ is atmost $\alpha$ far from $C_T$ in the sense of fraction of points misclustered.  Other proposals include distributional stability \cite{bilu2010stability} , spectral stability  \cite{kumar2010clustering} and perturbation resilience  \cite{10.1109/FOCS.2010.36}.  \cite{cohen-addad2017local} studied the structural consequences of these notions. A common theme across these proposals is assuming some properties of the optimal $k$-means clustering, which makes an empirical study of these notions quite difficult. Indeed, it is a non trivial problem to figure out how to check if a given dataset instance follows these assumptions. Although these notions have been studied theoretically, not much work has been done in studying whether they hold (even approximately) true for real world datasets, in part due to the fact that these assumptions bestow constraints on the optimal cost, which are computationally hard to verify. \\

\begin{mdframed}

  \textbf{\textit{\underline{Motivating Questions}}}
  \emph{The above discussion naturally leads to the following questions:}
\begin{enumerate}
   \item \emph{Can we identify structural properties of real-world datasets that are computationally verifiable, are well motivated in theory and empirically hold true for diverse modern datasets ? }
    \item \emph{If so, can we exploit the existence of such structural properties to design better algorithms for $k$-means clustering?}
\end{enumerate}

\end{mdframed}

\textbf{Contributions.} In order to tackle these questions, we present the following contributions, which are discussed in detail in the next section : 
\begin{enumerate}
    \item We propose the manifold hypothesis as a theoretically grounded, empirically tested and computationally verifiable assumption for real world clustering instances. 
    \item Under this framework, we identify geometrical parameters which have predictable scaling law behaviour.
    \item We use this structure to design a fast seeding procedure through analyzing a \emph{perturbed} version of $k$-means++ sampling
    \item We perform a large scale empirical study to validate our assumptions and the performance of our seeding algorithm.
\end{enumerate}

\subsection{Our Contributions}

We propose the idea that the \emph{manifold hypothesis} is a good candidate to derive such structural properties. The manifold hypothesis states that although real-world data often reside in high-dimensional ambient spaces, their probability mass tends to concentrate near low-dimensional, smooth submanifolds \( \mathcal{M} \subset \mathbb{R}^D \) with intrinsic dimension \( d \ll D \). This hypothesis forms the conceptual bedrock for explaining the success of modern deep-learning systems \citep{bengio2013representation} and has subsequently attracted thorough theoretical and experimental investigation  \citep{fefferman2016testing,brown2022union,bordt2023manifold} and hence is widely accepted and established. We formally state this:

\begin{assumption}\label{ass} ({Manifold hypothesis assumption})
  The dataset $\X = \{x_1,\dots,x_n\} \subset \R^D$ is generated by drawing $n$ independent and identically distributed samples from a density function $f : \M \to \R_{+}$ whose support $\operatorname{supp}(f)$ is a measurable subset of a smooth, compact  $d$-dimensional submanifold $\M \subset \R^D$.
\end{assumption}

From the point of view of beyond–worst-case analysis, the manifold hypothesis offers a different type of structural regularity: instead of assuming properties about the \emph{optimal solution}, it assumes geometric regularity of the \emph{input distribution}. This makes it attractive for algorithm design, as the assumption is intrinsic to the data rather than to the algorithm’s output. Moreover, it is compatible with the increasing body of empirical evidence that high-dimensional machine learning data has low intrinsic dimension \citep{tempczyk2022lidl,horvat2022flows}. Finally, from a practical standpoint, the manifold hypothesis is more aligned with modern applications than separation- or stability-based assumptions. While the latter are tailored to idealized clustering scenarios, the manifold hypothesis is already the implicit structural model underlying essentially all deep learning architectures used in vision, NLP, physics, and generative tasks \citep{bengio2013representation,cresswell2024survey} What does this imply for the $k$-means problem ? A fundamental question is to study how the optimal $k$ means cost decreases with the number of clusters $k$. This question was explored for Euclidean spaces in an early work of \cite{zador1982asymptotic}, who showed asymptotic laws for the optimal \emph{quantization cost} given by $\Delta_k(f) = \inf_{|C| = k} \Delta(f,C)$ where $\Delta(f,C) = \int_{\M} \min_{c \in C}\|x-c\|^2 f(x)dx$. Note that this can be seen as a continuous analog to the $k$-means problem. This led to the development of the field now known as \emph{optimum quantization theory}, for a detailed exposition of which we point the readers to   \citep{gray1998quantization,graf2000foundations}. Zador's results were generalized to manifolds by \cite{gruber2004optimum}, which are of interest to us : 


  \begin{fact} \label{thm:scaling-laws-fact} \citep{gruber2004optimum}  Suppose $f$ is a probability density according to Assumption~\ref{ass} and $S_k = \{s_1,\dots,s_k\} \subset \R^D$ be an optimal $k$-quantizer of $f$ i.e., $S_k \in \arg \min _{|C| = k} \Delta(f,C)$.
  
  Define the parameters $\beta_k(f)$ and $\eta_k(f)$ as follows : 
\vspace{-2pt}
  $$ \beta_k(f) \defeq \frac{\Delta_1(f)}{\Delta_k(f)} \qquad \eta_k(f) \defeq \frac{\max_{i \neq j \in [k]} \|s_i-s_j\|}{\min_{i \neq j \in [k]} \|s_i-s_j\|} $$

  Then, the following holds : 

  $$ \beta_k(f) \in  O_f(1)\cdot k^{\eps} \qquad \eta_k(f) \in O_f(1)\cdot k^{\eps/2}$$

  where $\eps = 2/d$ is referred to as the quantization exponent and $O_f(\cdot)$ hides constants depending only on $f$. 
  
  \end{fact}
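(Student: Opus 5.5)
The plan is to derive both bounds from Gruber's asymptotic quantization results on manifolds. For $\beta_k(f)$, $\Delta_1(f)$ is a fixed positive constant depending only on $f$: it is the variance of $f$ about its mean, which is positive because the support of $f$ is not a single point. So it suffices to prove the lower bound $\Delta_k(f) \geq c_f\, k^{-2/d}$. Gruber's manifold version of Zador's theorem gives
\[
\lim_{k\to\infty} k^{2/d}\Delta_k(f) = Q_d \Bigl(\int_{\M} f^{d/(d+2)}\,\dvol\Bigr)^{(d+2)/d} > 0,
\]
where $Q_d$ is a dimensional constant. Hence $\Delta_k(f) \geq c_f k^{-\eps}$ for all $k \geq k_0(f)$. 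For the finitely many $k < k_0$ we have $\Delta_k(f) > 0$, because $f$ is a density on a $d$-dimensional set and so is not supported on finitely many points. Absorbing these $k$ into the constant gives $\beta_k(f) \le O_f(1)\,k^{\eps}$. The only ingredient I would prove directly is the lower bound, and only as a sanity check. Cover $\M$ by finitely many charts with bi-Lipschitz distortion bounded by a constant. Then a neighborhood $B(s_i,r)\cap\M$ has $d$-volume $O(r^d)$, so the $k$ Voronoi cells cannot all have small radius on the region where $f$ has substantial mass.

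For $\eta_k(f)$ I treat the numerator and denominator separately. The numerator is at most a constant $O_f(1)$. Optimal quantizer points lie in the closed convex hull of $\operatorname{supp}(f)$, since each $s_i$ is the conditional mean of its cell and empty cells cannot occur in an optimal quantizer. This hull is bounded because $\M$ is compact, so $\max_{i\neq j}\|s_i-s_j\| \le \diam(\M)$. It therefore remains to show $\min_{i\ne j}\|s_i-s_j\| \geq c_f\,k^{-1/d} = c_f k^{-\eps/2}$. I would get this from Gruber's second main result: the points of an optimal quantizer are asymptotically uniformly distributed with respect to the point density $f^{d/(d+2)}$, and more strongly they form a Delone-type configuration. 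Concretely, there are constants $0<a<b$ with each cell's covering radius $\le b\,k^{-1/d}$ and its packing radius $\ge a\,k^{-1/d}$, taken locally in charts where $f$ is bounded above and below.

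The separation step is the main obstacle. Uniform distribution of the centers alone does not rule out two centers being very close. The argument I would try first is a local perturbation. Suppose $\|s_i-s_j\| = \delta \ll k^{-1/d}$. Merge $s_i$ and $s_j$ into a single center and use the freed center to split the cell with the largest cost contribution. Merging increases the cost by only $O\bigl(\delta\cdot k^{-1/d}\cdot \text{mass}\bigr)$, which is on the order of $\delta k^{-1/d-1}$. Splitting reduces the cost by $\Omega(k^{-1-2/d})$, since every cell costs roughly $k^{-1-2/d}$ by the asymptotic equidistribution. So optimality forces $\delta \gtrsim k^{-1/d}$.

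This perturbation argument needs $f$ to be bounded away from zero on its support. If $f$ degenerates near the boundary of $\operatorname{supp}(f)$, the constants break down. I would therefore either impose that regularity as part of the implicit hypotheses, absorbing everything into $O_f(1)$, or cite Gruber's result directly. Combining the numerator and denominator bounds gives $\eta_k(f) \le O_f(1)\,k^{\eps/2}$.
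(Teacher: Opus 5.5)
For $\beta_k(f)$ and for the numerator of $\eta_k(f)$ your argument is fine. The chart volume-counting lower bound you offer as a sanity check is in fact the paper's argument (Lemma~\ref{lem:zador-manifold}). It gives $\Delta_k(f)\ge c\,k^{-2/d}$ for every $k\ge 1$, so you need no limit-plus-finitely-many-$k$ step. It should be your primary argument: Gruber's formula is intrinsic (geodesic distance, centers on $\M$), and both features only increase the cost, so his limit does not directly lower-bound the Euclidean, $\R^D$-centered $\Delta_k(f)$.

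For $\eta_k(f)$ the paper gives nothing beyond the citation, so falling back on Gruber matches the paper. Your perturbation argument, however, does not stand on its own. Write $V_j$ for the Voronoi cell of $s_j$ and $p_j=\int_{V_j}f\,\dvol_\M$. Your merge estimate $O(\delta\,k^{-1/d}p_j)\approx\delta k^{-1-1/d}$ presupposes that the cell being deleted has $p_j=O(1/k)$ and radius $O(k^{-1/d})$. Asymptotic uniform distribution cannot supply this. It concerns only the weak limit of the empirical measure of the centers, is compatible with $o(k)$ cells being arbitrarily heavy or large, and says nothing about the specific close pair $s_i,s_j$. With only the trivial bounds $p_j\le 1$ and $\int_{V_j}\|x-s_j\|^2 f\,\dvol_\M\le\Delta_k(f)$, the same comparison yields merely $\delta\gtrsim k^{-1-1/d}$.

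Both halves of the comparison can be tightened, and doing so isolates exactly what is missing.

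\emph{Gain side.} You do not need every cell to cost about $k^{-1-2/d}$. For the quantization cost $\Phi$, a pointwise inequality between the integrands gives $\Phi(T)-\Phi(T\cup O)\le\sum_{o\in O}\bigl(\Phi(T)-\Phi(T\cup\{o\})\bigr)$. Take $O$ an optimal $Mk$-quantizer and use both Zador bounds. Then some single point lowers the cost of any $(k-1)$-set by $\gtrsim k^{-1-2/d}$, for a constant $M=M(f)$.

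\emph{Merge side.} The centroid condition helps here. Replacing $s_i,s_j$ by the centroid of $V_i\cup V_j$ raises the cost by at most $\frac{p_ip_j}{p_i+p_j}\delta^2\le\min\{p_i,p_j\}\,\delta^2$, with no radius bound needed.

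Optimality of $S_k$ then gives $\min\{p_i,p_j\}\,\delta^2\gtrsim k^{-1-2/d}$. So the argument closes once you have a uniform bound $\max_j p_j=O(1/k)$ on the masses of optimal cells. With only $p_j\le 1$ you get $\eta_k(f)=O(k^{1/2+\eps/2})$. That mass bound is a genuinely local, non-elementary estimate. It follows from the covering half of the Delone property you attribute to Gruber (together with $f$ bounded above), and the packing half of that property is precisely the separation you want. So the perturbation argument does not remove the dependence on that citation.

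Finally, the regularity that matters for separation is an \emph{upper} bound on $f$, not a lower one, and it cannot be absorbed into $O_f(1)$. Take $d=1$ and $f(x)=\tfrac12 x^{-1/2}$ on a straight piece $(0,1]$ of a smooth closed curve. The optimal point density is proportional to $f^{1/3}\propto x^{-1/6}$, so the smallest gaps are $o(1/k)$ and $\eta_k(f)/k^{\eps/2}\to\infty$. Hence the boundedness $f_{\min}<f<f_{\max}$ that the paper imposes in its appendix is needed for the $\eta$ part of this statement too; Assumption~\ref{ass} alone does not suffice.
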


\
    We mention that such scaling laws are prevelant in several research areas. Notable examples include the scaling law between word frequency and word rank in large text documents, referred to as Zipf's law \citep{zipf1949human} and the power law distribution of node degrees in large scale networks \citep{barabasi1999emergence}. These are frequently used to guide algorithm design and algorithm analysis. We now define these parameters for finite datasets : 

\begin{definition} \label{def:parameters}
    For a dataset $\X = \{x_1,\dots,x_n\}$ and a set of centers $C = \{c_1,\cdots,c_k\}$ we define the following: 
    \[\beta(\X,C) := \frac{\cost(\X,\mu)}{\cost(\X,C)} \quad \beta_k(\X) := \frac{\opt_1(\X)}{\opt_k(\X)} \] 
    \begin{align*} 
    \eta(\X,C) &:= \frac{\max_{i\neq j} \|c_i-c_j\|}{\min_{i \neq j}\|c_i-c_j\|} \\  
    \eta(\X) &:= \frac{\max_{i\neq j} \|x_i-x_j\|}{\min_{i \neq j}\|x_i-x_j\|}
    \end{align*}

    Here, $\mu = \mu(\X)$ is the centroid of the dataset $\X$. 
\end{definition}

    Note that we always have $\beta(\X,C) \leq \beta_k(\X)$ and $\eta(\X,C) \leq \eta(\X)$ for all $C = \{c_1,\dots,c_k\}$. We show the following finite sample bounds : 

\begin{theorem} (Scaling laws) \label{thm:scaling-laws}
    Suppose $\X$ is sampled according to Assumption~\ref{ass} then the following hold with probability atleast $1 - \frac{1}{\operatorname{poly}(n)}$: 
    \begin{align*}
        \beta_k(\X) &\in \left(1+O_f\left(\sqrt{\frac{kD \log n}{n}}\right)\right) \cdot k^{\eps} \\ 
        \eta(\X) &\in O_f(1) \cdot n^{3\eps/2}
    \end{align*}
\end{theorem}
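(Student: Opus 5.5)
The statement has two parts, and I will prove them separately. The only earlier result I use is Fact~\ref{thm:scaling-laws-fact}, together with two standard tools: uniform convergence for $k$-means costs and a packing argument for sampled points.

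\textbf{Scaling of $\beta_k(\X)$.} The idea is to compare empirical costs with their population counterparts, namely $\opt_1(\X)/n$ with $\Delta_1(f)$ and $\opt_k(\X)/n$ with $\Delta_k(f)$. The tool is a uniform deviation bound for the class of functions $x \mapsto \min_{c\in C}\|x-c\|^2$ with $|C|=k$ and $C$ ranging over a bounded region. Since $\M$ is compact, all relevant centers may be restricted to a bounded set containing $\M$. This function class has pseudo-dimension $O(kD\log k)$, or one can instead use a covering-number bound over $(\R^D)^k$, and the losses are bounded by $\diam(\M)^2$. A standard Rademacher or Hoeffding-plus-net argument then gives, with probability $1-1/\operatorname{poly}(n)$,
\[
\sup_{|C|=k}\Bigl|\tfrac1n\cost(\X,C)-\Delta(f,C)\Bigr| \le O_f\Bigl(\sqrt{\tfrac{kD\log n}{n}}\Bigr).
\]
Applying this bound at the population optimizer and at the empirical optimizer yields $|\opt_k(\X)/n-\Delta_k(f)|\le$ the same error. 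The $k=1$ case gives the analogous bound for $\opt_1$.

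Next I convert these additive errors into a multiplicative factor. I divide by $\Delta_k(f)$ and $\Delta_1(f)$, which are positive constants depending on $f$ for fixed $k$. The ratio then becomes $\beta_k(f)\,(1+O_f(\sqrt{kD\log n/n}))$. Finally I invoke Fact~\ref{thm:scaling-laws-fact}, $\beta_k(f)\in O_f(1)k^{\eps}$, and absorb the constant.

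The main obstacle is this conversion from additive to relative error. It requires $\Delta_k(f)$ to be bounded below uniformly enough. Zador/Gruber asymptotics give $\Delta_k(f)=\Theta_f(k^{-\eps})$, so the relative error is really $\sqrt{kD\log n/n}\cdot k^{\eps}$. To land exactly on the stated form I would absorb the factor $k^{\eps}$ into the $O_f$ under the regime $n\gg k$. If that is not acceptable, I would try a multiplicative (relative) uniform deviation bound of Bernstein type instead.

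\textbf{Scaling of $\eta(\X)$.} The numerator satisfies $\max\|x_i-x_j\|\le\diam(\M)=O_f(1)$ deterministically. For the denominator I need a lower bound on the minimum pairwise distance. Assuming $f$ is bounded above by $\|f\|_\infty$, and using that for small $r$ the volume of a geodesic or Euclidean ball on $\M$ is at most $c_\M r^d$, I get for a fixed pair
\[
\Pr[\|x_i-x_j\|<r]\le \|f\|_\infty c_\M r^d.
\]
A union bound over the $\binom n2$ pairs gives failure probability at most $O_f(n^2r^d)$. Choosing $r=n^{-3/d}$ makes this $O_f(1/n)$, which is of the form $1/\operatorname{poly}(n)$; a slightly smaller constant power handles a general polynomial. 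Hence $\min\|x_i-x_j\|\ge n^{-3/d}$ with high probability, and
\[
\eta(\X)\le O_f(1)\,n^{3/d}=O_f(1)\,n^{3\eps/2},
\]
since $3/d=3\eps/2$. The only delicate point in this part is the uniform small-ball volume bound $\vol_\M(B(x,r)\cap\M)\le c_\M r^d$. It follows from compactness and smoothness via bounded curvature/reach, so I would cite it as a standard fact.

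Finally, a union bound over the two high-probability events completes the proof.
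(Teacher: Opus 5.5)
Your route is the paper's: for $\beta_k$, a Hoeffding-plus-net uniform deviation bound (Lemma~\ref{lem:finite-sample}, with centers confined to a ball containing the data) combined with Zador--Gruber bounds; for $\eta$, the diameter bound plus a union bound over pairs at scale $r=n^{-3/d}$. The $\eta$ part is fine as written. You correctly use the \emph{upper} bound on $f$ in the small-ball estimate (the paper's text says ``lower bound''). At the stated exponent the failure probability is $O_f(1/n)$, exactly as in the paper. One small correction: a higher polynomial would force a smaller $r$, i.e.\ an exponent larger than $3/d$, not a smaller one.

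The $\beta_k$ part has a genuine gap, namely the one you flagged, and your patch does not close it. $O_f(\cdot)$ may hide only constants depending on $f$, so the factor $k^{\eps}$ in the relative error cannot be absorbed ``in the regime $n\gg k$''. The regime $n\gg k^{1+2\eps}D\log n$ only makes that error $o(1)$; it does not make it $O_f(\sqrt{kD\log n/n})$.

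There is a second problem. Fact~\ref{thm:scaling-laws-fact} only gives $\beta_k(f)\le O_f(1)\,k^{\eps}$, and $\beta_k(f)/k^{\eps}$ tends to an $f$-dependent constant that is not $1$ in general. For the uniform measure on a unit circle it is about $3/\pi^2$. So a leading constant necessarily survives, and the literal form $(1+o(1))k^{\eps}$ is false for that example. What your argument actually proves, for $n\gg k^{1+2\eps}D\log n$, is
\[
\beta_k(\X)\le O_f(1)\,k^{\eps}\Bigl(1+O_f\bigl(k^{\eps}\sqrt{kD\log n/n}\bigr)\Bigr).
\]
Adding the lower Zador bound of Lemma~\ref{lem:zador-manifold} upgrades this to $\beta_k(\X)=\Theta_f(k^{\eps})$.

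The paper's proof makes the same two leaps. It writes the additive error as a relative one without the $k^{2/d}$ factor, and it drops the $\Theta_f(1)$ constant when taking the ratio. So this is a defect in the precise form of the statement, not an idea you are missing relative to the paper; only an $O_f(k^{\eps})$ upper bound is used downstream. Your Bernstein fallback helps only partially. Because the losses are bounded by $\diam(\M)^2$ uniformly over $C$, the standard relative-deviation bound reduces the extra factor only to roughly $k^{\eps/2}$. No deviation bound can remove the leading constant, since it comes from the population quantity $\beta_k(f)$ itself.
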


Next, we design a fast rejection-sampling based seeding scheme to accelerate $k$-means++ with novel runtime-approximation tradeoffs. Our algorithms are designed in a way which allows us to exploit the scaling laws. This allows us to improve upon existing approaches aimed at speeding up $k$-means++ under the manifold hypothesis. We formally state our main results as follows : 
  \begin{theorem} \label{thm:tradeoffs} (Main Theorem)
  Let $\X = \{x_1,\dots,x_n\} \subset \R^D$ be a dataset. Then there exists an algorithm which additionally takes as input parameters $\rho < 1, m \in \mathbb{N}$ and ouputs a set $C$ of $k$ centers such that 
  $$\E[\cost(\X,C)] \in  O(\rho^{-2})(\ln k \cdot \opt_k(\X) + k^{\frac{-\rho m}{\beta}} \cdot \opt_1(\X))$$ in  time $O(nD) + \widetilde{O}(m \log k  (k \log \eta)^{1+\rho})$ where $\beta = \beta(\X,C)$ and $\eta = \eta(\X,C)$, and $\tilde{O}$ hides terms of $\log k$ and $\log n$. 
  \end{theorem}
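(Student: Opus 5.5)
The algorithm combines three ingredients. First, an approximate nearest-neighbour structure replaces exact $D^2$ distances: multi-tree embedding or LSH with approximation factor $c=\rho^{-1}$ and query time $k^{\rho}$. Second, rejection sampling with a uniform proposal. Third, a Markov-chain or rejection step whose number of trials is governed by $\beta$.

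\textbf{Preprocessing.} In $O(nD)$ time I compute the centroid $\mu$ and the distances $\|x-\mu\|^2$. I also project the data to $O(\log n)$ dimensions via Johnson--Lindenstrauss, which costs $O(nD\log n)$ and is absorbed in the $\tilde O$, or else I use the distances directly.

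\textbf{Seeding loop.} I run $k$ rounds of $k$-means++. In round $i$, with current center set $C_i$, the target distribution is $p(x)\propto \cost(x,C_i)$. The proposal is $q(x)=\frac12\cdot\frac1n+\frac12\cdot\frac{\|x-\mu\|^2}{\opt_1(\X)}$, as in $\afkmc$, which can be sampled in $O(1)$ after preprocessing. A proposed $x$ is accepted with probability $\min\!\big(1,\tilde d(x,C_i)/(M\,q(x))\big)$. Here $\tilde d$ is a $\rho^{-1}$-approximate nearest-center distance obtained from a dynamic ANN structure over $C_i$, which supports insertions and queries in $\tilde O(k^{\rho}\log\eta)$ time. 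The $\log\eta$ factor comes from discretising distance scales: the ratio of max to min interpoint distance among centers fixes the number of LSH scales, about $\log\eta$.

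\textbf{Approximation guarantee.} Sampling from $\tilde d$ instead of the true $D^2$ distribution is a perturbed $k$-means++. Standard analyses, such as the noisy $k$-means++ of Bhattacharya et al.\ or the LSH-based analyses, show that if each probability is distorted by a factor of at most $c^2=\rho^{-2}$, then the expected cost is $O(c^2\ln k)\,\opt_k$. This accounts for the $O(\rho^{-2}\ln k\cdot\opt_k(\X))$ term.

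\textbf{Truncation and running time.} The acceptance probability per trial is at least about $\cost(\X,C_i)/\opt_1(\X)\ge 1/\beta$, up to the $\rho$ factor. I cap the number of trials at $m\log k$. If a round fails, I output an arbitrary point, for example a sample from $q$. The probability of failing is at most $(1-\rho/\beta)^{m\log k}\le k^{-\rho m/\beta}$. On failure the cost is charged against $\cost(\X,\{\mu\})$-type quantities, using that a sample from $q$ is a 2-approximation relative to $\opt_1$. Summing over rounds with the usual potential argument gives the additive $k^{-\rho m/\beta}\opt_1(\X)$ term.

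The total time is $k$ rounds, each with $m\log k$ trials costing $\tilde O((k\log\eta)^{\rho})$ per ANN query, plus $k$ insertions. This gives $\tilde O\big(m\log k\,(k\log\eta)^{1+\rho}\big)$.

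\textbf{Main obstacle.} The hardest step is the perturbed-$k$-means++ argument combined with truncation. The ANN returns only a $c$-approximation, possibly randomized and failing with small probability. I must show that the $O(\log k)$ potential analysis of Arthur--Vassilvitskii degrades only by $c^2$ when both the sampling distribution and the rejection ratio are distorted. I would first adapt the covered/uncovered cluster potential argument with multiplicatively perturbed weights. Then I would treat truncation failures as rounds that contribute at most the $\opt_1$-proportional error.
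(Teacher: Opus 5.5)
Your overall design matches the paper's: a norm-based proposal, rejection against an ANN-estimated $D^2$ weight, truncation after $m\ln k$ trials with a fallback, and a covered/uncovered potential argument. Two load-bearing steps, however, are missing, and as sketched they would not go through.

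The first gap is that you never show the rejection step is exact. You need a computable $M$ with $\tilde d(x,C_i)\le M\,q(x)$ for every $x$; otherwise the $\min(1,\cdot)$ cap biases the accepted point away from $\pi^{\L}(\cdot\mid C_i)$, and no perturbed-$k$-means++ analysis applies. The only available pointwise bound uses $c_1\in C_i$: $\tilde d(x,C_i)\le\rho^{-1}\cost(x,C_i)\le\rho^{-1}\|x-c_1\|^2\le 2\rho^{-1}\bigl(\|x-\mu\|^2+\|c_1-\mu\|^2\bigr)$. Your $q$ has no $c_1$-term, so an $M$ of order $\rho^{-1}\opt_1(\X)$, which your acceptance-rate claim implicitly assumes, fails whenever $\|c_1-\mu\|^2\gg\opt_1(\X)/n$ (e.g.\ $c_1$ an outlier). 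This is why the paper (Lemma~\ref{lem:oversample}) builds $\|c_1\|^2$ into the proposal, taking $\kappa(x\mid C)\propto\|x\|^2+\|c_1\|^2$ on centered data. It accepts with $r=\cost^{\L}(x,C)/\bigl(2\rho^{-1}(\|x\|^2+\|c_1\|^2)\bigr)\le 1$, so the output is exactly $\pi^{\L}(\cdot\mid C)$. The per-trial acceptance is then $1/\tau$ with $\tau=2\rho^{-1}\cost(\X,c_1)/\cost^{\L}(\X,C)$, using $\|\X\|_F^2+n\|c_1\|^2=\cost(\X,c_1)$. You can repair your version with $M=4\rho^{-1}\cost(\X,c_1)$. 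But then the rate depends on $\beta$ only up to the random factor $\cost(\X,c_1)/\opt_1(\X)$, which has mean $2$ but is unbounded pointwise. So your claim of acceptance ``at least about $\rho/\beta$ per trial'' must be replaced by a bound averaged over $c_1$, as the paper does with $\E[\tau]\le 4\rho^{-1}\beta$.

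The second gap is that truncation cannot be handled by quoting noisy-$k$-means++ results and then ``charging'' failed rounds. With the fallback, every new center follows $(1-\delta)\pi^{\L}(\cdot\mid C)+\delta\nu$, which is an \emph{additive} perturbation. A point with $\cost(x,C)=0$ gets probability $\delta\nu(x)>0$, an unbounded multiplicative distortion, so the multiplicative-noise literature does not apply. The mixture must instead be pushed through the potential argument itself. The paper does this with two-sided bounds on the conditional law given $y\in\Y$ (Lemma~\ref{lem:bounds}). The upper bound gives the covered-cluster estimate $\rho^{-1}\bigl(8\opt_1(\X_j)+\frac{\delta}{1-\delta}\frac{|\X_j|}{n}\cost(\X,C)\bigr)$. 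The lower bound, combined with Cauchy--Schwarz, controls the increment of $\phi_t=\frac{W_t}{|U_t|}\cost^{\L}_t(\U_t)$ when an uncovered cluster is hit. The term $\opt_1(\X)$ enters only through $\E[\cost(\X,C_t)]\le\E[\cost(\X,c_1)]=2\opt_1(\X)$, from the uniform first center, not through any property of the fallback point. Your stated justification is also false: a point drawn from the $\|x-\mu\|^2$-weighted half of $q$ has expected $1$-means cost $\opt_1(\X)+n\sum_x\|x-\mu\|^4/\opt_1(\X)$, which can be $\Theta(n)\,\opt_1(\X)$. Finally, summing the per-round $\delta$-contributions introduces a multiplicative factor. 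The paper gets $\delta\cdot O(k+\rho^{-2}\log k)\,\opt_1(\X)$ by using $W_t\le t$; the identity $W_t=|U_t|-(k-t)$ would reduce this to $O(\rho^{-2}\log k)$. Either way, a per-round failure probability of $k^{-\rho m/\beta}$ does not by itself yield the displayed $O(\rho^{-2})\,k^{-\rho m/\beta}\opt_1(\X)$. The paper's closing step passes over this too, but your argument must account for the factor explicitly.
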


The above theorem holds true for \emph{all} datasets without any assumptions and hence the bounds are in terms of the data dependent parameters $\beta$ and $\eta$. We can see that for fixed $\eta,\beta$, the runtime consists of an $O(nD)$ term, which is required just for reading the input $\X$ and an additional term which is \emph{almost} linear in $k$. The approximation guarantee consists of the standard $O(\log k)$ guarantee along with an additive,scale invariant and exponentially decaying term depending on the variance of the dataset. Due to the strong decaying of the additive term, it can be removed as follows :

  \begin{corollary} \label{thm:rejection-sampling} (Rejection sampling for $k$-means++) Let $\X = \{x_1,\dots,x_n\} \subset \R^D$ be a dataset. Then there exists an algorithm which additionally takes as input a parameter $\rho < 1$ and ouputs a set $C$ of $k$ centers such that $$\E[\cost(\X,C)] \in O(\rho^{-2}\log k) \opt_k(\X)$$ in expected time $$O(nD) + O(\rho^{-1}\beta (k \log \eta)^{1+\rho})$$ where $\beta = \beta(\X,C)$ and $\eta = \eta(\X,C)$.  
  
  \end{corollary}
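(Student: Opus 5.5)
The plan is to derive the corollary from Theorem~\ref{thm:tradeoffs} by choosing the parameter $m$ so that the additive term $k^{-\rho m/\beta}\opt_1(\X)$ is dominated by $\ln k\cdot\opt_k(\X)$. By Definition~\ref{def:parameters} we have $\opt_1(\X)=\cost(\X,\mu)=\beta(\X,C)\cost(\X,C)$. The cleaner route is $\opt_1(\X)=\beta_k(\X)\opt_k(\X)$, so it suffices to make $k^{-\rho m/\beta}\beta_k(\X)\le O(\ln k)$. The one subtlety is that $\beta=\beta(\X,C)$ depends on the output. I would therefore run the algorithm of Theorem~\ref{thm:tradeoffs} with a guess for $\beta$ and use doubling. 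Set $m=\lceil \beta\rho^{-1}\log_k(\beta_k)\rceil$, or more robustly $m=\Theta(\rho^{-1}\beta)$ plus a logarithmic correction. Then $k^{-\rho m/\beta}\le 1/\beta_k$, and the additive term becomes at most $\opt_k(\X)$. Substituting gives $\E[\cost]\in O(\rho^{-2})(\ln k+1)\opt_k = O(\rho^{-2}\log k)\opt_k$.

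The key steps run in order. First, rewrite the additive term as $k^{-\rho m/\beta}\beta_k(\X)\opt_k(\X)$. Second, bound $\beta_k(\X)\le \poly(n)$, or better, note that the algorithm's $\beta(\X,C)$ lower bounds $\beta_k$ only in the wrong direction. I would therefore instead take $m$ proportional to $\beta\rho^{-1}\cdot\log_k\beta$ with the output-dependent $\beta$, which handles $\beta(\X,C)$ directly. Since $\opt_1=\beta\cost(\X,C)$, we get $k^{-\rho m/\beta}\opt_1\le \cost(\X,C)/k^{c}$ for $m=(1+c)\beta\rho^{-1}\log_k\beta$. This term can be absorbed into the left-hand side, because $\cost(\X,C)-\cost/k^c\ge \cost/2$. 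Third, plug $m$ into the running time $\widetilde O(m\log k(k\log\eta)^{1+\rho})$. This yields $O(\rho^{-1}\beta(k\log\eta)^{1+\rho})$ up to the logarithmic factors that the corollary suppresses, plus the $O(nD)$ preprocessing. The preprocessing is shared across doubling rounds and is done only once.

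The main obstacle is that $\beta$ is not known in advance and is defined via the output $C$, so the expectation bound and the choice of $m$ are coupled. I would handle this by a doubling schedule on $m$, e.g.\ $m=2^j$. The rejection sampler of Theorem~\ref{thm:tradeoffs} interprets $m$ as a cap on trials per step, so the procedure effectively terminates once $m\gtrsim\rho^{-1}\beta\log_k\beta$. A geometric-sum argument then makes the expected total time a constant multiple of the final round, which gives the expected-time bound. The absorption step requires care because it acts inside an expectation. I would condition on the final round and use that the inequality $k^{-\rho m/\beta}\opt_1\le \cost(\X,C)/2$ holds pointwise for that round's output. This converts the guarantee of Theorem~\ref{thm:tradeoffs} into the multiplicative $O(\rho^{-2}\log k)$ bound.
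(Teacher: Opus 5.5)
There is a genuine gap, and it sits where you located the main obstacle.

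\textbf{No stopping rule for the doubling.} For every finite $m$ the capped sampler returns $k$ centers; it silently falls back to a uniform point when all $m\ln k$ proposals are rejected. So nothing observable tells you that $m\gtrsim\rho^{-1}\beta\log_k\beta$ has been reached. Testing it directly needs $\cost(\X,C)$, i.e.\ a nearest-center pass over all $n$ points in every round, which already breaks the claimed time bound.

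\textbf{Restarting on fallback biases the output.} The per-proposal acceptance probability at step $t$ is $\cost^\L(\X,C_{t-1})/\bigl(2\rho^{-1}(\|\X\|_F^2+n\|c_1\|^2)\bigr)$. Conditioning on ``no fallback'' therefore tilts the output toward high-cost center sequences, so the resulting distribution is not the one Theorem~\ref{thm:tradeoffs} analyzes.

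\textbf{The absorption step is circular.} Theorem~\ref{thm:tradeoffs} bounds $\E[\cost(\X,C)]$ for an $m$ fixed in advance. You can neither choose $m$ as a function of the output's $\beta(\X,C)$ nor apply the bound pointwise to one round's output. The arithmetic is also off: with $m=(1+c)\beta\rho^{-1}\log_k\beta$ you get $k^{-\rho m/\beta}\opt_1(\X)=\beta^{-c}\cost(\X,C)$, not $\cost(\X,C)/k^{c}$. For fixed $c$, this fraction is not small when $\beta$ is close to $1$.

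\textbf{Extra factors in the time bound.} Routing the running time through Theorem~\ref{thm:tradeoffs} leaves extra logarithmic factors, at least $\log\beta$. The corollary is stated with $O$ rather than $\widetilde O$ and does not have them.

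\textbf{The missing idea.} You do not need to tune $m$ at all: take $m=\infty$, i.e.\ keep drawing from $\kappa(\cdot\mid C)$ until a proposal is accepted. The fallback then never fires, so the additive term $k^{-\rho m/\beta}\opt_1(\X)$ of Theorem~\ref{thm:tradeoffs} vanishes. Equivalently, set $\delta=0$ in Theorem~\ref{thm:approx guarantee}; its proof goes through unchanged. This gives $\E[\cost(\X,C)]\in O(\rho^{-2}\log k)\opt_k(\X)$ directly.

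The unknown $\beta$ then enters only the running time:
\begin{itemize}
\item By Lemma~\ref{lem:oversample} and Lemma~\ref{lem:expected samples}, the number of proposals at each step is geometric with mean at most $\tau=2\rho^{-1}(\|\X\|_F^2+n\|c_1\|^2)/\cost(\X,C_{t-1})$.
\item Lemma~\ref{lem:bounding E[tau]} gives $\E[\tau]\le 4\rho^{-1}\beta$.
\item Multiplying by the $k$ steps and the $O(D\log\eta\,(k\log\eta)^{\rho})$ cost per query gives the stated expected time, on top of the one-time $O(nD)$ preprocessing.
\end{itemize}
This is the paper's argument. Unbounded rejection sampling adapts to the unknown $\beta$ automatically, which is what your doubling schedule was trying to emulate.
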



 Finally, we can combine the scaling laws from Theorem~\ref{thm:scaling-laws} along with Theorem~\ref{thm:tradeoffs} to obtain the following result for $k$-means clustering under the manifold hypothesis: 
  \begin{corollary} \label{thm:assumption} (Rejection sampling under manifold hypothesis) Let $\X = \{x_1,\dots,x_n\} \subset \R^D$ be a dataset generated according to Assumption~\ref{ass}. Then there exists an algorithm which additionally takes as input a parameter $\rho < 1$ and ouputs a set $C$ of $k$ centers such that $$\E[\cost(\X,C)] \in O(\rho^{-2}\log k) \opt_k(\X)$$ in expected time $$O(nD) + \widetilde{O}(\eps^{1+\rho} \rho^{-1}k^{1+ \gamma} )$$ where $\gamma = \eps + \rho$, and $\widetilde{O}(\cdot)$ hides $\log k, \log n$ factors. 
  
  \end{corollary}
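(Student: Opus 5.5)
The plan is to derive Corollary~\ref{thm:assumption} by combining Theorem~\ref{thm:tradeoffs} (or Corollary~\ref{thm:rejection-sampling}) with the finite-sample scaling laws of Theorem~\ref{thm:scaling-laws}. Under Assumption~\ref{ass}, the latter gives, with probability at least $1-1/\operatorname{poly}(n)$,
\[
\beta_k(\X)\le (1+o(1))\,O_f(1)\,k^{\eps}
\qquad\text{and}\qquad
\eta(\X)\le O_f(1)\,n^{3\eps/2}.
\]
Since $\beta(\X,C)\le\beta_k(\X)$ and $\eta(\X,C)\le\eta(\X)$ for every $k$-center set $C$ (the remark after Definition~\ref{def:parameters}), these bounds hold uniformly over whatever output $C$ the algorithm produces. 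This uniformity matters, because $\beta$ and $\eta$ in Theorem~\ref{thm:tradeoffs} depend on the random output.

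The main route plugs these bounds into Theorem~\ref{thm:tradeoffs} with a concrete choice of $m$. To absorb the additive term $k^{-\rho m/\beta}\opt_1(\X)$ into $O(\log k)\opt_k(\X)$, we need
\[
k^{-\rho m/\beta}\,\opt_1(\X)\le \opt_k(\X),
\qquad\text{i.e.}\qquad
k^{\rho m/\beta}\ge \beta_k(\X).
\]
Because $\beta\le\beta_k(\X)\le O_f(1)k^{\eps}$, it suffices to have $(\rho m/\beta)\ln k\ge \eps\ln k+O_f(1)$. We therefore take
\[
m=\left\lceil \frac{\beta_{\max}(\eps+1)}{\rho}\right\rceil,
\qquad \beta_{\max}=O_f(1)\,k^{\eps}.
\]
Using $\beta_{\max}$ rather than the unknown $\beta$ is exactly what makes the choice independent of the output. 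The approximation factor then becomes $O(\rho^{-2})(\ln k+1)=O(\rho^{-2}\log k)$.

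The running time is $O(nD)+\widetilde O(m\log k\,(k\log\eta)^{1+\rho})$. Substituting $\log\eta\le \tfrac{3\eps}{2}\log n+O_f(1)=O(\eps\log n)$ gives $(k\log\eta)^{1+\rho}=\widetilde O(\eps^{1+\rho}k^{1+\rho})$, where the factor $(\log n)^{1+\rho}$ is hidden in $\widetilde O$ since $\rho<1$. Multiplying by $m=O(\rho^{-1}k^{\eps})$ (treating $O_f(1)$ and the $\eps+1\le 3$ factor as constants) and absorbing $\log k$ yields
\[
\widetilde O\!\left(\eps^{1+\rho}\rho^{-1}k^{1+\eps+\rho}\right)=\widetilde O\!\left(\eps^{1+\rho}\rho^{-1}k^{1+\gamma}\right).
\]
Alternatively, Corollary~\ref{thm:rejection-sampling} gives the same expression directly, with $\beta\le O_f(1)k^{\eps}$ and $\log\eta=O(\eps\log n)$.

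The main obstacle is that the scaling-law bounds hold only with high probability, whereas the statement is about expectation and expected time, and the algorithm may not know $O_f(1)$ or $\eps$ when choosing $m$. I would handle this in three steps:
\begin{enumerate}
\item Condition on the good event from Theorem~\ref{thm:scaling-laws}. The stated guarantees are over the algorithm's internal randomness for a fixed $\X$ satisfying the bounds, and the failure event has probability $1/\operatorname{poly}(n)$.
\item Choose $m$ adaptively, or use Corollary~\ref{thm:rejection-sampling}, whose rejection sampling needs no knowledge of $\beta$ since its expected time scales with the true $\beta(\X,C)$. This is why the statement asserts \emph{expected} time.
\item Note that the $\sqrt{kD\log n/n}$ error term in $\beta_k$ is $O(1)$ in the regime $n\gtrsim kD\log n$, which is implicit in the asymptotic statement.
\end{enumerate}
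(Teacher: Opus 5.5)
Your argument is correct, and your fallback is exactly the paper's one-line proof: substitute $\beta\le O_f(1)k^{\eps}$ and $\log\eta=O(\eps\log n)$ from Theorem~\ref{thm:scaling-laws} into the expected-time bound of Corollary~\ref{thm:rejection-sampling}, i.e.\ the $m=\infty$ regime where $\delta=0$.

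Your primary route is different. You run Theorem~\ref{thm:tradeoffs} with a finite $m=\Theta(\rho^{-1}k^{\eps})$, tuned so that $k^{-\rho m/\beta}\opt_1(\X)=O(\opt_k(\X))$.
\begin{itemize}
\item What it buys: a worst-case, rather than expected, bound on the number of proposals.
\item What it costs: the algorithm must know $\eps$ and the $f$-dependent constant to set $m$. Also, the approximation guarantee itself now depends on the high-probability event of Theorem~\ref{thm:scaling-laws}.
\end{itemize}
In the paper's route, the $O(\rho^{-2}\log k)$ guarantee holds for every input, and only the running time uses Assumption~\ref{ass}. The algorithm also needs only $\rho$, as the statement requires, so you are right to fall back on it.

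Your extra bookkeeping fills in details the paper leaves implicit:
\begin{itemize}
\item the uniform bounds $\beta(\X,C)\le\beta_k(\X)$ and $\eta(\X,C)\le\eta(\X)$ over the random output;
\item the $O_f(1)$ factor in front of $k^{\eps}$, which the statement of Theorem~\ref{thm:scaling-laws} drops;
\item conditioning on the $1-1/\operatorname{poly}(n)$ event.
\end{itemize}

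One caveat if you keep the finite-$m$ route. The failure probability is actually governed by the oversampling factor of Lemma~\ref{lem:oversample},
\[
\tau=2\rho^{-1}\bigl(\|\X\|_F^2+n\|c_1\|^2\bigr)/\cost(\X,C_{t-1}).
\]
Because of the $n\|c_1\|^2$ term, this exceeds $2\rho^{-1}\beta(\X,C)$ by the factor $1+n\|c_1\|^2/\|\X\|_F^2$, so it is not pointwise bounded by $2\rho^{-1}\beta(\X,C)$. Under compact support and positive variance, $n\|c_1\|^2=O_f(\|\X\|_F^2)$ with high probability. So this only changes your $\beta_{\max}$ by a constant.
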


\paragraph{Empirical results.}We perform extensive empirical analysis to test whether our theoretical assumptions are reasonable for practical datasets. Our study captures cross-domain datasets covering image data, image and text embeddings, as well as tabular data captured from sensors and scientific experiments. Our assumptions can be easily checked by computing the cost for various values of $k$, then evaluating the geometric properties of the resulting clusters. We find that the scaling laws  strongly manifest in these datasets. We also analyze the effect of intrinsic dimension on these laws using independent ID estimates. Finally, we observe that our seeding method gives about an order of magnitude speedup for $k$-means++ seeding on high dimensional datasets when compared to prior work.
\subsection{Comparison with prior work}

\paragraph{Speeding up k-means++.} Let us discuss the methods proposed to accelerate $k$-means++ so as to position our results in the literature. Directly implementing $k$-means++ provides an $O(\log k)$ approximation guarantee in $O(nkD)$ time, which quickly becomes prohibitively slow even for moderate sized datasets. The problem is that $k$-means++ is inherently a sequential procedure where the new center chosen depends on the previously chosen centers. $k$-means++ has also been extended to distributed settings \citep{bahmani2012scalable,bachem2017distributed,bachem2018lightweight}, but our results are for clustering on a single node.
\paragraph{MCMC methods.} In the line of work \cite{bachem2016approximate,bachem2016good}, the authors propose to approximate the $k$-means++ distribution using an appropriately designed markov chain, whose limiting distribution is equal to the $D^2$ distribution. Both of these algorithms have runtime of $\Omega(k^2)$, thus making them slow for use-cases when a dataset is required to be partitioned into thousands of clusters \footnote{This was empirically demonstrated in Section 6 of \cite{cohenaddad2020fast}. }. In comparison, $\qkmeans$ does not suffer from this $\Omega(k^2)$ dependence, but has $O(k^{1+\gamma})$ dependence where the exponent $\gamma := \eps + \rho$. As we shall see, $\eps \sim 0.1-0.2$ in real large scale datasets.  Moreover, the MCMC methods often have an additive term which decays as $O(1/m)$ where $m$ is the markov chain length, while we obtain an exponential $O(k^{-\rho m /\beta})$ decay in our results.
\paragraph{ANNS techniques. }  In \cite{cohenaddad2020fast}, the authors provide a speedup for $k$-means++ using a variety of algorithmic techniques which include hierarchical tree embeddings, approximate nearest neighbor search and rejection sampling. However, constructing this embedding brings up dependence on the aspect ratio of the dataset $\Delta$, which in general may be unbounded in terms of the input size and requires special pre-processing. They then use rejection sampling to convert samples from the $D^2$ distribution on the tree metric to a sample in the euclidean metric, which introduces an additional factor of $O(D^2)$.  In comparison, our algorithm does not require such metric embeddings and instead depends on simply sampling from an $\ell_2$-norm distribution which can be computed in a lightweight pre-processing step. Moreover, we have better dependence on the ambient dimension $O(D)$ vs $O(D^3)$ and prove improved approximation guarantee $O(\rho^{-2})$ vs $O(\rho^{-3})$. We also mention several works which try to speed up $k$-means++ or Lloyd's iterations by maintaining some nearest neighbor data structure \cite{pelleg1999accelerating,kanungo2002efficient,elkan2003using,hamerly2010making,wang2012fast,ding2015yinyang,bottesch2016speeding,newling2016fast,curtin2017dualtree}, but do not provide any theoretical guarantee on the solution quality or runtime for their proposed methods.

\paragraph{Dimensionality reduction. } In \cite{charikar2023simple}, the authors propose to project the dataset onto a randomly sampled unit vector and then perform $k$-means++ in $1$D in $O(n \log n)$ time. They show that when this 1D solution is lifted back to $D$ dimensions, it provides an $O(k^4 \log k)$ approximation guarantee. Although this is extremely fast, they report clustering costs $10$-$100\times$ larger than those provided by $k$-means++. They also propose a coreset-based method which theoretically runs in $O(nD+n\log n) + O(k^5D)$ time to provide $O(\log k)$ approximation. In comparison, our method has a much lower dependence on $k$. There is a line of research (see \cite{makarychev2019jl} and the references therein) on dimensionality reduction for $k$-means which show that the ambient dimension can be reduced to $O(\log k)$ while preserving the $k$-means cost. 

\paragraph{Coresets. } Coresets are a way to compress the dataset by computing a weighted subset of the data. We refer the reader to \cite{bachem2017practical,feldman2020introduction} for becoming familiar with coresets, which have now become ubiquitous tools for $k$-means clustering. This long line of research culminated in \cite{draganov2024settle}, where a coreset construction in near-linear time $\tilde{O}(nD \log \log \Delta)$ of size $O(k/\eps^4)$ is proposed which can then be used in conjunction with standard $k$-means++ to obtain an $O(nD \log \log \Delta + k^2D)$ algorithm having $O(\log k)$ guarantee. We conclude by providing a summary of this discussion in Table~\ref{tab:compare-results}. We mention that our results provide new theoretical tradeoffs as well as insights into the structure of input data for $k$-means clustering, along with providing a method to exploit them.

\begin{table*}[t]
\centering
\footnotesize
\setlength{\tabcolsep}{6pt}
\renewcommand{\arraystretch}{1.2}

\caption{
Comparison of time complexity and approximation guarantees for accelerating
$k$-means++ seeding.
For our method, we empirically show $\beta \sim k^{\eps}$ and
$\log \eta \sim \eps \log k$, where $\eps$ is the \emph{quantization exponent}. In the table, $\Delta$ is the aspect ratio of the dataset and $\beta,\eta$ are the parameters from Theorem~\ref{thm:scaling-laws}. Note that the relation $\eta \leq \Delta$ always holds true. Moreover, under the manifold hypothesis the overall exponent $\gamma$ is defined as $\gamma = \eps + \rho$. 
}
\label{tab:compare-results}

\begin{tabular}{@{}p{4.2cm}p{5.4cm}p{5.6cm}@{}}
\toprule
\textsc{Approach} &
\textsc{Time Complexity} &
\textsc{Approximation Guarantee} \\
\midrule

\multicolumn{3}{l}{\emph{Prior work}} \\
\midrule

\citep{bachem2016good} &
$O(nD) + O\!\left(\eps^{-1} k^2 D \log k\,\eps^{-1}\right)$ &
$O(\log k)\,\opt_k + \eps\,\opt_1$ \\

\citep{bachem2018lightweight} &
$O(nD) + O\!\left(\eps^{-2} k^2 D^2 \log k\right)$ &
$O(\log k)\,\opt_k + \eps\,\opt_1$ \\

\citep{cohenaddad2020fast} &
\makecell[l]{
$O\!\left(n(D+\log n)\log (\Delta\, D)\right)$ \\
$+\,O\!\left(\rho^{-1} k D^3 \log \Delta (n\log \Delta)^{O(\rho)}\right)$
} &
$O\!\left(\rho^{-3}\log k\right)\opt_k$ \\

\citep{charikar2023simple} &
$O(nD) + O(n\log n)$ &
$O\!\left(k^4\log k\right)\opt_k$ \\

\citep{charikar2023simple} &
\makecell[l]{
$O(nD) + O(n\log n)$ \\
$+\,O(\eps^{-2}k^5 D \log^2 k)$
} &
$O(\log k)\,\opt_k$ \\

\citep{draganov2024settle} &
$O(nD\log\log \eta) + O(\eps^{-4}k^2 D)$ &
$O(\log k)\,\opt_k$ \\

\citep{shah2025quantum} &
$O(nD) + O(\Delta^2k^2 D)$ &
$O(\log k)\,\opt_k$ \\

\midrule

\multicolumn{3}{l}{\emph{Our results}} \\
\midrule

Theorem~\ref{thm:tradeoffs} &
$O(nD) + O\!\left(mD\log k (k\log \eta)^{1+\rho}\right)$ &
$O(\rho^{-2})\!\left(\log k\,\opt_k
+ k^{-\Omega(m/\beta)}\opt_1\right)$ \\

Corollary~\ref{thm:rejection-sampling} &
$O(nD) + O\!\left( \rho^{-1}\beta D (k\log \eta)^{1+\rho}\right)$ &
$O\!\left(\rho^{-2}\log k\right)\opt_k$ \\

Corollary~\ref{thm:assumption} &
$O(nD) + O\!\left(\eps^{1+\rho}\rho^{-1}(k\log n)^{1+\gamma} D\right)$ &
$O\!\left(\rho^{-2}\log k\right)\opt_k$ \\

\bottomrule
\end{tabular}
\end{table*}



\section{Seeding Algorithm}

\begin{algorithm}[t]
\caption{\textsc{PreProcess}$(\X)$}
\label{proc:preproc}
\begin{enumerate}
    \item For $\eps_{\operatorname{JL}} \in (0,1/4)$, apply the JL-like dimensionality reduction from \cite{makarychev2019jl} in $O(nD)$ time to obtain $D \in O(\log(k/\eps_{\operatorname{JL}})/\eps_{\operatorname{JL}}^2)$
    \item Compute the mean $\mu(\X) := \frac{1}{n}\sum_ix_i$ of the dataset 
    \item Center each $x_i \gets x_i -\mu(\X)$, $1\leq i\leq n$
    \item Compute $\|\X\|_{\operatorname{F}}^2 := \sum_i \|x_i\|^2$
    \item Compute $\kappa(x_i) \gets \|x_i\|^2 / \|\X\|_{\operatorname{F}}^2$, $1 \leq i \leq n$
\end{enumerate}
\end{algorithm}

\begin{algorithm}[t]
\caption{\textsc{Sample}$(\L, C, m)$}
\label{proc:sample}
\begin{enumerate}
    \item Initialize ${iter} \gets 0$ and sample $s \sim \textsc{Uniform}(\X)$
    \item \textbf{while} ${iter} < m \ln k$ \textbf{do}
    \begin{enumerate}
        \item ${iter} \gets {iter} + 1$
        \item Sample $x \sim \kappa(\cdot \mid C)$, where
        \[
            \kappa(x \mid C)
            :=
            \frac{\|x\|^2 + \|c_1\|^2}{\|\X\|_{\operatorname{F}}^2 + n\|c_1\|^2}
        \]
        \item Sample $R \sim \textsc{Uniform}(0,1)$
        \item Compute
        \[
            r \gets \frac{\cost(x, \Query(\L, x))}{2\rho^{-1}\bigl(\|x\|^2 + \|c_1\|^2\bigr)}
        \]
        \item \textbf{if} $R \le r$ \textbf{then}
        \begin{enumerate}
            \item $s \gets x$
            \item \textbf{break}
        \end{enumerate}
    \end{enumerate}
    \item \textbf{Return} $s$
\end{enumerate}
\end{algorithm}

\begin{algorithm}[t]
\caption{\textsc{QKMeans}$(\X, k, m)$}
\label{alg:qkmeans}
\begin{enumerate}
    \item \textsc{PreProcess}$(\X)$
    
    \item Initialize the data structure $\textsc{Init}(\L, \rho)$
    
    \item Sample $c_1 \sim \textsc{Uniform}(\X)$
    
    \item Set $C \gets \{c_1\}$, insert $c_1$ into $\L$, and set $C_1 \gets \{c_1\}$
    
    \item \textbf{for} $t = 2$ \textbf{to} $k$ \textbf{do}
    \begin{enumerate}
        \item Sample $c_t \gets \textsc{Sample}(\L, C, m)$
        \item Insert $c_t$ into $\L$
        \item Update $C_t \gets C_{t-1} \cup \{c_t\}$
    \end{enumerate}
    
    \item \textbf{Return} $C_k$
\end{enumerate}
\end{algorithm}

At a high level, our algorithm works by \emph{simulating} the $D^2$ sampling step of
$k$-means++ using rejection sampling. The idea is simple: instead of sampling directly
from the $D^2$ distribution—which is expensive to update after every new center—we
sample from a much easier distribution and correct the bias using a rejection step.
The efficiency of this correction is controlled by a single quantity: the max
R\'enyi divergence between the two distributions.

Suppose $\mu$ is a target distribution over a finite domain $\X$, and $\nu$ is a proposal
distribution from which sampling is easy. Rejection sampling allows us to draw exact
samples from $\mu$ using only samples from $\nu$, provided that $\mu$ is not too
``peaked'' relative to $\nu$.

\begin{definition} \label{def:max-renyi-div}
    Let $\mu$ and $\nu$ be probability distributions supported on $\X$.
    The max R\'enyi divergence of $\mu$ from $\nu$ is defined as
    \[
        D_\infty(\mu \| \nu)
        \defeq \sup_{x \in \X} \log \frac{\mu(x)}{\nu(x)}.
    \]
\end{definition}

Intuitively, $\exp(D_\infty(\mu\|\nu))$ measures the worst-case mismatch between $\mu$
and $\nu$, and directly determines how many proposal samples we need before one is
accepted.

\begin{algorithm}[t]
\caption{\textsc{RejectSample}$(\mu, \nu, M)$}
\label{alg:rejection}
\begin{enumerate}
    \item \textbf{for} $i = 1$ \textbf{to} $M \ln k$ \textbf{do}
    \begin{enumerate}
        \item Sample $x \sim \nu$
        \item Sample $R \sim \textsc{Uniform}(0,1)$
        \item \textbf{if} $R \le \frac{\mu(x)}{M \nu(x)}$ \textbf{then}
        \begin{enumerate}
            \item \textbf{return} $x$
        \end{enumerate}
    \end{enumerate}
    \item \textbf{Fail}
\end{enumerate}
\end{algorithm}

\begin{lemma}[Rejection sampling guarantee]
\label{lem:rejection}
Let $\mu$ and $\nu$ be distributions over a finite set $\X$ with
$D_\infty(\mu\|\nu) < \infty$, and set $M = \exp(D_\infty(\mu\|\nu))$.
Then:
\begin{enumerate}
    \item Conditioned on acceptance, Algorithm~\ref{alg:rejection} outputs an exact
    sample from $\mu$.
    \item Each iteration succeeds with probability at least $1/M$.
    \item Running the algorithm for $M \ln k$ iterations succeeds with probability
    at least $1 - 1/k$.
\end{enumerate}
\end{lemma}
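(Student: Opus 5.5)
The plan is the textbook rejection-sampling argument, which we carry out one iteration at a time. First we record that the acceptance ratio is a valid probability. Since $M=\exp(D_\infty(\mu\|\nu))=\sup_x \mu(x)/\nu(x)$, every $x$ has $\mu(x)\le M\nu(x)$. Hence $\frac{\mu(x)}{M\nu(x)}\in[0,1]$ whenever $\nu(x)>0$. Points with $\nu(x)=0$ are never proposed. Finiteness of $D_\infty$ forces $\mu(x)=0$ at such points, so no mass of $\mu$ is lost there.

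For a single iteration, $x$ is drawn from $\nu$ and $R$ independently from $\textsc{Uniform}(0,1)$. The joint probability that the proposal equals a fixed $x$ and is accepted is therefore
\[
\nu(x)\cdot\frac{\mu(x)}{M\nu(x)}=\frac{\mu(x)}{M}.
\]
Summing over the finite set $\X$ gives acceptance probability exactly $\sum_x \mu(x)/M = 1/M$, which proves item 2 (with equality). Dividing the joint probability by this acceptance probability gives $\Pr[x\mid\text{accept}]=\mu(x)$, so a single accepted iteration outputs an exact sample from $\mu$.

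To get item 1 for the whole algorithm, we use that the iterations are i.i.d. and the algorithm returns at the first accepted iteration. Condition on the index $i$ of that first acceptance. The output is then the proposal of iteration $i$ conditioned on its acceptance, and this is independent of the earlier rejections. So conditioned on overall success the output has law $\mu$, by the law of total probability over $i$.

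For item 3, independence across iterations shows the algorithm fails only if all $M\ln k$ iterations reject. This happens with probability
\[
(1-1/M)^{M\ln k}\le e^{-\ln k}=1/k,
\]
using $1-t\le e^{-t}$. If $M\ln k$ is not an integer we take the ceiling, which only lowers the failure probability.

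No step is a serious obstacle. The only points that need care are the support issue when $\nu(x)=0$ and the conditioning argument that makes ``exact sample conditioned on acceptance'' hold for the multi-iteration procedure, not just a single trial.
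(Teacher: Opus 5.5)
Your proposal is correct and follows the paper's argument: in one iteration, $x$ is proposed and accepted with probability $\nu(x)\cdot\mu(x)/(M\nu(x))=\mu(x)/M$, which gives an exact $\mu$-sample conditioned on acceptance and a total acceptance probability of $1/M$, and $(1-1/M)^{M\ln k}\le 1/k$ gives item 3. Your extra care about points with $\nu(x)=0$, about conditioning on the first accepted iteration, and about $M\ln k$ not being an integer fills in details the paper leaves implicit.
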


\begin{proof}
In one iteration, the probability of accepting a point $x$ is
$\nu(x)\cdot \mu(x)/(M\nu(x)) = \mu(x)/M$.
Thus, conditioned on acceptance, the output distribution is exactly $\mu$.
The total acceptance probability is $1/M$, and repeating for $M \ln k$ iterations
fails with probability at most $(1-1/M)^{M\ln k} \le 1/k$.
\end{proof}

For performing $k$-means++ seeding, we are interested in sampling from the $D^2$ distribution given by $\pi(x|C) = \cost(x,C) / \cost(\X,C)$ for each $x \in \X$, which is our target. What remains to be done is to select a suitable proposal distribution, which we now define as follows : 

\begin{definition} \label{def:proposal}
    Let $c_1$ be the first center which is chosen uniformly at random from $\X$ by Algorithm~\ref{alg:qkmeans}. The distribution $\kappa(\cdot|C)$ is defined as follows : 
    $$\kappa(x|C) = \frac{\|x\|^2 + \|c_1\|^2}{\|\X\|_F^2 + n\|c_1\|^2}$$
\end{definition}

\begin{remark}
    The motivation behind selecting this particular distribution comes from a line of work in the theoretical computer science community called \emph{``dequantized quantum machine learning"} which often uses similar $\ell^2$ norm distributions over the data. More recently, this distribution was introduced in the context of clustering algorithms by \cite{shah2025quantum}. 
\end{remark}

Using the Cauchy-Schwarz inequality, we can provide an upper bound on the max-R\'enyi divergence as follows : 

\begin{lemma}
    Let $\pi(\cdot|C)$ be the target $D^2$ distribution and $\kappa(\cdot|C)$ be the proposal distribution defined in Definition~\ref{def:proposal}. Then the following holds : 
    $$\E[D_\infty\big(\pi(\cdot|C) \| \kappa(\cdot|C)\big)] \leq \log \left(4 \cdot \frac{\opt_1(\X)}{\opt_k(\X)}\right) $$
\end{lemma}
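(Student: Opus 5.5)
The plan is to bound the density ratio $\pi(x|C)/\kappa(x|C)$ pointwise and then take expectation only over the random choice of $c_1$. Throughout, the data is centered by \textsc{PreProcess}, so $\|\X\|_F^2 = \opt_1(\X)$.

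\textbf{Pointwise bound on the cost.} Since $c_1 \in C$, we have $\cost(x,C) \le \|x - c_1\|^2$. By the Cauchy--Schwarz (parallelogram) inequality $\|a-b\|^2 \le 2(\|a\|^2+\|b\|^2)$, this gives
\[
\cost(x,C) \le 2\bigl(\|x\|^2 + \|c_1\|^2\bigr).
\]
Therefore
\[
\frac{\pi(x|C)}{\kappa(x|C)} = \frac{\cost(x,C)}{\cost(\X,C)} \cdot \frac{\|\X\|_F^2 + n\|c_1\|^2}{\|x\|^2+\|c_1\|^2} \le \frac{2\bigl(\|\X\|_F^2 + n\|c_1\|^2\bigr)}{\cost(\X,C)} .
\]
This is uniform in $x$, so it also bounds the supremum defining $D_\infty$.

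\textbf{Lower bound on the denominator.} Because $|C| \le k$, we have $\cost(\X,C) \ge \opt_k(\X)$. This yields
\[
D_\infty\bigl(\pi(\cdot|C)\,\|\,\kappa(\cdot|C)\bigr) \le \log \frac{2\bigl(\|\X\|_F^2 + n\|c_1\|^2\bigr)}{\opt_k(\X)} .
\]

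\textbf{Expectation over $c_1$.} The only remaining randomness is $c_1 \sim \textsc{Uniform}(\X)$. Since $\log$ is concave, Jensen's inequality lets me move the expectation inside the logarithm. Because $\X$ is centered, $\E[n\|c_1\|^2] = \sum_i \|x_i\|^2 = \|\X\|_F^2$. The numerator therefore has expectation $2\cdot 2\|\X\|_F^2 = 4\opt_1(\X)$, which gives
\[
\E\bigl[D_\infty\bigr] \le \log\frac{4\,\opt_1(\X)}{\opt_k(\X)} .
\]

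\textbf{Main subtlety.} The expectation must be taken only over $c_1$, which requires a bound that is deterministic in the other centers. Using $\opt_k(\X)$ as a worst-case lower bound on $\cost(\X,C)$ gives exactly that, so the later centers never enter. A second point to check is that the identity $\|\X\|_F^2 = \opt_1(\X)$ holds only because of the centering step in \textsc{PreProcess}. The JL step changes costs only by constant factors, and I would either absorb these or state the lemma for the projected data.
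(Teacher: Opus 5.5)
Your proof is correct and follows the paper's own route. The paper bounds the ratio by the oversampling factor $\tau = 2\,\frac{\|\X\|_F^2 + n\|c_1\|^2}{\cost(\X,C)}$ using the same Cauchy--Schwarz step $\cost(x,C)\le\|x-c_1\|^2\le 2(\|x\|^2+\|c_1\|^2)$, lower-bounds $\cost(\X,C)$ by $\opt_k(\X)$, and uses centering to get $\E[n\|c_1\|^2]=\|\X\|_F^2=\opt_1(\X)$, whence $\E[\tau]\le 4\opt_1(\X)/\opt_k(\X)$; your explicit Jensen step from $\E[\tau]$ to $\E[\log\tau]$ is exactly what is needed to conclude the bound on $\E[D_\infty]$, and the paper leaves it implicit.
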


Notice that the expression inside the $\log(\cdot)$ is exactly the $\beta_k(\X)$ parameter which we introduced earlier. Under Assumption~\ref{ass} we can use Theorem~\ref{thm:scaling-laws} to obtain the following: 

\begin{lemma}
    Suppose the dataset $\X$ is generated according to Assumption~\ref{ass}, then the following holds : 
      $$\E[D_\infty\big(\pi(\cdot|C) \| \kappa(\cdot|C)\big)] \in O\left(\frac{\log k}{d}\right)$$
\end{lemma}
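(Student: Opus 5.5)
The plan is to combine the previous lemma with the finite-sample scaling law of Theorem~\ref{thm:scaling-laws}. By the previous lemma,
\[
\E\bigl[D_\infty(\pi(\cdot|C)\,\|\,\kappa(\cdot|C))\bigr] \le \log\bigl(4\,\beta_k(\X)\bigr) = \log 4 + \log \beta_k(\X),
\]
so it suffices to bound $\log \beta_k(\X)$ by $O(\eps \log k) = O(\log k / d)$, since $\eps = 2/d$.

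\textbf{Good event.} On the event $\mathcal{E}$ of Theorem~\ref{thm:scaling-laws}, which has probability at least $1 - 1/\operatorname{poly}(n)$, we have
\[
\beta_k(\X) \le \Bigl(1 + O_f\bigl(\sqrt{kD\log n/n}\bigr)\Bigr)\, k^{\eps}.
\]
In the regime $n \gtrsim kD\log n$, which is natural since otherwise the sample-size statement is vacuous, the prefactor is $O_f(1)$. Taking logarithms gives
\[
\log \beta_k(\X) \le \eps \log k + O_f(1),
\]
and hence
\[
\E\bigl[D_\infty \mid \mathcal{E}\bigr] \le \tfrac{2}{d}\log k + O_f(1).
\]
The additive constant is absorbed into $O(\log k/d)$ for $k$ beyond a constant depending on $f$ and $d$. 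I would state this absorption explicitly as the convention behind the $O(\cdot)$, since $d$ is treated as fixed.

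\textbf{Failure event.} Here I must control the contribution of $\mathcal{E}^c$. The expectation in the lemma is over the algorithm's randomness, namely the choice of $c_1$. The sampling of $\X$ enters through $\beta_k(\X)$, so to get an unconditional statement I need a crude worst-case bound for $\log \beta_k(\X)$ that holds deterministically. One route is
\[
\beta_k(\X) = \frac{\opt_1(\X)}{\opt_k(\X)} \le \frac{n \diam(\X)^2}{\min_{i\neq j}\|x_i-x_j\|^2},
\]
because $\opt_k$ is at least a constant times the squared minimum interpoint distance whenever $n > k$ points are distinct. This gives
\[
\log\beta_k(\X) \le \log n + 2\log \eta(\X).
\]
Using the $\eta(\X) \in O_f(1)\, n^{3\eps/2}$ bound would itself require the good event, so instead I would use the fact that $\M$ is compact and $f$ is a density. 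Then pairwise distances are almost surely positive, and the minimum distance has polynomially small lower tails, so $\E[\log \eta(\X)] = O(\log n)$. Multiplying this $O(\log n)$ by the $1/\operatorname{poly}(n)$ failure probability contributes $o(1)$.

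\textbf{Main obstacle.} The hardest step is exactly this failure-event bookkeeping, in particular making $\E[\log(1/\min_{i\ne j}\|x_i-x_j\|)]$ finite and $O(\log n)$. I would try to do this via a union bound over pairs, using that $f$ is bounded on the compact manifold so that $\Pr(\|x_i - x_j\| < r) = O_f(r^d)$. Alternatively, if the intended statement is a high-probability bound over $\X$, I would simply state the result on the event $\mathcal{E}$ and skip this step.
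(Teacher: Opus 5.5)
Your good-event argument is exactly the paper's proof: the paper simply substitutes $\beta_k(\X) \lesssim k^{\eps}$ from Theorem~\ref{thm:scaling-laws} into the bound $\log(4\beta_k(\X))$ of the preceding lemma. It does this implicitly on the high-probability event over $\X$, silently absorbing the additive constants into $O(\log k/d)$, so your failure-event bookkeeping goes beyond what the paper does. If you keep that extra step, note that $\E[Z\mathbf{1}_{\mathcal{E}^c}]$ is not $\E[Z]\Pr[\mathcal{E}^c]$ in general; instead use Cauchy--Schwarz, $\E[Z\mathbf{1}_{\mathcal{E}^c}] \le \sqrt{\E[Z^2]\Pr[\mathcal{E}^c]}$ with $Z = \log\eta(\X)$, where your polynomial lower-tail bound on $\min_{i\neq j}\|x_i-x_j\|$ gives $\E[Z^2] = O(\log^2 n)$, so the contribution is still $o(1)$.
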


Thus we can perform $k$-means seeding by first sampling a uniformly random point as a centre, and subsequently calling Algorithm~\ref{alg:rejection} $k-1$ times with $\mu, \nu$ being the distributions $\pi(\cdot|C), \kappa(\cdot|C)$ respectively, where $C$ is the current set of centers. In order to employ the rejection sampling idea, the probability of accepting a sampled point $x\in\X$ needs to be proportional to $\cost(x,C)^2$. However, computing this quantity takes $\Theta(kD)$ time, which contributes $\Theta(k^2D)$ time on performing this sampling $k$ times. To mitigate this quadratic dependence on $k$, we apply another optimization technique; we use an approximate nearest neighbors (ANNS) data structure to approximate the distance between $x$ and the closest center. We use one such ANNS data structure developed by \citep{cohenaddad2020fast} which is based on locality sensitive hashing (details in Appendix~\ref{subsec:lsh}). In particular, for a parameter $\rho < 1$, we can query for a point in $C$ which is $1/\sqrt{\rho}$ times the minimum distance of $x$ with $C$ in time $O\left(D \log \eta (k \log \eta)^{\rho}\right)$. 

We take one more step. We are passing a parameter $m\in\mathbb{N}$ in our input that is used to give a threshold on the time complexity of our seeding algorithm. We use this parameter to control the number of iterations in Rejection Sampling; if we do not output a sample after $m\ln k$ iterations of sampling from $\kappa(\cdot|C)$, we terminate and instead output a sample from the uniform distribution on $\X$. With these add-ons, our idea for $D^2$ sampling via rejection sampling is given in Algorithm~\ref{proc:sample}. 

It is important to observe that both these add-ons to our original rejection sampling strategy of Algorithm~\ref{alg:rejection} distort the output distribution from $D^2$. As a consequence, we no longer have the $O(\log k)$ approximation guarantee readily available to us. We instead have an approximation guarantee which is dependent on $m$ and $\rho$ as a part of Theorem~\ref{thm:tradeoffs}. We give a detailed proof of the new approximation guarantee in Appendix~\ref{app:eps-delta-analysis}; to model the new distribution of cluster centers, we analyze an abstract version of $k$-means++ which we call $(\rho,\delta)$-$k$-means++, where instead of sampling from the standard $D^2$ distribution $\pi(\cdot|C)$, we instead have to sample from a perturbed distribution defined as follows . 

\begin{definition}[$(\rho,\delta)$ perturbed $D^2$ distribution]
    Let $\X = \{x_1,\dots,x_n\} \subset \R^d$ be a dataset and $S \subset X$ be a set of centers sampled from $X$. Suppose that the data structure $\L$ with parameter $\rho
    < 1$ is successful and each $c \in S$ is inserted into the data structure $\L$. Then for each $x \in \X$, we define the $(\rho,\delta)$ perturbed $D^2$ distribution as :
    \begin{equation*}
        \pi_{(\rho,\delta)}(x|S) = (1 - \delta)\cdot \pi^\L(x|S) + \delta \cdot \frac{1}{n}
    \end{equation*}
\end{definition}

Our techinical contribution is to analyze the seeding quality of our algorithm under this distribution. This requires a careful potential based analysis to control the effect of the parameters $(\rho,\delta)$ without introducing any large constant factors. We emphasize that this is a non trivial task since it is not known whether perturbing the $D^2$ distribution by a small amount translates to a similar guarantee in the solution cost or not \cite{grunau2023noisykmeanspp}. We leverage the specific structure of the perturbation to show the following. 

\begin{theorem}[Approximation guarantee for $(\rho,\delta)$-$k$-means++] \label{thm:approx guarantee-main}
Let $\X \subset \R^D$ be any dataset which is to be partitioned into $k$ clusters. Let $C$ be the set of centers returned by Algorithm~\ref{alg:rho-delta} for any $\delta \in (0,0.5)$ and $\rho < 1$. The expected cost $\E[\cost(\X,C)]$ equals:
\begin{align*}
O(\rho^{-2} \ln k)\opt_k(\X) + \delta  \cdot O(k + \rho^{-2} \ln k) \opt_1(\X)
\end{align*}
\end{theorem}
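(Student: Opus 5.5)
We follow the potential-function analysis of $k$-means++ due to \cite{arthur2007kmeanspp} (in the form of Dasgupta's refinement), adapted to the perturbed distribution $\pi_{(\rho,\delta)}(\cdot|S) = (1-\delta)\pi^{\L}(\cdot|S) + \delta/n$.

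\textbf{Structure of $\pi^{\L}$.} Since $\L$ returns a center within distance $\rho^{-1/2}$ times the true nearest-center distance, we have
$$\cost(x,S) \le \cost(x,\Query(\L,x)) \le \rho^{-1}\cost(x,S).$$
Hence $\pi^{\L}(x|S) \ge \rho\,\pi(x|S)$ up to normalization: $\pi^{\L}$ is within a factor $\rho^{-1}$ of $D^2$ sampling from above and below. We then prove two per-cluster lemmas, fixing an optimal cluster $A$ with centroid $\mu_A$.

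\emph{Uniform step.} The first center, and any center drawn from the $\delta/n$ component that lands in $A$, is uniform on $A$ conditioned on landing there. It therefore gives expected cost $2\opt_1(A)$ on $A$.

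\emph{Perturbed $D^2$ step.} A center drawn from $\pi^{\L}$ conditioned on landing in $A$ gives expected cost $O(\rho^{-2})\opt_1(A)$. The argument repeats the standard $8\opt_1(A)$ proof. The sampling weight uses the approximate distance $\cost(a,\Query(\L,a))$, while the resulting cost of $x$ is bounded through the true $\cost(x,S)$. This costs one factor $\rho^{-1}$ in the numerator and one in the ratio, so the per-cluster constant becomes $O(\rho^{-2})$.

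\textbf{Induction.} We then run the induction of Arthur--Vassilvitskii, tracking uncovered clusters $u$, remaining picks $t$, and the potential
$$\bigl(\cost(X_c) + O(\rho^{-2})\opt(X_u)\bigr)(1+H_t) + \tfrac{u-t}{u}\cost(X_u).$$
The key step conditions each pick on which mixture component generated it. With probability $1-\delta$ it is a perturbed $D^2$ pick. The probability of hitting an uncovered cluster is then at least $\rho\,\cost(X_u)/\cost(X)$, rather than $\cost(X_u)/\cost(X)$. The $\rho$ loss enters multiplicatively in the wasted-pick term, and the accounting absorbs it as another $\rho^{-1}$ factor on the harmonic term, giving $O(\rho^{-2}\ln k)\opt_k$ overall. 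I expect this bookkeeping to be the main obstacle: the harmonic-sum recursion must only lose $\rho^{-O(1)}$ and not $\rho^{-k}$. I would first try redefining the potential with $\rho\,\cost(X_u)$ as the quantity that shrinks, so that the recursion closes with the same algebra.

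\textbf{Handling the $\delta$ component.} With probability $\delta$ a pick is uniform over $\X$. A uniform pick may cover a cluster "wastefully", but we bound its damage crudely. Such a step never increases the current cost, and the bad event contributes at most the cost we would have without that pick. We charge this through the inequality $\cost(\X,S) \le \cost(\X,\{c_1\})$, whose expectation is at most $2\opt_1(\X)$. Each of the at most $k$ steps has probability $\delta$ of being uniform, and each such step can spoil at most an $O(\opt_1(\X))$ amount. This gives the additive $\delta\cdot O(k)\opt_1$ term. The interaction with the harmonic potential produces the extra $\delta\cdot O(\rho^{-2}\ln k)\opt_1$ term. The condition $\delta < 1/2$ ensures $1-\delta \ge 1/2$, so the $D^2$ part of each step is only weakened by a constant factor. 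Summing gives the bound stated in Theorem~\ref{thm:approx guarantee-main}.
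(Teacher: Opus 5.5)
\paragraph{The induction does not close.} There is a genuine gap at exactly the step you flag as the main obstacle, and the remedy you suggest would not close it. In the Arthur--Vassilvitskii induction, the uncovered branch needs the expected cost removed from $X_u$ to be at least a $\frac{1}{u}$ fraction of the uncovered quantity carried in the potential. This comes from the power-mean (Cauchy--Schwarz) inequality $\sum_A w_A\,\psi(A)\ge \psi(X_u)/u$, which requires the sampling weights to be $w_A=\psi(A)/\psi(X_u)$ for the \emph{same} cost $\psi$ that appears in the potential.

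You keep the true cost $\cost(X_u)$ in the potential, while the sampler weights clusters by $\cost^\L(A)$. The best available bound is then $\sum_A \frac{\cost^\L(A)}{\cost^\L(X_u)}\cost(A)\ge \rho\,\cost(X_u)/u$. The identity $\frac{u-t}{u-1}\bigl(1-\frac{1}{u}\bigr)=\frac{u-t}{u}$ becomes $\frac{u-t}{u-1}\bigl(1-\frac{\rho}{u}\bigr)>\frac{u-t}{u}$. The excess factor $\frac{u-\rho}{u-1}$ compounds over successive uncovered steps, to roughly $k^{1-\rho}$. That is a polynomial loss in $k$, not one extra $\rho^{-1}$. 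So the problem is not the probability $\ge\rho\,\cost(X_u)/\cost(\X)$ of hitting $X_u$; it is \emph{which} uncovered cluster gets hit. Replacing $\cost(X_u)$ by $\rho\,\cost(X_u)$ changes nothing, since that inequality is homogeneous in the uncovered cost.

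\paragraph{The missing idea.} The paper carries the approximate cost in the potential. It uses Dasgupta's $\phi_t=\frac{W_t}{|U_t|}\cost_t^\L(\U_t)$; in your formulation, replace $\cost$ by $\cost^\L$ throughout, in particular using $\frac{u-t}{u}\cost^\L(X_u)$. It combines this with monotonicity of $\L$ under insertions (item 3 of Lemma~\ref{lem:lsh}, which you never use), so $\cost^\L$ of the remaining uncovered clusters can only decrease. Then $\pi^\L$ is exact $D^2$ sampling with respect to $\cost^\L$ and the Cauchy--Schwarz step is exact. The parameter $\rho$ enters only when converting back to true costs: once in the per-cluster bound, and once through $\cost_t^\L(\H_t)\le\rho^{-1}\cost_t(\H_t)$ in the wasted-pick term.

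\paragraph{Smaller issues.} Your factor count is inconsistent. The true-cost per-cluster bound loses only one factor, giving $8\rho^{-1}\opt_1(A)$, because $\cost^\L(A,S)\ge\cost(A,S)$ sits in the denominator. An $O(\rho^{-2})$ per-cluster constant plus another $\rho^{-1}$ on the harmonic term would give $O(\rho^{-3}\ln k)$. If you run the induction on $\cost^\L$ instead, $O(\rho^{-2})$ is the right per-cluster constant and the harmonic term needs no extra factor.

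The $\delta$ part is only heuristic as written. In the $\cost^\L$ potential, a uniform pick behaves like a wasted pick: in either branch it raises the potential by at most $\cost_t^\L(\U_t)/|U_t|\le\cost_t^\L(\X)/(k-t)$. This must be bounded via $\cost_t^\L(\X)\le\rho^{-1}\cost(\X,\{c_1\})$; a flat ``$O(\opt_1)$ per step'' charge is not justified. Conditioning each pick on its mixture component is a cleaner route than Lemma~\ref{lem:bounds} to the per-cluster bound, since both components give $O(\rho^{-1})\opt_1(A)$.
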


If we let $m=\infty$, we can analyze the expected time of our algorithm and the approximation guarantee only depends on $\rho$, which is given in Corollary~\ref{thm:rejection-sampling}. Finally, by plugging in the manifold hypothesis assumption, we obtain Corollary~\ref{thm:assumption}. 

\section{Empirical Study}
In this section, we empirically evaluate the extent to which the theoretical predictions developed in this paper are reflected in real-world data. The complete details are in Appendix~\ref{app:empirical}.

\textit{Compute environment. }All experiments were conducted on a dual-socket machine equipped with 2.20 GHz $\operatorname{Intel Xeon Gold 5220R}$ running $\operatorname{Linux}$ on $\operatorname{x86-64}$  architecture.

\textit{Datasets. } We include popular image datasets, both as raw pixels as well as image embeddings which are produced through joint vision-language models such as $\operatorname{CLIP-ViT-B/32}$ \cite{radford2021clip}; text datasets in the form of text embeddings generated by the $\operatorname{All-MiniLM-L6-v2}$ model \cite{wang2020minilm}; tabular datasets which record sensor data as well as data generated by scientific experiments. The details of the datasets used are presented in Table~\ref{tab:datasets}.

\begin{table}[t]
\centering
\small
\begin{tabular}{lcc}
\toprule
\textsc{Dataset} & $n$ & $d$ \\
\midrule
\textsc{MNIST}          & 60{,}000      & 784   \\
\textsc{FMNIST}         & 60{,}000      & 784   \\
\textsc{CIFAR10}        & 50{,}000      & 3{,}072 \\
\textsc{CIFAR100}       & 50{,}000      & 3{,}072 \\
\midrule
\textsc{MNIST (CLIP)}   & 60{,}000      & 512   \\
\textsc{FMNIST (CLIP)}  & 60{,}000      & 512   \\
\textsc{CIFAR10 (CLIP)} & 50{,}000      & 512   \\
\textsc{CIFAR100 (CLIP)}& 50{,}000      & 512   \\
\midrule
\textsc{Reddit}         & 100{,}000     & 384   \\
\textsc{StackExchange}  & 20{,}000      & 384   \\
\midrule
\textsc{SUSY}           & 5{,}000{,}000 & 18    \\
\textsc{HAR}            & 10{,}299      & 561   \\
\bottomrule
\end{tabular}
\caption{Datasets used in experiments with sample size $n$ and ambient dimension $d$.}
\label{tab:datasets}
\end{table}

\paragraph{Scaling Laws.} We test the hypothesis that the data-dependent parameters $\beta = \beta(\X,C)$ and $\eta = \eta(\X,C)$ grow polynomially with the number of clusters $k$. For each dataset, we compute $k$-means clusterings for
$k \in \{5,10,50,100,250,500,750,1000\}$ using $10$ independent runs of standard $k$-means++ followed by Lloyd's iterations. We fit linear regressions to $\log \beta$ and $\log \eta$ versus $\log k$ and report slopes, $R^2$, and $95\%$ confidence intervals. Figure~\ref{fig:scaling-side-by-side} shows that across datasets and modalities, both quantities exhibit reasonable linear behavior in log--log coordinates. It is important to note that although we are not able to provide bounds on $\eta$ in term of $k$, the experiments show that $\log \eta$ scales atmost linearly with $\log k$.The estimated quantization exponent $\widehat{\eps}$ is the slope of the best fit line for the $\log \beta$ vs $\log k$ plot.  

\begin{figure}
    \centering
    \begin{subfigure}{0.48\linewidth}
        \centering
        \includegraphics[width=\linewidth]{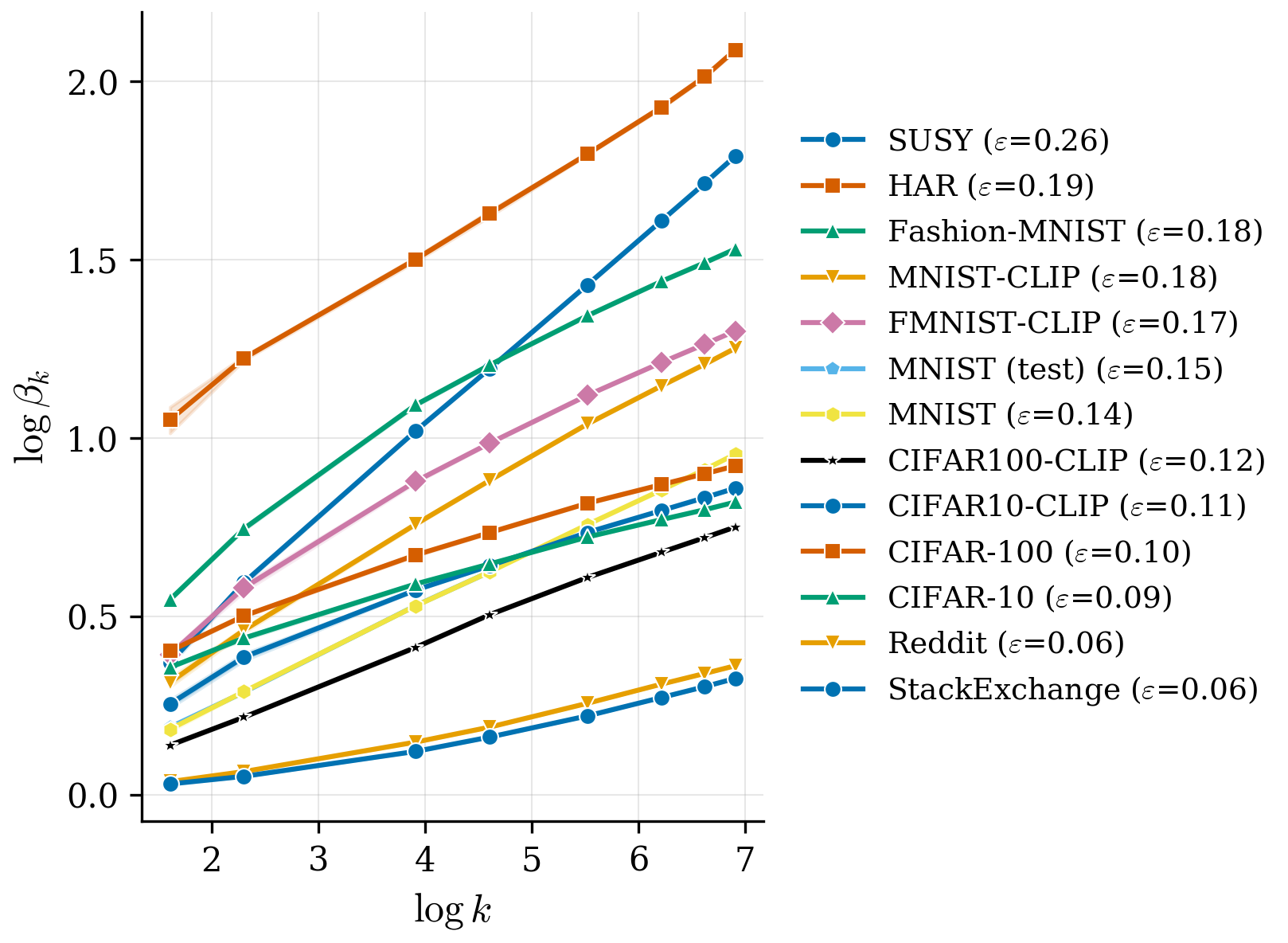}
        \caption{$\beta$ scaling}
        \label{fig:beta-scaling}
    \end{subfigure}
    \hfill
    \begin{subfigure}{0.48\linewidth}
        \centering
        \includegraphics[width=\linewidth]{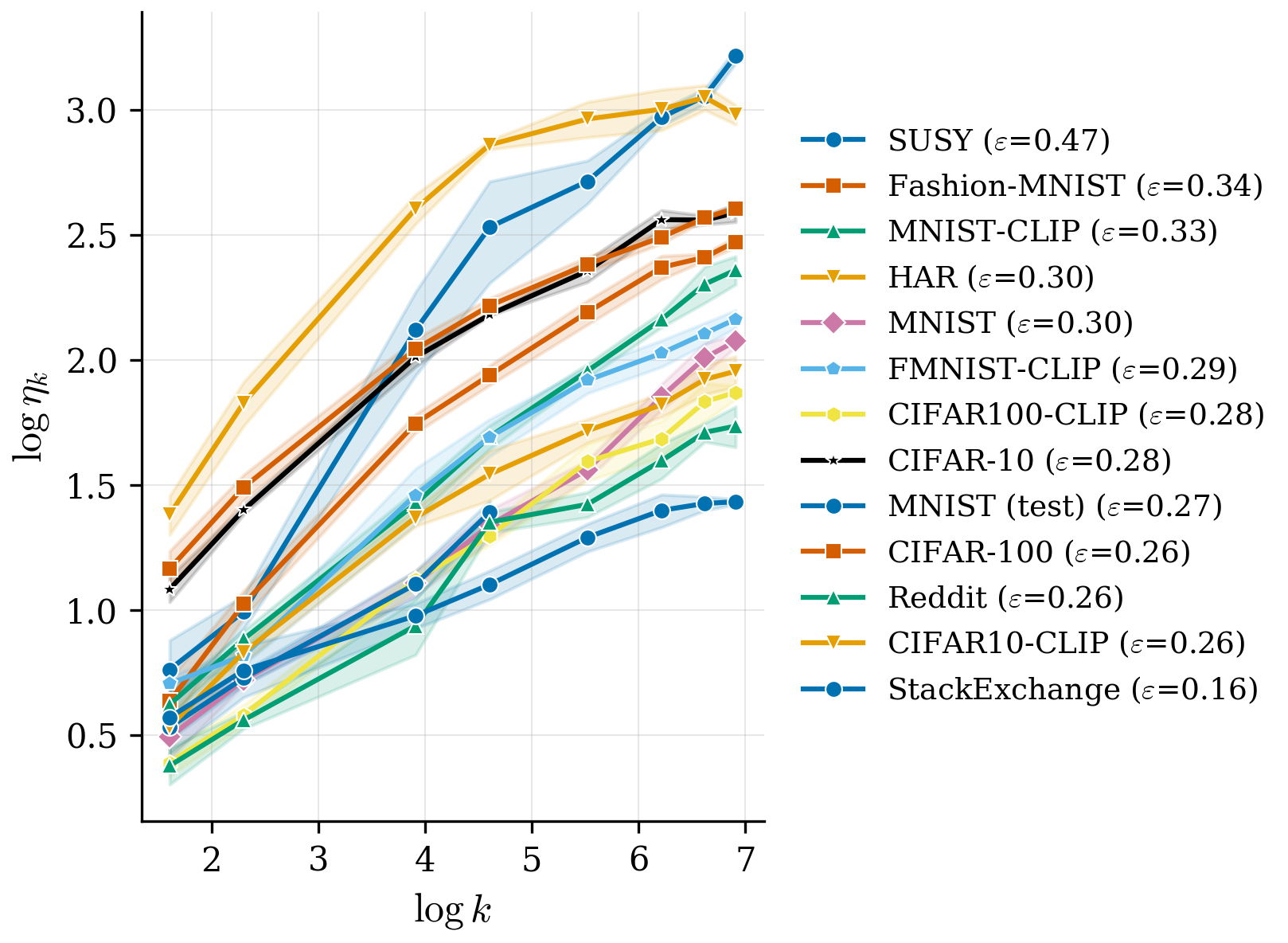}
        \caption{$\eta$ scaling}
        \label{fig:eta-scaling-sub}
    \end{subfigure}
    \caption{Scaling behavior across datasets}
    \label{fig:scaling-side-by-side}
\end{figure}

\paragraph{Intrinsic dimension.} We test the hypothesis that the empirically estimated quantization exponent $\widehat{\varepsilon}$ is inversely proportional to the intrinsic dimension. Independent intrinsic dimension estimates $\widehat{d}_{\operatorname{MLE}}$ are obtained by using the popular maximum likelihood method (MLE) \cite{levina2005maximum}. For a dataset $\X = \{x_1,\dots,x_n\}$, let $T_j(x_i)$ denote the distance of the $j$th nearest neighbor of $x_i$ from $x_i$. For a hyperparameter $k$, the ID estimate is calculated as : 
$$\widehat{d}_{\operatorname{MLE}}(\X,k) := \frac{1}{n} \sum_{i=1}^n \left(\frac{1}{k-1} \sum_{j=1}^{k-1} \log \frac{T_k(x_i)}{T_j(x_i)}\right)^{-1}$$
For each dataset we take $10$ independent samples each  of size $n' = 10,000$ to estimate the ID for a particular value of $k$. This is repeated for $k \in \{5,10,20,50,100\}$ and the average is reported. We plot the estimates $\widehat{\eps}$ of the quantization exponent against $\widehat{d}_{\operatorname{MLE}}$ in Figure~\ref{fig:id-min}.\\\\

\begin{figure}
    \centering
    \begin{subfigure}{0.48\linewidth}
        \centering
        \includegraphics[width=\linewidth]{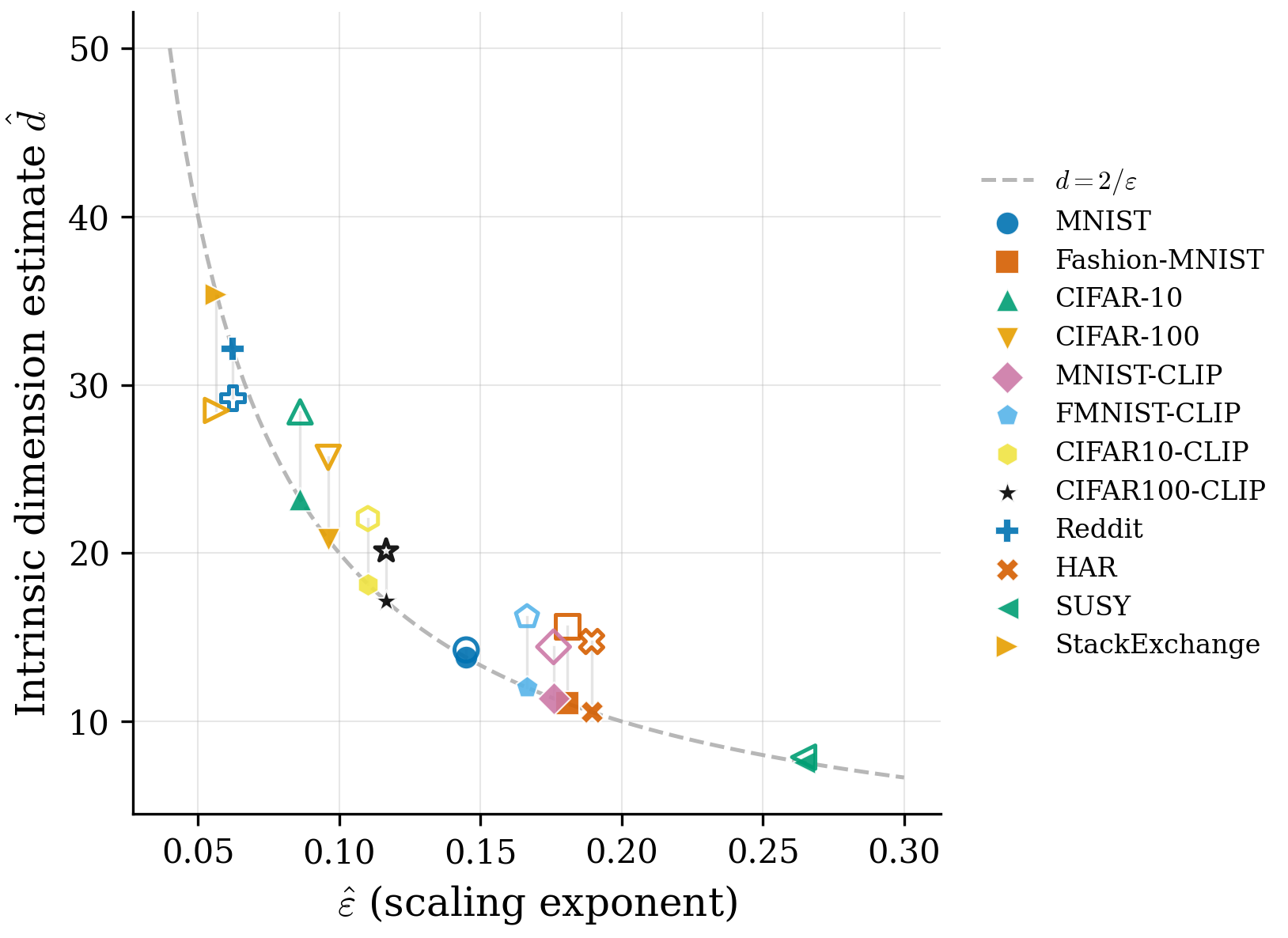}
        \caption{$\eps$ vs $\hat{d}_{\operatorname{MLE}}$}
        \label{fig:eps-deps}
    \end{subfigure}
    \hfill
    \begin{subfigure}{0.48\linewidth}
        \centering
        \includegraphics[width=\linewidth]{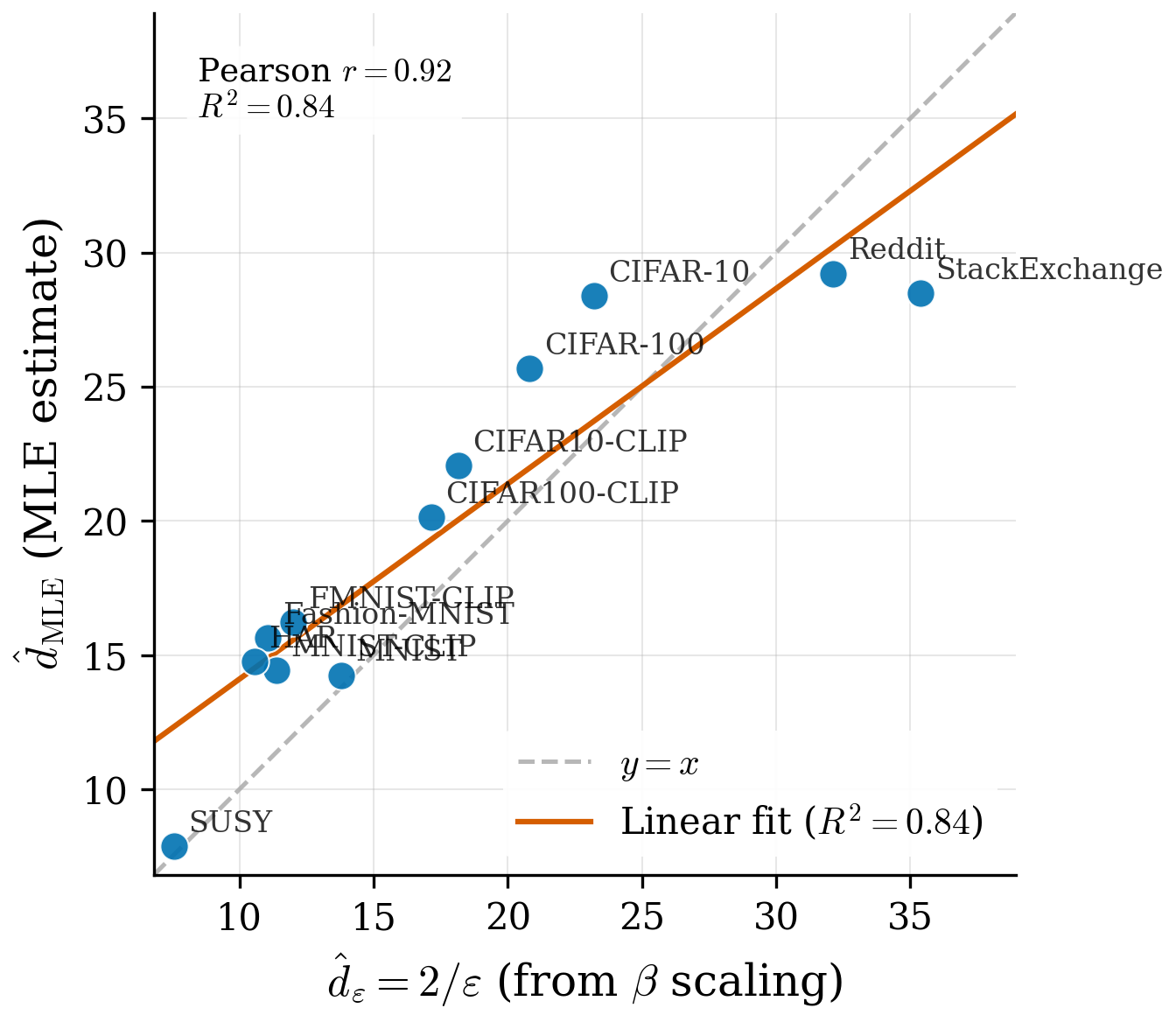}
        \caption{$\hat{d}_\eps$ vs $\hat{d}_{\operatorname{MLE}}$}
        \label{fig:eta-scaling}
    \end{subfigure}
    \caption{Dependence of quantization exponent on ID}
    \label{fig:id-min}
\end{figure}

\vspace{-20pt}

\paragraph{Seeding performance.} We compare our seeding algorithm $\qkmeans$ with $\afkmc$ \cite{bachem2016good}, $\rejsample$ \cite{cohenaddad2020fast}, $\pronecoreset$ \cite{charikar2023simple} and $\fastkmeans$ \cite{draganov2024settle}. All the algorithms were implemented in C++ without using any dimensionality reduction to ensure a fair comparison. Across a wide range of datasets and data modalities, the experiments show that $\qkmeans$ consistently provides the fastest seeding procedure in practically relevant regimes, particularly as the number of clusters $k$ and the ambient dimension $D$ increase. $\rejsample$ runtimes are almost independent of $k$, but it along with $\fastkmeans$ suffers a significant overhead due to constructing the multi-tree embeddings. $\afkmc$
is competitive only in low-dimensional tabular settings, but loses its advantage on image, text, and embedding data. From the detailed benchmarks in Appendix~\ref{app:empirical}, $\qkmeans$ achieves consistent speedups of $5–10\times$ over $\afkmc$ and $\pronecoreset$ and up to two orders of magnitude over $\rejsample$, while maintaining comparable seeding cost. Additional experiments on the effect of noisy manifolds and studying the rejection rate vs $m$ are presented in Appendix~\ref{app:empirical}.

\begin{figure}[h]
    \centering
    \includegraphics[width=\linewidth]{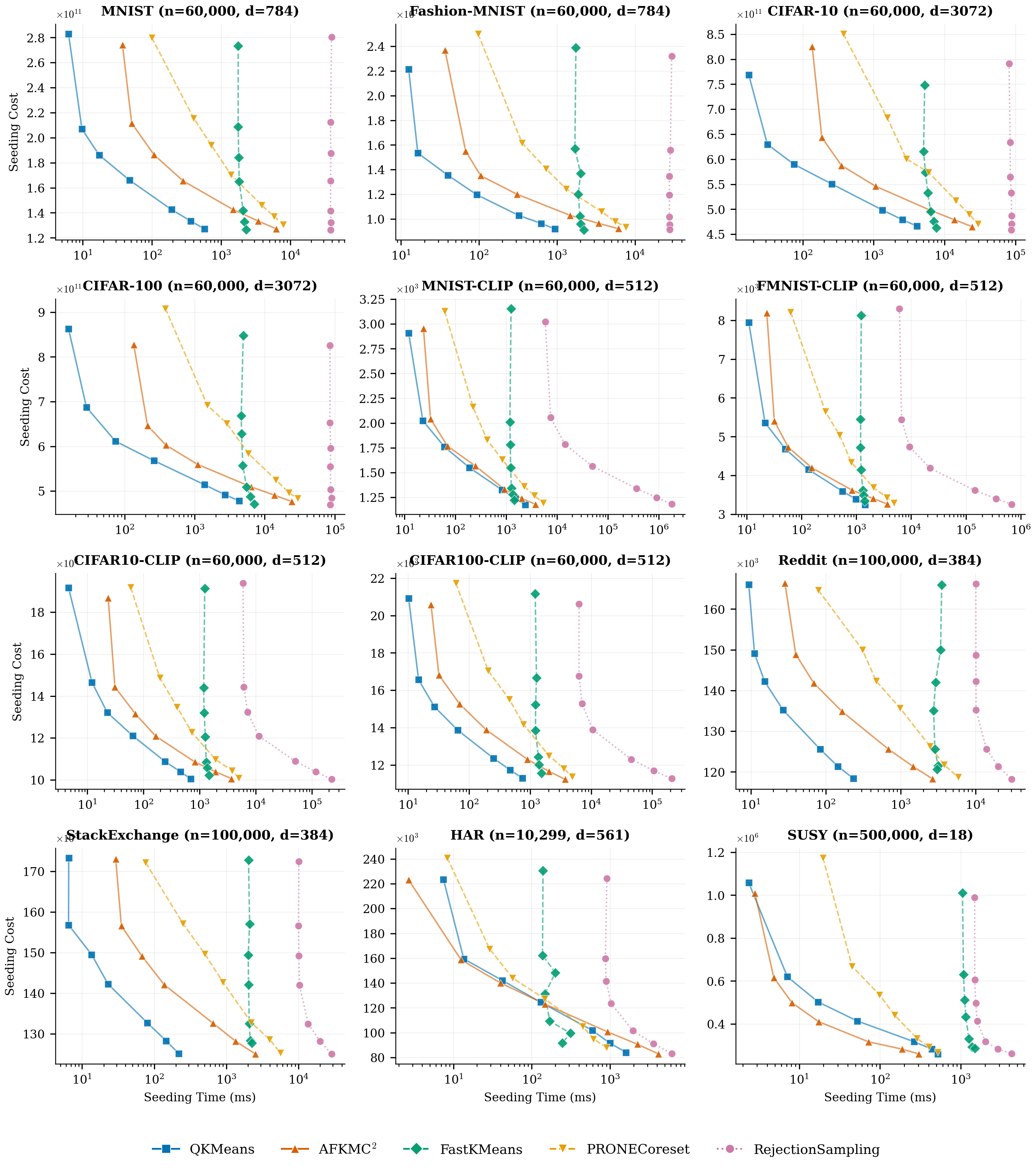}
    \caption{Seeding quality - runtime tradeoff plots}
    \label{fig:placeholder}
\end{figure}
\vspace{-20pt}
\section{Conclusion, Limitations, Future Directions}

In this work, we modelled typical $k$-means clustering instances from a manifold hypothesis perspective. Using techniques from optimum quantization theory, we identified key geometric properties of the dataset that follow predictable scaling laws. We also showed how to exploit these regularities to design a fast rejection sampling-based seeding algorithm with new tradeoffs. Our theoretical results were backed by extensive empirical evaluations. We feel that due to its broad impact  on the machine learning community, studying algorithms under the manifold hypothesis is an appealing and exciting direction for future research.

\newpage
\bibliography{example_paper}
\bibliographystyle{icml2026}

\newpage
\appendix
\onecolumn

\section*{Appendix}

The appendices expand upon the content of the main paper. In Appendix~\ref{app:scaling-laws}, we provide a theoretical proof of the scaling laws presented in Theorem~\ref{thm:scaling-laws}. In Appendix~\ref{app:theoretical-analysis} we provide the technical proofs for our seeding algorithm along with the analysis of its runtime and approximation guarantees. Finally, in Appendix~\ref{app:empirical} we provide the details of aour empirical study along with additional experiments and data. 

\section{Theoretical Analysis of Scaling Laws} \label{app:scaling-laws}

In this section we provide theoretical proofs for the scaling laws presented in  Theorem~\ref{thm:scaling-laws}. We mention that the techniques used here are well developed in optimum quantization theory \cite{zador1982asymptotic,gray1998quantization,graf2000foundations,gruber2004optimum} and we provide an elementary proof for the sake of being self-contained. Since we work under the manifold assumption, we require definitions and techniques from differential geometry for our proofs. We refer the reader to the book by \cite{Lee2019} for obtaining familiarity with such notions. Recall that we are given a dataset $\X = \{x_1,\dots,x_n\}$ where the $x_i$'s are sampled i.i.d. from a probability density function $f : \M \to \R_{+}$ supported on a compact, connected and smooth subset of a manifold $\M \subset \R^D$. The proof can be broken down into the following two steps : 
\begin{itemize}
    \item \emph{Step 1} Showing that the optimum distortion $\Delta_k(f) = \inf_{|C| = k} \int_{\M}\min_{c \in C}\|x-c\|^2 f(x)\dvol_\M(x)$ follows the scaling law $\Delta_k(f) \in \Theta_f(k^{-2/d})$. 
    \item \emph{Step 2} Showing how to translate this in the finite sample case by studying how $\frac{1}{n}\opt_k(\X) \to \Delta_k(f)$ as $n \to \infty$. 
\end{itemize}

\paragraph{Preliminaries.}
A function $\phi : A \to B$ between two metric spaces $A$ and $B$ is called a
\emph{homeomorphism} if $\phi$ is bijective and both $\phi$ and $\phi^{-1}$ are
continuous. A set $\M \subset \R^D$ is called a smooth $d$-dimensional manifold embedded in
$\R^D$ if for each point $p \in \M$ there exists an open neighborhood
$W \subset \R^D$ of $p$ and a smooth diffeomorphism
$\phi : W \to \R^D$ such that $\phi(\M \cap W) = \R^d \times \{0\}^{D-d}.$ Equivalently, $\M$ admits a collection of local coordinate maps as follows.
A \emph{chart} on $\M$ is a pair $(U,\varphi)$, where $U \subset \M$ is open and
$\varphi : U \to V \subset \R^d$ is a homeomorphism onto an open subset of
$\R^d$. A collection of charts
$
\mathcal{A} = \{(U_\alpha,\varphi_\alpha)\}_{\alpha \in I}
$
is called an \emph{atlas} if $\M = \bigcup_{\alpha \in I} U_\alpha$.
The atlas is said to be \emph{smooth} if for every pair of overlapping charts
$(U_\alpha,\varphi_\alpha)$ and $(U_\beta,\varphi_\beta)$, the transition maps
$
\varphi_\beta \circ \varphi_\alpha^{-1} :
\varphi_\alpha(U_\alpha \cap U_\beta) \to
\varphi_\beta(U_\alpha \cap U_\beta)
$
are smooth diffeomorphisms. Throughout, we assume that $\M$ is equipped with the smooth atlas induced by its
embedding in $\R^D$. This atlas endows $\M$ with the structure of a smooth
manifold and allows us to perform local analysis by identifying sufficiently
small neighborhoods of $\M$ with open subsets of $\R^d$. 

\begin{lemma}[Zador bounds for manifolds]\label{lem:zador-manifold}
Let $\M \subset \R^D$ be a compact, $d$-dimensional $C^2$ submanifold and let
$J \subset \M$ be compact, measurable, and satisfy $\vol_{\M}(J)>0$. Let $f : J \to R_+$ be a probability density function on $\M$ such that $f_{min} < f(x) < f_{max}$ and $\int_\M f(x)\dvol_\M(x) = 1$. Define
\[
\Delta_k(f)
:=
\inf_{\substack{S \subset \R^D\\ |S| = k}}
\int_J \min_{p \in S} \|x-p\|^2 \, f(x) \dvol_{\M}(x).
\]
Then there exist constants $0<c<C<\infty$ depending only on $\M$,$J$ and $f$ such that
\[
c\, k^{-2/d} \le \Delta_k(f) \le C\, k^{-2/d}
\quad \text{for all } k \ge 1.
\]
\end{lemma}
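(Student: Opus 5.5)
The plan is to prove the two inequalities separately. Both rest on the fact that a compact $C^2$ submanifold is locally bi-Lipschitz equivalent to $\R^d$ via its charts.

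First, by compactness of $\M$, I fix a finite atlas $\{(U_\alpha,\varphi_\alpha)\}_{\alpha=1}^N$. I shrink the charts so that each $\varphi_\alpha^{-1}$ is $L$-bi-Lipschitz onto its image, with $\|x-y\|$ between $L^{-1}$ and $L$ times $|\varphi_\alpha(x)-\varphi_\alpha(y)|$. I also arrange that the Riemannian volume density in these coordinates is bounded between $L^{-d}$ and $L^{d}$. Local graph coordinates over the tangent space, restricted to small balls, give this. The constant $L$ depends only on $\M$ and on the reach or curvature bound, which is finite by $C^2$-ness and compactness.

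\textbf{Upper bound.} It suffices to exhibit, for each $k$, a set $S$ of at most $k$ points whose distortion is $O(k^{-2/d})$.
\begin{itemize}
\item Cover each compact chart image $\varphi_\alpha(J\cap \overline{U_\alpha'})$ by a cubic grid of side $h$ in $\R^d$. Since the image is bounded, this uses $O(h^{-d})$ cubes.
\item Map the cube centers back through $\varphi_\alpha^{-1}$. Every $x\in J$ then lies within distance $L\sqrt d\,h$ of some chosen point.
\item Summing over the $N$ charts gives $|S|\le C_1 h^{-d}$. Choosing $h=(C_1/k)^{1/d}$ yields $|S|\le k$.
\item The resulting distortion is at most $L^2 d h^2\int f = O(k^{-2/d})$.
\item For small $k$, where $h$ would be large, the bound is trivial: $\Delta_k\le\Delta_1\le \diam(\M)^2$, so adjusting $C$ covers these cases.
\end{itemize}

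\textbf{Lower bound.} This is the main obstacle, because the quantizer points $p\in S$ live in $\R^D$ rather than on $\M$ or in a chart. I handle this by a volume argument. The goal is to show that for every $r$ below a fixed scale $r_0$ and every $p\in\R^D$,
\[
\vol_\M\bigl(B(p,r)\cap \M\bigr)\le A r^d .
\]
To prove it, note that if $B(p,r)$ meets $\M$ at some point $q$, then $B(p,r)\subset B(q,2r)$. For $2r$ below the chart scale, the set $B(q,2r)\cap\M$ lies in one chart, and there its preimage lies in a Euclidean ball of radius $2Lr$. Hence its volume is at most $L^d\omega_d(2Lr)^d$. For $r\ge r_0$ the constant simply absorbs the difference.

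With this estimate, take any $S$ with $|S|=k$ and set $r=\theta k^{-1/d}$. Then
\[
\vol_\M\Bigl(\bigcup_{p\in S}B(p,r)\cap J\Bigr)\le kA r^d=A\theta^d .
\]
I choose $\theta$ so that $A\theta^d\le \vol_\M(J)/2$. On the remaining set, of volume at least $\vol_\M(J)/2$, every point has $\min_p\|x-p\|^2\ge r^2$. Using $f\ge f_{min}$, this gives
\[
\Delta(f,S)\ge f_{min}\,\tfrac{\vol_\M(J)}{2}\,\theta^2 k^{-2/d}.
\]
For $k$ too small to have $r<r_0$, I take $\theta$ small enough that $\theta k^{-1/d}\le r_0$ holds for all $k\ge1$. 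This uniformity is achievable because $r$ decreases with $k$.

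The delicate points are two: obtaining uniform chart constants, which comes from compactness plus a Lebesgue number argument, and handling points $p$ off the manifold. The ball-containment trick above handles the latter.
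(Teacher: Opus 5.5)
Your proposal is correct and follows the same route as the paper. The upper bound is a grid construction in finitely many bi-Lipschitz charts, and the lower bound is a volume-counting argument showing that any $k$ points leave a constant fraction of the mass at distance at least $c\,k^{-1/d}$. Your lower bound is in fact more careful than the paper's. You get $\vol_\M(B(p,r)\cap\M)\le A r^d$ directly from $B(p,r)\subset B(q,2r)$ plus a Lebesgue-number argument, whereas the paper cites a lower volume-growth bound at that step. You also work with $J$ itself and $f_{\min}$, rather than with a geodesic ball inside $J$, whose existence the hypotheses do not guarantee.
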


\begin{proof}

\textit{Upper bound.}
Since $J$ is compact and $\M$ is a smooth embedded submanifold, there exists a
finite atlas $\{(U_\ell,\varphi_\ell)\}_{\ell=1}^m$ such that
\[
J = \bigcup_{\ell=1}^m J_\ell,
\qquad J_\ell \subset U_\ell,
\]
where each $\varphi_\ell : U_\ell \to V_\ell \subset \R^d$ is a diffeomorphism.
By compactness and smoothness, shrinking charts if necessary, there exists
a constant $C_0 \ge 1$ such that for all $\ell$ and all $x,y \in U_\ell$,
\begin{equation}\label{eq:metric-compare}
C_0^{-1}\|\varphi_\ell(x)-\varphi_\ell(y)\|
\le
\|x-y\|
\le
C_0\|\varphi_\ell(x)-\varphi_\ell(y)\|,
\end{equation}
and the induced volume form satisfies
\begin{equation}\label{eq:volume-compare}
C_0^{-1}\,du
\le
\dvol_{\M}(x)
\le
C_0\,du
\quad \text{under } u=\varphi_\ell(x).
\end{equation}

Fix $k \ge m$ and set $k_\ell := \lfloor k/m \rfloor$. The set
$\varphi_\ell(J_\ell) \subset \R^d$ is compact, hence contained in a cube of side
length $L_\ell < \infty$. Partition this cube into
$\lceil k_\ell^{1/d} \rceil^d$ congruent subcubes, each of side length
at most $2L_\ell k_\ell^{-1/d}$. Selecting one point from each subcube intersecting
$\varphi_\ell(J_\ell)$ yields a set
$\{u_{\ell,i}\}_{i=1}^{k_\ell} \subset \varphi_\ell(J_\ell)$ such that
\[
\varphi_\ell(J_\ell)
\subset
\bigcup_{i=1}^{k_\ell}
B_{\R^d}(u_{\ell,i}, C k^{-1/d}),
\]
for a constant $C$ depending only on $J$ and $d$.

Define $p_{\ell,i} := \varphi_\ell^{-1}(u_{\ell,i}) \in J_\ell$ and let
\[
S := \bigcup_{\ell=1}^m \{p_{\ell,1},\dots,p_{\ell,k_\ell}\},
\]
so that $|S| \le k$. By \eqref{eq:metric-compare}, for every $x \in J$,
\[
\min_{p \in S} \|x-p\|
\le
C' k^{-1/d},
\]
and hence
\[
\min_{p \in S} \|x-p\|^2
\le
C'' k^{-2/d}.
\]
Integrating over $J$ and using the finiteness of $\vol_{\M}(J)$ gives
\[
\int_J \min_{p \in S} \|x-p\|^2 \, f(x) \dvol_{\M}(x)
\le
C k^{-2/d},
\]
which implies $\Delta_k(f)\le C k^{-2/d}$.

\textit{Lower bound.}
Since $\vol_{\M}(J)>0$ and $\M$ is smooth, there exists a point $x_0 \in J$ and
constants $\rho>0$ and $c_0>0$ such that the geodesic ball
$B_{\M}(x_0,r) \subset J$ for all $0<r\le\rho$, and
\begin{equation}\label{eq:volume-growth}
\vol_{\M}(B_{\M}(x_0,r)) \ge c_0 r^d.
\end{equation}

Let $S \subset \R^D$ with $|S|=k$ be arbitrary and let $r>0$. For each $p \in S$,
the intersection $B(p,r) \cap \M$ is contained in a geodesic ball of radius
$C r$, hence by \eqref{eq:volume-growth},
\[
\vol_{\M}(B(p,r) \cap \M) \le C r^d.
\]
Therefore,
\[
\vol_{\M}\!\left(\bigcup_{p \in S} B(p,r) \cap \M\right)
\le
C k r^d.
\]
Choosing $r = c k^{-1/d}$ with $c>0$ sufficiently small ensures that
\[
\vol_{\M}\!\left(\bigcup_{p \in S} B(p,r) \cap \M\right)
<
\vol_{\M}(B_{\M}(x_0,\rho)).
\]
Hence there exists a measurable subset
\[
A \subset B_{\M}(x_0,\rho)
\]
of positive volume such that $\|x-p\| \ge r$ for all $x \in A$ and all $p \in S$.
For such $x$,
\[
\min_{p \in S} \|x-p\|^2 \ge r^2 = c^2 k^{-2/d}.
\]
Integrating over $A$ yields
\[
\int_J \min_{p \in S} \|x-p\|^2 \, f(x)\dvol_{\M}(x)
\ge
c' k^{-2/d},
\]
where $c'>0$ depends only on $\M$ and $J$. Since $S$ was arbitrary, this implies
$\Delta_k(f)\ge c k^{-2/d}$.

Combining the upper and lower bounds completes the proof.
\end{proof}

Next, we show finite sample bounds on the $k$-means cost to quantify how $\opt_k(\X) \to \Delta_k(f)$ as the number of samples $n \to \infty$. By re-scaling, we can assume without loss of generality that $\|x\| \leq 1$ for each $x \in J$. For this, we would require two standard results : 

\begin{lemma}[Hoeffding's Inequality]\label{fact:hoeffding} Let $X_1,\dots,X_m$ be independent random variables such that $a_j \leq X_j \leq b_j$ almost surely for each $j \in [m]$. Let $S = \sum_{j=1}^mX_j$ then : 
$$ \Pr[|S_m - \E[S_m]| \geq t] \leq 2 \exp\left(- \frac{2t^2}{\sum_{j=1}^m (b_j-a_j)^2}\right)$$
    
\end{lemma}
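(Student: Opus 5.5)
Hoeffding's inequality is a standard fact, so the plan is to prove it the classical way: a Chernoff bound combined with Hoeffding's lemma for a single bounded variable. By symmetry (apply the argument to $-X_j$, which satisfies $-b_j \le -X_j \le -a_j$), it suffices to show the one-sided bound
\[
\Pr[S_m - \E[S_m] \ge t] \le \exp\!\left(-\frac{2t^2}{\sum_{j}(b_j-a_j)^2}\right).
\]
A union bound over the two tails then gives the factor $2$.

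\emph{Step 1 (Chernoff).} For any $\lambda>0$, Markov's inequality applied to $e^{\lambda(S_m-\E S_m)}$ gives
\[
\Pr[S_m-\E S_m\ge t]\le e^{-\lambda t}\,\E\bigl[e^{\lambda(S_m-\E S_m)}\bigr].
\]
By independence, the expectation factorizes as $\prod_j \E\bigl[e^{\lambda(X_j-\E X_j)}\bigr]$.

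\emph{Step 2 (Hoeffding's lemma).} This is the main step. For a centered variable $Y=X_j-\E X_j$ taking values in an interval of length $\ell_j=b_j-a_j$, I will show
\[
\E\bigl[e^{\lambda Y}\bigr]\le e^{\lambda^2\ell_j^2/8}.
\]
The route is as follows. By convexity of $y\mapsto e^{\lambda y}$ on the interval $[a,b]$ containing $Y$ (with $a\le 0\le b$ and $b-a=\ell_j$), we have $e^{\lambda y}\le \frac{b-y}{b-a}e^{\lambda a}+\frac{y-a}{b-a}e^{\lambda b}$. Taking expectations and using $\E Y=0$ gives $\E e^{\lambda Y}\le e^{L(h)}$, where $h=\lambda(b-a)$, $p=-a/(b-a)$, and $L(h)=-hp+\log(1-p+pe^h)$. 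A direct computation shows $L(0)=L'(0)=0$. Moreover, $L''(h)=q(1-q)\le 1/4$, where $q=pe^h/(1-p+pe^h)\in[0,1]$. Taylor's theorem with remainder then yields $L(h)\le h^2/8$.

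\emph{Step 3 (optimize).} Combining the two steps,
\[
\Pr[S_m-\E S_m\ge t]\le \exp\!\left(-\lambda t+\frac{\lambda^2}{8}\sum_j\ell_j^2\right).
\]
Choosing $\lambda=4t/\sum_j\ell_j^2$ gives the exponent $-2t^2/\sum_j\ell_j^2$, as claimed. The only delicate point is the calculus in Step 2, specifically the second-derivative bound $q(1-q)\le 1/4$. The rest is routine.
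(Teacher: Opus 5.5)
Your proof is correct. It is the classical Chernoff bound plus Hoeffding's lemma argument, and the key computations check out: $L(0)=L'(0)=0$, $L''(h)=q(1-q)\le 1/4$, and the choice $\lambda=4t/\sum_j\ell_j^2$ giving exponent $-2t^2/\sum_j\ell_j^2$. The paper gives no proof of its own and simply invokes the inequality as a standard result, so your argument supplies the textbook proof. It implicitly takes $t>0$ and $\sum_j(b_j-a_j)^2>0$; the degenerate case is trivial.
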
  

\begin{lemma}[Covering Lemma]\label{fact:covering}
    Let $A \subset \R^m$ be a bounded set. Then for any $\delta > 0$, there exists a finite set $\mathcal{N}_\delta(A) \subset A$, called a $\delta$-net for $A$, such that: 
    \begin{enumerate}
        \item For every $x \in A$, there exists $x' \in \mathcal{N}_{\delta}(A)$ with $\|x-x'\| \leq \delta$. 
        \item $|\mathcal{N}_\delta(A)| \leq \left(\frac{\diam(A)}{\delta}+1\right)^{m}$. 
    \end{enumerate}
\end{lemma}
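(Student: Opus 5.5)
The plan is a standard packing argument: take a maximal $\delta$-separated subset of $A$, then bound its size by comparing volumes. To get the stated constant $(\diam(A)/\delta+1)^m$, with no factor $2$, I will bound the enlarged set with the isodiametric inequality rather than with an enclosing ball of radius $\diam(A)+\delta/2$.

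\emph{Step 1 (construction).} If $A=\emptyset$ there is nothing to prove, and if $\diam(A)=0$ then $A$ is a single point, which is its own $\delta$-net of size $1$. Otherwise, call a finite set $N\subset A$ \emph{$\delta$-separated} if $\|x-y\|>\delta$ for all distinct $x,y\in N$. Step 2 shows that every such $N$ satisfies $|N|\le(\diam(A)/\delta+1)^m$. Hence the cardinalities of $\delta$-separated sets are bounded, and we may pick one, $\mathcal{N}_\delta(A)$, of maximum cardinality. By maximality, every $x\in A$ lies within distance $\delta$ of some point of $\mathcal{N}_\delta(A)$. Otherwise $\mathcal{N}_\delta(A)\cup\{x\}$ would be a larger $\delta$-separated set. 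This gives property 1, and property 2 is the bound from Step 2.

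\emph{Step 2 (volume bound).} Let $N$ be $\delta$-separated. The open balls $B(x,\delta/2)$ for $x\in N$ are pairwise disjoint, since a common point $z$ of two of them would give $\|x-y\|\le\|x-z\|+\|z-y\|<\delta$. All of them lie in the enlargement $A_{\delta/2}:=A+B(0,\delta/2)$. By the triangle inequality, $\diam(A_{\delta/2})\le\diam(A)+\delta$. The isodiametric inequality says a set of diameter $t$ has Lebesgue measure at most that of a ball of diameter $t$. Applying it gives
\[
|N|\,\omega_m(\delta/2)^m\le \vol(A_{\delta/2})\le \omega_m\Big(\frac{\diam(A)+\delta}{2}\Big)^m ,
\]
where $\omega_m$ is the volume of the unit ball in $\R^m$. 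Dividing through yields $|N|\le\big((\diam(A)+\delta)/\delta\big)^m=(\diam(A)/\delta+1)^m$, exactly the stated bound.

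\emph{Main obstacle.} The only nonelementary ingredient is the isodiametric inequality. It is exactly what removes the factor $2$: the crude containment $A_{\delta/2}\subset B(a,\diam(A)+\delta/2)$ for a fixed $a\in A$ would only give $(2\diam(A)/\delta+1)^m$. I would cite the isodiametric inequality as a standard result, for instance via Steiner symmetrization as in Evans--Gariepy. A measurability concern for $A_{\delta/2}$ does not arise: it is a union of open balls, hence open. Steps 1 and 2 are otherwise routine.
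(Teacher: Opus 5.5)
Your proof is correct and complete. The paper gives no proof of its own to compare against: it quotes the lemma as a standard fact, alongside Hoeffding's inequality.

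Your route is the usual one: a maximal $\delta$-separated subset, whose open $\delta/2$-balls are disjoint and contained in $A+B(0,\delta/2)$. The one non-routine step is bounding $\vol(A+B(0,\delta/2))$ with the isodiametric inequality. That step is exactly what produces the stated constant $(\diam(A)/\delta+1)^m$ for an arbitrary bounded $A$. The elementary comparison with an enclosing ball $B(a,\diam(A)+\delta/2)$ gives only $(2\diam(A)/\delta+1)^m$, and Jung's theorem improves this only to $(\sqrt{2}\,\diam(A)/\delta+1)^m$.

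For the paper's only use of the lemma, the net of $\C_k$ in Lemma~\ref{lem:finite-sample}, the elementary argument already suffices. Each factor of $\C_k$ is the unit ball of $\R^D$. Comparing with the ball of radius $1+\varepsilon/2$ about its own center gives a net of size $(2/\varepsilon+1)^D$. The product of these nets has size $(2/\varepsilon+1)^{kD}$ and approximates every $C\in\C_k$ within $\max_j\|c_j-c_j'\|\le\varepsilon$, which is the distance the Lipschitz estimate there actually uses. Applying the lemma literally in $\R^{kD}$ with the Euclidean norm would instead involve $\diam(\C_k)=2\sqrt{k}$.
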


Using these, we can show the following : 

\begin{lemma}[Finite sample bound on $k$-means cost] \label{lem:finite-sample} For any $\delta \in (0,1)$, the following holds with probability atleast $1-\delta$ : 

$$\left|\frac{1}{n}\opt_k(\X) - \Delta_k(f)\right| \in O_f\left( \sqrt{\frac{kD \log n + \log (1/\delta)}{n}}\right)$$
\end{lemma}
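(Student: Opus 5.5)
We prove this by a uniform convergence argument over center sets, combining Hoeffding's inequality (Lemma~\ref{fact:hoeffding}) with a net over candidate $k$-tuples of centers (Lemma~\ref{fact:covering}). After rescaling, $\|x\|\le 1$ for every $x \in J$. We restrict to center sets $S \subset B(0,2)^k$. This loses nothing: projecting any center onto $B(0,1)$ does not increase its distance to any point of the unit ball, so both $\opt_k(\X)$ and $\Delta_k(f)$ are attained (or approached) by centers in $B(0,1)$. For such $S$, the loss $g_S(x) := \min_{p\in S}\|x-p\|^2$ lies in $[0,9]$ on $J$. Moreover
\[
\frac1n \cost(\X,S) = \frac1n\sum_i g_S(x_i), \qquad \E\Bigl[\tfrac1n\cost(\X,S)\Bigr] = \Delta(f,S).
\]

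\emph{Step 1 (fixed $S$).} For a fixed $S$, Hoeffding with $b_j-a_j\le 9/n$ gives
\[
\Pr\Bigl[\bigl|\tfrac1n\cost(\X,S)-\Delta(f,S)\bigr|\ge t\Bigr]\le 2e^{-2nt^2/81}.
\]

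\emph{Step 2 (net and union bound).} Take a $\tau$-net $\mathcal N$ of $B(0,2)^k \subset \R^{kD}$, viewed as a bounded set in dimension $kD$. Its size satisfies $|\mathcal N|\le (C/\tau)^{kD}$. A union bound shows that, with probability at least $1-\delta$, every $S\in\mathcal N$ satisfies the deviation bound with
\[
t=O\Bigl(\sqrt{\bigl(kD\log(1/\tau)+\log(1/\delta)\bigr)/n}\Bigr).
\]

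\emph{Step 3 (Lipschitz extension).} The map $S\mapsto g_S(x)$ is Lipschitz. If each center moves by at most $\tau$, then $\bigl|\|x-p\|^2-\|x-p'\|^2\bigr|\le (\|x-p\|+\|x-p'\|)\tau\le 6\tau$, and a minimum of $6\tau$-perturbed terms is itself perturbed by at most $6\tau$. Hence both the empirical and the population cost change by at most $6\tau$ when we pass from an arbitrary $S$ to its nearest net point. Choose $\tau=1/n$. Then $\log(1/\tau)=\log n$, and the extra $O(1/n)$ term is dominated by the square-root term. This gives
\[
\sup_S \bigl|\tfrac1n\cost(\X,S)-\Delta(f,S)\bigr| \in O\Bigl(\sqrt{(kD\log n+\log(1/\delta))/n}\Bigr).
\]

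\emph{Step 4 (pass to optima).} The map taking a function to its infimum over $S$ is $1$-Lipschitz in the sup-norm: $|\inf_S a_S-\inf_S b_S|\le\sup_S|a_S-b_S|$. Applied to the empirical and population costs, this yields
\[
\bigl|\tfrac1n\opt_k(\X)-\Delta_k(f)\bigr| \le \sup_S \bigl|\tfrac1n\cost(\X,S)-\Delta(f,S)\bigr|,
\]
which is the claimed bound. The constants depend on $f$ only through the rescaling radius, i.e.\ through $\diam(J)$.

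The main obstacle is Step 3, the reduction to a bounded, Lipschitz family. One has to justify confining centers to a bounded ball, and to control the Lipschitz constant uniformly, so that the net size is polynomial in $n$ raised to the power $kD$. Everything else is a routine union bound.
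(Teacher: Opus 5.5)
Your proposal is correct and follows the paper's proof essentially step for step. Both arguments apply Hoeffding to a fixed center tuple, take a union bound over a $1/n$-net of bounded center tuples in $\R^{kD}$, extend to all tuples via a Lipschitz bound on $S \mapsto \min_{p\in S}\|x-p\|^2$, and finish with $|\inf_S a_S - \inf_S b_S| \le \sup_S |a_S - b_S|$. Working in $B(0,2)^k$ with constants $9$ and $6$, instead of the paper's unit-ball tuples with constants $4$ and $4$, makes no difference. Your projection argument also makes explicit the restriction to bounded centers, which the paper uses without comment.
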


\begin{proof}
Let us define the set of possible centers $\C_k := \{(c_1,\dots,c_k) \subset \R^D : \|c_j\| \leq 1, 1 \leq j \leq k\}$. For a fixed set of centers $C = (c_1,\dots,c_k) \in \C_k$, let us define the following quantities:

\begin{align*}
     \ell(x|C) &:= \min_{1\leq j\leq k}\|x-c_j\|^2 \\ 
     L_n(C) &:= \frac{1}{n} \sum_{i \leq i \leq n} \ell(x_i|C) \\ 
     L(C) &:= \E_{x \sim f}\ell(x|C)
\end{align*}

It can be easily seen that the function $\ell(x|C)$ is bounded since: 

$$\ell(x|C) = \min_{1 \leq j \leq k}\|x-c_j\|^2 \leq \|x-c_1\|^2 \leq 2(\|x\|^2 + \|c_1\|^2) \leq 4$$

Notice that by definition, $L_n(C)$ is the average of $n$ i.i.d. random variables $\ell(x_i|C)$ each of which is bounded in $[0,4]$. Hence, applying Lemma~\ref{fact:hoeffding} implies that for any $t > 0$,
$$\Pr\bigl(|L_n(C) - L(C)| \ge t\bigr) \le 2 \exp(-nt^2/8)$$

We next control the dependence of the loss on the centers. Let
$C = (c_1,\dots,c_k)$ and $C' = (c_1',\dots,c_k')$ be two elements of $\C_k$ such
that
\[
\max_{j \in [k]} \|c_j - c_j'\| \le \delta.
\]

For any $x \in \R^D$, we have
\begin{align*}
\bigl|\ell(x \mid C) - \ell(x \mid C')\bigr|
&= \left| \min_{1 \le j \le k} \|x-c_j\|^2
      - \min_{1 \le j \le k} \|x-c_j'\|^2 \right| \\
&\le \max_{1 \le j \le k}
    \left| \|x-c_j\|^2 - \|x-c_j'\|^2 \right|.
\end{align*}
For a fixed $j$, we may expand
\begin{align*}
\left| \|x-c_j\|^2 - \|x-c_j'\|^2 \right|
&= \bigl| \langle c_j' - c_j,\, 2x - c_j - c_j' \rangle \bigr| \\
&\le \|c_j - c_j'\| \cdot \|2x - c_j - c_j'\|.
\end{align*}
Since $\|x\| \le 1$ and $\|c_j\|,\|c_j'\| \le 1$, it follows that
\[
\|2x - c_j - c_j'\|
\le 2\|x\| + \|c_j\| + \|c_j'\|
\le 4.
\]
Consequently,
\[
\bigl|\ell(x \mid C) - \ell(x \mid C')\bigr|
\le 4 \max_{j \in [k]} \|c_j - c_j'\|
\le 4\delta.
\]

Averaging over the samples and taking expectation yields
\[
|L_n(C) - L_n(C')| \le 4\delta,
\qquad
|L(C) - L(C')| \le 4\delta.
\]

\medskip

Let $\varepsilon > 0$ be a parameter to be chosen later. By
Lemma~\ref{fact:covering}, there exists an $\varepsilon$-net
$\mathcal N_\varepsilon \subset \C_k$ such that
\[
|\mathcal N_\varepsilon|
\le \left(\frac{2}{\varepsilon}+1\right)^{kD}
\le \left(\frac{3}{\varepsilon}\right)^{kD}.
\]
For any $C \in \C_k$, let $\Pi(C) \in \mathcal N_\varepsilon$ denote a nearest
net point. Then
\[
|L_n(C) - L(C)|
\le |L_n(\Pi(C)) - L(\Pi(C))| + 8\varepsilon.
\]

Applying Hoeffding's inequality and a union bound over
$\mathcal N_\varepsilon$, we obtain for any $t > 0$,
\[
\Pr\!\left(
\sup_{C' \in \mathcal N_\varepsilon}
|L_n(C') - L(C')| \ge t
\right)
\le
2 |\mathcal N_\varepsilon| \exp(-nt^2/8).
\]
Thus, with probability at least $1-\delta$,
\[
\sup_{C' \in \mathcal N_\varepsilon}
|L_n(C') - L(C')|
\lesssim
\sqrt{\frac{kD \log(1/\varepsilon) + \log(1/\delta)}{n}}.
\]

Combining the above bounds, we conclude that with probability at least
$1-\delta$,
\[
\sup_{C \in \C_k} |L_n(C) - L(C)|
\lesssim
\sqrt{\frac{kD \log(1/\varepsilon) + \log(1/\delta)}{n}}
+ \varepsilon.
\]
Choosing $\varepsilon = 1/n$ gives $\log(1/\varepsilon) = \log n$, and hence
\[
\sup_{C \in \C_k} |L_n(C) - L(C)|
\in
O\!\left(
\sqrt{\frac{kD \log n + \log(1/\delta)}{n}}
\right).
\]

Finally, observing that
\[
\frac{1}{n}\opt_k(\X)
= \inf_{C \in \C_k} L_n(C),
\qquad
\Delta_k(f) = \inf_{C \in \C_k} L(C),
\]
we obtain
\[
\left|
\frac{1}{n}\opt_k(\X) - \Delta_k(f)
\right|
\le
\sup_{C \in \C_k} |L_n(C) - L(C)|,
\]
which completes the proof.
\end{proof}

We are now ready to provide a proof for Theorem~\ref{thm:scaling-laws}. 

\begin{theorem}[Scaling laws]\label{thm:scaling-laws-app}
Suppose $\mathcal X = \{x_1,\dots,x_n\} \subset \R^D$ consists of i.i.d.\ samples
drawn from a distribution $f$ supported on a compact, smooth $d$-dimensional
manifold $\M \subset \R^D$, with density bounded above and below on its support.
Let $\varepsilon := 2/d$. Then, with probability at least
$1- \frac{1}{\operatorname{poly}(n)}$, the following hold:
\[
\beta_k(\mathcal X)
=
\left(1 + O_f\!\left(\sqrt{\frac{kD \log n}{n}}\right)\right) k^{\varepsilon},
\qquad
\eta(\mathcal X) \in O_f(1)\, n^{3\varepsilon/2}.
\]
\end{theorem}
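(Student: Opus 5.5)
The two claims are handled separately: the $\beta_k$ bound comes from combining Lemma~\ref{lem:zador-manifold} with Lemma~\ref{lem:finite-sample}, and the $\eta$ bound from a minimum-spacing argument on the i.i.d.\ sample.

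\textbf{The $\beta_k$ bound.} First apply Lemma~\ref{lem:finite-sample} twice, once with $k$ and once with $k=1$, each with $\delta = n^{-c}$, and take a union bound. With probability $1-2n^{-c}$ we then have
\[
\tfrac1n\opt_k(\X) = \Delta_k(f) \pm O_f\bigl(\sqrt{kD\log n/n}\bigr),
\qquad
\tfrac1n\opt_1(\X) = \Delta_1(f) \pm O_f\bigl(\sqrt{D\log n/n}\bigr).
\]
Writing $\beta_k(\X) = \frac{\opt_1(\X)/n}{\opt_k(\X)/n}$, I factor out $\Delta_1(f)/\Delta_k(f)$ and expand the numerator and denominator as $1\pm$ relative errors. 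Two inputs control this ratio:
\begin{itemize}
\item $\Delta_1(f)$ is a positive constant depending only on $f$, because the support has positive volume and so is not a point.
\item Lemma~\ref{lem:zador-manifold} gives $\Delta_k(f)\ge c\,k^{-\eps}$ and $\Delta_1(f)/\Delta_k(f)=\Theta_f(k^{\eps})$.
\end{itemize}
Putting these together gives $\beta_k(\X) = \Theta_f(k^{\eps})\,(1+\text{error})$. The constant in $\Theta_f$ is absorbed into the $O_f$ notation, as the statement implicitly does.

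The relative error in the denominator is $O_f\bigl(k^{\eps}\sqrt{kD\log n/n}\bigr)$. The bound in the statement has no $k^{\eps}$ factor in the error term, so I will either absorb $k^{\eps}$ into the $f$-dependent constant, treating $k$ as fixed relative to $n$, or note the regime $n\gg k^{1+2\eps}D\log n$ in which the error is $o(1)$. For this step I also need $1/(1-x)\le 1+2x$ for $x\le 1/2$.

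\textbf{The $\eta$ bound.} I bound the numerator and denominator of $\eta(\X)$ separately.
\begin{itemize}
\item \emph{Numerator.} $\max\|x_i-x_j\|\le \diam(J)=O_f(1)$ deterministically.
\item \emph{Denominator.} I need a lower bound on the minimum pairwise distance. For a fixed pair $i\ne j$, conditional on $x_i$ I bound $\Pr[\|x_j-x_i\|\le r]\le f_{\max}\,\vol_\M(B(x_i,r)\cap\M)\le C f_{\max} r^d$. This uses the same volume comparison as in the lower-bound part of Lemma~\ref{lem:zador-manifold}: an extrinsic ball meets $\M$ inside a geodesic ball of comparable radius.
\end{itemize}
A union bound over the $\binom n2$ pairs then gives $\Pr[\min_{i\neq j}\|x_i-x_j\|\le r]\le C' n^2 r^d$. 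I choose $r = n^{-3/d}$, so that $n^2 r^d = n^{-1}$ and the failure probability is $1/\mathrm{poly}(n)$. On the complement, $\eta(\X)\le O_f(1)\,n^{3/d}=O_f(1)\,n^{3\eps/2}$, since $\eps=2/d$. Taking $r=n^{-(2+c)/d}$ instead gives failure probability $n^{-c}$ at the cost of a slightly larger exponent; the stated exponent corresponds to $c=1$.

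\textbf{Main obstacle.} I expect the $\eta$ lower bound on pairwise distances to be the main work. It requires the uniform small-ball volume estimate $\vol_\M(B(p,r)\cap\M)\le Cr^d$ for all $p\in\M$ and all small $r$, which follows from compactness and the reach of $\M$ as in the chart comparison \eqref{eq:metric-compare}. It also requires $f$ to have a bounded density with respect to $\dvol_\M$.

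The $\beta_k$ part is routine apart from the bookkeeping for the $k^{\eps}$ factor in the relative error. A final union bound combines the three events, namely the two applications of Lemma~\ref{lem:finite-sample} and the minimum-spacing event, for total failure probability $1/\mathrm{poly}(n)$.
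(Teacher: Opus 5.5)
Your proposal follows the paper's proof step for step. For $\beta_k$ it applies Lemma~\ref{lem:finite-sample} twice (for $k$ and for $1$) and combines these with the Zador bounds of Lemma~\ref{lem:zador-manifold}; for $\eta$ it uses the deterministic diameter bound plus a union bound over pairs with $r=n^{-3/d}$. Where you differ, you are more careful than the paper. The relative error really carries the extra $k^{\eps}$ factor that the paper silently drops. The $\Theta_f(1)$ quantity $k^{-\eps}\Delta_1(f)/\Delta_k(f)$ must be absorbed into the constant exactly as you say, and neither argument actually yields a literal $(1+o(1))k^{\eps}$. Finally, the small-ball estimate needs the upper density bound $f_{\max}$, not the lower bound the paper cites.
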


\begin{proof}
We prove the two statements separately.

\emph{Analysis of $\beta_k(\mathcal X)$.}
By definition,
\[
\beta_k(\mathcal X)
=
\frac{\opt_1(\mathcal X)}{\opt_k(\mathcal X)},
\qquad
\opt_k(\mathcal X)
=
\min_{C:|C|=k} \sum_{i=1}^n \min_{c \in C} \|x_i-c\|^2.
\]
Define the normalized empirical distortion
\[
\frac{1}{n}\opt_k(\mathcal X).
\]
By Lemma~\ref{lem:finite-sample}, for any fixed $k$, with probability at least
$1-\frac{1}{\operatorname{poly}(n)}$,
\[
\left|
\frac{1}{n}\opt_k(\mathcal X) - \Delta_k(f)
\right|
\le
C_f \sqrt{\frac{kD \log n}{n}}.
\]
In particular, the same bound holds for $k=1$ with the factor $k$ removed.

On the other hand, by Zador’s theorem on manifolds
(Lemma~\ref{lem:zador-manifold} and its matching lower bound),
\[
\Delta_1(f) = \Theta_f(1),
\qquad
\Delta_k(f) = \Theta_f(k^{-2/d}).
\]
Combining these estimates yields, with high probability,
\[
\frac{1}{n}\opt_1(\mathcal X)
=
\Theta_f(1) + O_f\!\left(\sqrt{\frac{D\log n}{n}}\right),
\]
and
\[
\frac{1}{n}\opt_k(\mathcal X)
=
\Theta_f(k^{-2/d})
\left(
1 + O_f\!\left(\sqrt{\frac{k D\log n}{n}}\right)
\right),
\]
where the error term is of lower order provided
$n \gg k^{1+2/d}\log n$.

Taking the ratio, we obtain
\[
\beta_k(\mathcal X)
=
\frac{\opt_1(\mathcal X)}{\opt_k(\mathcal X)}
=
\frac{\frac{1}{n}\opt_1(\mathcal X)}{\frac{1}{n}\opt_k(\mathcal X)}
=
\left(1 + O_f\!\left(\sqrt{\frac{k D\log n}{n}}\right)\right) k^{2/d},
\]
which proves the first claim.

\emph{Analysis of $\eta(\mathcal X)$.}
By definition,
\[
\eta(\mathcal X)
=
\frac{\max_{i \neq j} \|x_i-x_j\|}{\min_{i \neq j} \|x_i-x_j\|}.
\]
Since $f$ is supported on a compact set, we have
\[
\max_{i \neq j} \|x_i-x_j\| \le \diam(\operatorname{supp} f) = O_f(1)
\quad \text{almost surely}.
\]
It remains to lower bound the minimum interpoint distance.

By the volume growth property of a smooth $d$-dimensional manifold and the lower
bound on the density of $f$, there exists a constant $c_f>0$ such that for all
sufficiently small $r$ and all $x \in \M$,
\[
\Pr(\|X-x\| \le r) \le c_f r^d.
\]
Applying a union bound over all unordered pairs $(i,j)$ with $i \neq j$, we obtain
\[
\Pr\!\left( \min_{i \neq j} \|x_i-x_j\| \le r \right)
\le
n^2 c_f r^d.
\]
Choosing $r = n^{-3/d}$ yields
\[
\Pr\!\left( \min_{i \neq j} \|x_i-x_j\| \le r \right)
\le
c_f n^2 \cdot \frac{1}{n^3}
=
\frac{1}{\operatorname{poly}(n)},
\]
for all sufficiently large $n$. Therefore, with probability at least
$1-\frac{1}{\operatorname{poly}(n)}$,
\[
\min_{i \neq j} \|x_i-x_j\| \ge c_f' n^{-3/d}
\]
for some constant $c_f'>0$. Combining this with the diameter bound yields
\[
\eta(\mathcal X)
=
\frac{O_f(1)}{\Omega_f(n^{-3/d})}
=
O_f\!\left(n^{3/d}\right)
=
O_f\!\left(n^{3\varepsilon/2}\right).
\]
\end{proof}

\newpage 
\section{Notation}

\begin{table}[ht] 
\centering
\begin{tabular}{|c|p{3cm}|p{12cm}|}
\hline
\textbf{\#} & \textbf{Symbol} & \textbf{Meaning} \\
\hline
\rownumber & $n$ & number of points in the dataset \\
\rownumber & $D$ & ambient dimension of the dataset \\ 
\rownumber & $k$ & number of clusters into which the dataset is to be partitioned \\ 
\rownumber & $\X = \{x_1,\dots,x_n\}$ & dataset in consideration \\ 
\rownumber &$\mu = \mu(\X)$ & mean of the dataset $\X$ \\
\rownumber & $\|\X\|_\F$ & frobenius norm of the dataset defined as $\|\X\|_\F^2 = \sum_{i=1}^n \|x_i\|^2$ \\ 
\rownumber & $C = \{c_1,\dots,c_k\}$ & a set of $k$ cluster centers \\ 
\rownumber & $\cost(\X,C)$ & $k$-means cost of dataset $\X$ with respect to $C$ \\ 
\rownumber & $\pi(\cdot|C)$& $D^2$ distribution with respect to centers $C$ given by  $\pi(x|C) = \frac{\cost(x,C)}{\cost(\X,C)}$ \\
\rownumber &$\kappa(\cdot|C)$ &  proposal distribution with respect to center $c_1$ given by $\kappa(x|C) = \frac{\|x\|^2+\|c_1\|^2}{\|\X\|_F^2+n \|c_1\|^2}$\\ 
\rownumber & $\opt_k(\X)$ & optimal $k$-means cost for dataset $\X$ \\ 
\rownumber & $\{s_1,\dots,s_k\}$ & an optimal set of cluster centers \\ 
\rownumber & $\X = \bigcup_{j = 1}^k \X_j$ & optimal clustering partition \\
\rownumber & $\L = \L_\rho(C)$& LSH data structure initialized by parameter $\rho$ into which the centers $C$ have been insert\\ 
\rownumber & $\Insert(c,\L)$ & operation to insert a center $c$ in $\L$\\
\rownumber & $\Query(x,\L)$& operation to query an approximate nearest neighbor of $x$ from $\L$\\ 
\rownumber & $\cost^\L(\X,C)$ &  defined as $\sum_{x \in \X}\cost(x,\Query(x,\L))$\\
\rownumber & $\pi^\L(\cdot|C)$&  $\pi^\L(x|C) = \frac{\cost^\L(x,C)}{\cost^\L(\X,C)}$\\ 
\rownumber & $t$ & during the $t$th iteration $t$ centers have been chosen\\ 
\rownumber &$C_t = \{c_1,\dots,c_t\}$ & centers chosen uptill $t$ iterations\\
\rownumber &$H_t,U_t,W_t$ & number of covered clusters, uncovered clusters and wasted iterations uptill iteration $t$\\ 
\rownumber & $\X = \H_t \cup \U_t$& partitioning of $\X$ into covered and uncovered clusters\\
\rownumber & $\cost_t(\mathcal{P})$& defined as $\cost(\mathcal{P},C_t)$\\ 
\rownumber & $d$ & dimension of the underlying manifold $\M$\\
\rownumber & $\eps$ & quantization exponent defined as $\eps := 2/d$\\
\rownumber & $\rho$ & LSH parameter\\
\rownumber & $\gamma$ & overall exponent defined as $\gamma := \eps + \rho$\\
\rownumber & $f$ & probability density function supported on the  underlying manifold $\M$\\

\hline
\end{tabular}
\caption{Index of notation used throughout the appendix}
\end{table}

\newpage 
\section{Theoretical Analysis of $\qkmeans$}\label{app:theoretical-analysis}

The goal of this section is to describe our proof techniques for the theoretical analysis of the $\qkmeans$ seeding algorithm, which we reproduce here for convenience. 
\begin{algorithm}[H]
\caption{\textsc{QKMeans}$(\X, k, m)$}
\label{alg:qkmeans-single}
\begin{enumerate}
    \item \textbf{Preprocessing:}
    \begin{enumerate}
        \item (Optional) For $\eps_{\operatorname{JL}} \in (0,1/4)$, apply the JL-like dimensionality reduction from
        \cite{makarychev2019jl} to $\X$, reducing to dimension
        \[
            D \in O\!\left(\frac{\log(k/\eps_{\operatorname{JL}})}{\eps^2_{\operatorname{JL}}}\right)
        \]
        with distortion $(1+\eps_{\operatorname{JL}})$.
        
        \item Compute the mean $\mu \gets \frac{1}{n}\sum_{x \in \X} x$.
        
        \item Center each point: $x \gets x - \mu$ for all $x \in \X$.
        
        \item Compute $\|\X\|_{\operatorname{F}}^2 \gets \sum_{x \in \X} \|x\|^2$.
        
        \item Compute $\kappa(x) \gets \|x\|^2 / \|\X\|_{\operatorname{F}}^2$ for all
        $x \in \X$.
    \end{enumerate}

    \item \textbf{Initialization:}
    \begin{enumerate}
        \item Initialize the data structure $\textsc{Init}(\L, \rho)$.
        
        \item Sample $c_1 \sim \textsc{Uniform}(\X)$.
        
        \item Set $C_1 \gets \{c_1\}$ and insert $c_1$ into $\L$.
    \end{enumerate}

    \item \textbf{Main loop:}
    \begin{enumerate}
        \item \textbf{for} $t = 2$ \textbf{to} $k$ \textbf{do}
        \begin{enumerate}
            \item Initialize ${iter} \gets 0$ and sample $s \sim \textsc{Uniform}(\X)$.
            
            \item \textbf{while} ${iter} < m \ln k$ \textbf{do}
            \begin{enumerate}
                \item ${iter} \gets {iter} + 1$.
                
                \item Sample $x \sim \kappa(\cdot \mid C_{t-1})$, where
                \[
                    \kappa(x \mid C_{t-1})
                    :=
                    \frac{\|x\|^2 + \|c_1\|^2}
                         {\|\X\|_{\operatorname{F}}^2 + n\|c_1\|^2}.
                \]
                
                \item Sample $R \sim \textsc{Uniform}(0,1)$.
                
                \item Compute
                \[
                    r \gets
                    \frac{\cost(x, \Query(\L, x))}
                         {2 \rho^{-1}\bigl(\|x\|^2 + \|c_1\|^2\bigr)}.
                \]
                
                \item \textbf{if} $R \le r$ \textbf{then}
               
                    Set $s \gets x$.
                    \textbf{break}.
                
            \end{enumerate}
            
            \item Set $c_t \gets s$, insert $c_t$ into $\L$, and update
            \[
                C_t \gets C_{t-1} \cup \{c_t\}.
            \]
        \end{enumerate}
    \end{enumerate}

    \item \textbf{Return} $C_k$.
\end{enumerate}
\end{algorithm}

\subsection{Preprocessing}\label{subsec:data structure}

The Preprocessing step ($\textsc{PreProcess}(\X)$, Algorithm~\ref{proc:preproc}) clearly takes a time of $O(nD)$ as it involves a JL-like linear transformation on the dataset, followed by a couple of for loops iterating over the transformed $D$-dimensional dataset of $n$ points.

Consider the vector $v_{\X} \in \R^{n}$ given by $v_{\X}(x) = \|x\|$. Define $\kappa(x) = \frac{\|x\|^2}{\|\X\|_F^2}$ as a distribution over $\X$. We will use a (complete) binary tree data structure to sample from $\kappa(\cdot)$. The leaves of the binary tree correspond to the entries of $v_\X$ and store weight $v_\X(x)^2$ along with the sign of $v_\X(x)$. The internal nodes also store a weight that is equal to the sum of weights of its children. To sample from $\kappa(\cdot)$, we traverse the tree, choosing either to go left or right at each node with probability proportional to the weight of its two children until reaching the leaves. The binary tree similarly supports querying and updating the entries of $v_\X$.

\begin{figure}[h]
\centering
\begin{tikzpicture}[
  every node/.style = {font=\small},
  level 1/.style = {sibling distance=40mm, level distance = 10mm},
  level 2/.style = {sibling distance=20mm, level distance = 10mm},
  level 3/.style = {sibling distance=10mm, level distance = 10mm},
  edge from parent/.style = {draw, -latex}
]

\node {$\| v \|^2$}
  child {node {$v(1)^2 + v(2)^2$}
    child {node {$v(1)^2$}
      child {node {$\operatorname{sign}(v(1))$}}
    }
    child {node {$v(2)^2$}
      child {node {$\operatorname{sign}(v(2))$}}
    }
  }
  child {node {$v(3)^2 + v(4)^2$}
    child {node {$v(3)^2$}
      child {node {$\operatorname{sign}(v(3))$}}
    }
    child {node {$v(4)^2$}
      child {node {$\operatorname{sign}(v(4))$}}
    }
  };

\end{tikzpicture}

\caption{Data structure for sampling from a vector $v \in \R^4$}

\end{figure}
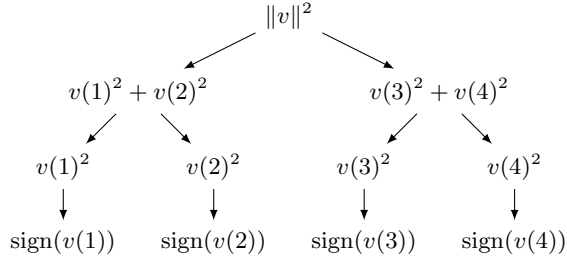

We state this formally following \cite{tang_19}, in which such data structures, called  \textit{sample and query access data structures} were introduced. 

\begin{lemma} \label{lem:bst} (Lemma 3.1 in \cite{tang_19})
There exists a data structure storing a vector $v \in \R^n$ with $\nu$ nonzero entries in $\ O(\nu \log n)$ space which supports the following operations:
\begin{itemize}
    \item Reading and updating an entry of $v$ in $O(\log n)$ time
    \item Finding $\|v\|^2$ in $O(1)$ time
    \item Generating an independent sample $i \in \{1,\dots,n\}$ with probability $\frac{v(i)^2}{\sum_{j = 1}^n v(j)^2}$ in $O(\log n)$ time. 
\end{itemize}

\end{lemma}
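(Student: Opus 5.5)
The plan is to build a complete binary tree (a segment tree) over the index set $\{1,\dots,n\}$, padding $n$ up to the next power of two with zero entries. The tree has depth $\lceil \log_2 n\rceil$. Each leaf $i$ stores $v(i)^2$ and $\operatorname{sign}(v(i))$, and each internal node stores the sum of the squared entries of the leaves in its subtree, so the root stores $\|v\|^2$. To meet the $O(\nu\log n)$ space bound, we will not allocate all $2n$ nodes. Instead we materialise only the nodes lying on root-to-leaf paths of nonzero entries, for example via a hash map or pointers keyed by node position, and treat any absent node as having weight $0$. Since each of the $\nu$ nonzero leaves contributes at most $\lceil\log_2 n\rceil+1$ nodes, the space is $O(\nu\log n)$.

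Next we verify the operations.
\begin{itemize}
\item \emph{Reading} $v(i)$ means walking from the root to leaf $i$ (or looking it up directly) and returning $\operatorname{sign}(v(i))\sqrt{v(i)^2}$. This takes $O(\log n)$ time.
\item \emph{Updating} $v(i)$ to a new value $a$ means changing the leaf weight by $\Delta = a^2 - v(i)^2$ and adding $\Delta$ to every ancestor. There are $O(\log n)$ ancestors, and missing nodes are created as needed, so this also takes $O(\log n)$ time. The invariant that each internal node's weight equals the sum of its children's weights is preserved.
\item \emph{Norm query}: $\|v\|^2$ is simply the root weight, available in $O(1)$.
\end{itemize}

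For sampling we start at the root. At a node with children of weights $w_L$ and $w_R$, we move left with probability $w_L/(w_L+w_R)$ and right otherwise, continuing until we reach a leaf, which gives $O(\log n)$ steps. Correctness follows from a telescoping product. Along the unique path to leaf $i$, the transition probabilities multiply to
\[
\frac{w_{\text{node}_1}}{w_{\text{root}}}\cdot\frac{w_{\text{node}_2}}{w_{\text{node}_1}}\cdots\frac{v(i)^2}{w_{\text{node}_{h-1}}}=\frac{v(i)^2}{\|v\|^2},
\]
where we use that each node's weight is the sum of its children's weights. Fresh randomness at each sample makes the samples independent.

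The only point needing care is the space bound with sparse storage, i.e.\ ensuring that zero-weight subtrees are never allocated and are never descended into during sampling. The latter is automatic, because a subtree of weight $0$ is chosen with probability $0$. The rest is routine; this is exactly Lemma 3.1 of \cite{tang_19}.
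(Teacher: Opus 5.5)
Your proposal is correct and follows the construction the paper describes before citing Tang: a complete binary tree whose leaves hold $v(i)^2$ and the sign, whose internal nodes hold subtree sums, and top-down sampling whose branch probabilities telescope to $v(i)^2/\|v\|^2$. Your lazy allocation of only the nodes on root-to-leaf paths of nonzero entries is what actually gives the stated $O(\nu\log n)$ space bound, which the paper's dense padded tree does not address. To keep that bound under updates that set entries to zero, you should also prune nodes whose weight drops to $0$ along the updated path, which costs only $O(\log n)$ extra time.
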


Note that if $n$ is not a perfect power of $2$ then we can find a $n' \in \N$ which is a perfect power of 2 such that $n' < n < 2n' $. We can then set the remaining $2n' - n$ data points to have zero norm and use this dataset instead to construct the complete binary tree. Thus the following corollary is immediate.

\begin{corollary}\label{cor:preprocessing time}
    There is a data structure that can be prepared in $O(nD)$ time which enables generating a sample from $\kappa(\cdot)$ in $O(\log n)$ time.
\end{corollary}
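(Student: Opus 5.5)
The plan is to instantiate Lemma~\ref{lem:bst} with the vector $v_\X\in\R^n$ given by $v_\X(x)=\|x\|$, where the points are the centered (and optionally JL-projected) points produced by \textsc{PreProcess}$(\X)$. I would argue in three steps: computing $v_\X$, building the tree, and sampling from it.

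\textbf{Computing $v_\X$.} The preprocessing itself takes $O(nD)$ time. Computing the mean and centering are single passes over the $nD$ coordinates. Computing $\|x\|^2$ for every $x$, and hence $\|\X\|_F^2$, is another such pass. If the JL step of \cite{makarychev2019jl} is used, it is a linear map that the algorithm description already charges at $O(nD)$. So all squared norms $v_\X(x)^2$ are available in $O(nD)$ time.

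\textbf{Building the tree.} If $n$ is not a power of two, pad with $2n'-n<n$ zero-norm dummy leaves, as in the remark preceding the corollary. This at most doubles the number of leaves and gives zero sampling probability to the dummies. Then build the complete binary tree bottom-up: each leaf stores $\|x\|^2$ and a sign, and each internal node stores the sum of its two children. A complete binary tree with $O(n)$ leaves has $O(n)$ internal nodes, each filled in $O(1)$ time. So construction costs $O(n)$ beyond computing the norms, for a total of $O(nD)$. This is consistent with the $O(\nu\log n)$ space bound of Lemma~\ref{lem:bst}, with $\nu\le n$.

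\textbf{Sampling.} By Lemma~\ref{lem:bst}, a sample is drawn by descending from the root, choosing each child with probability proportional to its stored weight. The probability of reaching leaf $x$ is the telescoping product of these ratios, which equals $\|x\|^2/\|\X\|_F^2=\kappa(x)$. The tree has depth $O(\log n)$, so one sample takes $O(\log n)$ time.

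There is no real obstacle here. The only point to check is that the JL step and the centering fit within $O(nD)$ rather than $O(nD\log n)$, which the preprocessing description already asserts. I also note as a side remark that the proposal $\kappa(\cdot\mid C)$ used later mixes $\kappa$ with the uniform distribution. With weights $\|X\|_F^2$ and $n\|c_1\|^2$, it is sampled by first flipping a biased coin, then calling the tree or drawing a uniform index, at no extra asymptotic cost.
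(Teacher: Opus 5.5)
Your proposal is correct and follows the paper's argument: after the $O(nD)$ preprocessing, you build the complete binary tree of Lemma~\ref{lem:bst} over $v_\X(x)=\|x\|$, padded to a power of two, and sample by a root-to-leaf descent in $O(\log n)$ time. Your explicit bottom-up construction in $O(n)$ time is a small but useful addition. The paper leaves the build time implicit, and building the tree by $n$ updates of Lemma~\ref{lem:bst} would only give $O(n\log n)$, which need not be $O(nD)$ when $D$ is small, e.g.\ after the JL reduction.
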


Looking ahead, the purpose of preprocessing is to enable fast sampling from the distribution $\kappa(\cdot|C)$ used in $\textsc{Sample}(\L,C,m)$, Algorithm~\ref{proc:sample}.

\subsection{ANNS}\label{subsec:lsh}

In order to employ the rejection sampling idea, the probability of accepting a sampled point $x\in\X$ needs to be proportional to $\cost(x,S)$, where $S$ is the current set of centers. Again, computing this quantity takes $\Theta(kD)$ time, which contributes $\Theta(k^2D)$ time on performing this sampling $k$ times. Thus, we apply another optimization technique; we use an approximate nearest neighbor data structure to approximate the distance between $x$ and the closest center.

In particular, we utilize the following data structure, developed by \citep{cohenaddad2020fast}, which is based on locality-sensitive hash functions \citep{andoni2008near}.

\begin{lemma}{(LSH-based ANNS data structure)}\label{lem:lsh}
For any set of points $P = \{p_1,\dots,p_t\} \subset \R^D$ and parameter $\rho < 1$, there exists a data structure $\L$ with operations $\Insert$ and $\Query$ such that with probability atleast $1 - 1/t$, the following guarantees hold \footnotemark{}
\begin{enumerate}
    \item $\Insert(\L,p)$ inserts the point $p \in P$ into $\L$ in time $O\left(D \log \eta (t \log \eta)^{\rho}\right)$.
    \item $\Query(\L,p)$ returns a point such that $\|p - \Query(\L,p)\| \leq \frac{1}{\sqrt{\rho}} \cdot \min_{q \in \L}\|p-q\|$ and the query time is $O\left(D \log \eta (t \log \eta)^{\rho}\right)$.
    \item $\L$ is monotone under insertions: the distance between $p$ and $\Query(\L,p)$ is non-increasing after inserting more points.
\end{enumerate}
Here, $\eta$ is the aspect ratio of the set $P$ defined as $\eta(P) = \frac{\max_{x \neq y \in P} \|x-y\|}{\min_{x \neq y \in P} \| x-y \|}$.
\end{lemma}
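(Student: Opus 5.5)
This lemma is imported essentially verbatim from \citep{cohenaddad2020fast}. So the plan is to reconstruct their construction from the standard Euclidean LSH primitive of \citep{andoni2008near} rather than to prove anything new.

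\textbf{Fixed-scale building block.} For a fixed radius $r$ and approximation factor $c = 1/\sqrt{\rho}$, the ball-carving LSH family of \citep{andoni2008near} has sensitivity exponent $1/c^2 = \rho$ (up to $o(1)$ terms). Concatenating $O(\log t)$ hash functions and using $L = O(t^{\rho})$ independent tables yields an $(r,cr)$-near-neighbor structure. This structure has the following guarantees:
\begin{itemize}
  \item if some inserted point lies within distance $r$ of the query, it returns a point within distance $cr$, with constant probability;
  \item insertion and query each cost $O(D\, t^{\rho})$ hash evaluations, amortized over collisions with far points.
\end{itemize}

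\textbf{Scales, query rule and monotonicity.}
\begin{enumerate}
  \item Rescale so that the minimum interpoint distance of $P$ is $1$. Build one such structure for each scale $r_j = 2^j$, $j = 0,\dots,\lceil\log \eta\rceil$, which gives $O(\log \eta)$ levels.
  \item Boost the success probability by repeating each level $O(\log(t\log\eta))$ times. A union bound over all $t$ points and all levels then gives overall success probability $1-1/t$.
  \item This repetition is exactly what turns $t^{\rho}$ into $(t\log\eta)^{\rho}$ in the running time. Together with the $\log\eta$ levels and the $D$ cost per hash, this gives the stated bound $O(D\log\eta\,(t\log\eta)^{\rho})$.
  \item A query scans the levels from smallest to largest and returns the first hit.
  \item For correctness, take the smallest scale $r_j$ at or above the true nearest-neighbor distance. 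That level succeeds and returns a point within $c\,r_j \le 2c\cdot \mathrm{dist}$.
  \item To get exactly the factor $1/\sqrt{\rho}$, use a finer geometric grid of scales $(1+\epsilon)^j$. This absorbs the rounding loss into the constant in $\rho$; alternatively, reparametrize $\rho$.
  \item Monotonicity (item 3) comes from having the structure return the best candidate found so far across all levels. Candidates are only added over time and never removed, so the returned distance cannot increase after insertions.
\end{enumerate}

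\textbf{Main obstacle.} The delicate step is making the time bound hold deterministically with high probability rather than only in expectation. Far points colliding in a bucket can blow up query time. I would handle this as in \citep{andoni2008near}: cap each scan at $O(L)$ examined candidates. By Markov's inequality the cap is exceeded only with constant probability, and the $O(\log)$ repetitions already in place absorb that failure. This keeps the stated time and the $1-1/t$ guarantee simultaneously.
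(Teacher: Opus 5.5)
The paper gives no proof of this lemma; it imports it from \citep{cohenaddad2020fast}, so your citation matches what the paper does. The reconstruction you add, however, has two steps that do not go through as written.

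\emph{Monotonicity.} ``Candidates are only added and never removed'' would give a non-increasing answer only if a query returned the closest of \emph{all} inserted points colliding with it at any level. That is incompatible with the two features your time bound relies on: the first-hit rule over levels and the cap of $O(L)$ inspected candidates. Suppose the current answer $a$ was found at level $j+1$, and a newly inserted $b$ with $r_j<\|p-a\|<\|p-b\|\le c\,r_j$ collides with $p$ at level $j$ (where $a$ does not). Then the first hit moves to level $j$ and the returned distance increases. Likewise, new far points landing in $p$'s buckets can exhaust the cap before the previously found good candidate is reached. Item 3 therefore needs an explicit mechanism, e.g.\ store for each query point the best point ever returned, and output the closer of it and the fresh LSH answer. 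Item 2 survives, because the output is no worse than the fresh answer, which is a $c$-approximation to the current (only shrinking) nearest-neighbour distance; item 3 then holds by construction. If this memo is what you meant by ``best candidate found so far'', say so explicitly, since the structural reason you give does not yield it.

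\emph{Range of scales.} Step 5 assumes the query's nearest-neighbour distance is at least $r_0=\sigma:=\min_{x\neq y\in P}\|x-y\|$. For a query at distance $0$ (a point already in $\L$) or below $\sigma$, level $r_0$ may return a different point at distance up to $c\sigma$, so the ratio is unbounded. This case arises in the paper: \Query{} is called on arbitrary $x\in\X$, including current centres, and Lemma~\ref{lem:away} needs $\cost^\L(x,S)=0$ for $x\in S$. The fix is an extra level at $r=\sigma/(c+2)$, where any returned $p'$ satisfies $\|p'-p^*\|\le(c+1)r<\sigma$ and hence $p'=p^*$, plus $O(\log(c+2))$ levels bridging up to $\sigma$. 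At the top, for queries outside $P$, use levels up to $\diam(P)/(c-1)$ and fall back to an arbitrary inserted point, which is then within a factor $c$ of optimal. Also, since the centres are chosen online, $\sigma$ and $\diam(P)$ are not known in advance; normalizing by the separation and diameter of $\X$ fixes this, at the price of $\eta(\X)$ in place of $\eta(P)$.

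A minor point: repeating each level $O(\log(t\log\eta))$ times multiplies the cost by $\log(t\log\eta)$; it does not turn $t^{\rho}$ into $(t\log\eta)^{\rho}$. Your bound therefore matches the stated one only up to polylogarithmic factors, which the imported statement is loose about anyway.
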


\footnotetext{For the rest of the analysis, we condition on the event that the data structure is \emph{successful}.}

We shall be using this data structure $\L$ to store the sampled centers $S$ and to query the approximate nearest center from $S$. To this end, we define $\cost^\L(P,S) = \sum_{p \in P} \|p-\Query(\L,p)\|^2$ and $\pi^\L(p|S) = \frac{\cost^\L(p,S)}{\cost^\L(P,S)}$ for a set of points $P$ and a set of centers $S \subset \R^D$. Conditioned on the data structure being successful we have the following :

\begin{lemma}[LSH approximation bounds] \label{lem:away}
    Let $X = \{x_1,\dots,x_n\} \subset \R^D$ be a dataset and $S \subset X$ be a set of centers sampled from $X$. Suppose that the data structure $\L$ with parameter $\rho
    < 1$ is successful and each $c \in S$ is inserted into the data structure $\L$. Then for each $x \in X$, the following holds
    \begin{equation*}
        \cost(x,S) \leq \cost^\L(x,S) \leq \rho^{-1} \cdot \cost(x,S)
    \end{equation*}
    from which we also have
    \begin{equation*}
        \rho \cdot \pi(x|S) \leq \pi^\L(x|S) \leq \rho^{-1} \cdot \pi(x|S)
    \end{equation*}
\end{lemma}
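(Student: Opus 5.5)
The argument follows directly from the query guarantee in Lemma~\ref{lem:lsh}, applied pointwise and then summed. Fix $x \in X$ and let $q = \Query(\L,x)$. The data structure returns one of the inserted centers, so $q \in S$. Hence $\|x-q\| \ge \min_{c\in S}\|x-c\|$, and squaring gives the lower bound $\cost(x,S) \le \cost^\L(x,S)$. The upper bound comes from item~2 of Lemma~\ref{lem:lsh}, which we may use because we condition on $\L$ being successful and every $c \in S$ has been inserted. It gives $\|x-q\| \le \rho^{-1/2}\min_{c\in S}\|x-c\|$, and squaring yields $\cost^\L(x,S) \le \rho^{-1}\cost(x,S)$.

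Next we sum the pointwise sandwich over all $x \in X$:
\[
\cost(X,S) \le \cost^\L(X,S) \le \rho^{-1}\cost(X,S).
\]
For the distributions, we bound the ratio $\pi^\L(x|S) = \cost^\L(x,S)/\cost^\L(X,S)$ by pairing the numerator and denominator bounds in opposite directions. For the upper bound, the numerator is at most $\rho^{-1}\cost(x,S)$ and the denominator is at least $\cost(X,S)$, so $\pi^\L(x|S) \le \rho^{-1}\pi(x|S)$. For the lower bound, the numerator is at least $\cost(x,S)$ and the denominator is at most $\rho^{-1}\cost(X,S)$, so $\pi^\L(x|S) \ge \rho\,\pi(x|S)$.

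There is no real obstacle here. The only points needing care are the following.
\begin{itemize}
\item $\Query$ must return an actual element of $S$. This is implicit in Lemma~\ref{lem:lsh}, since the query returns a point stored in $\L$.
\item The degenerate case $\cost(X,S)=0$ must be handled. Then every point of $X$ lies in $S$, and both distributions are undefined or trivially equal, so we assume $\cost(X,S)>0$.
\end{itemize}
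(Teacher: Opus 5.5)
Your proposal is correct and matches the paper: the paper states this lemma without a written proof, treating it as an immediate consequence of the query guarantee in Lemma~\ref{lem:lsh}. Your pointwise sandwich, summed over $X$ and then with numerator and denominator bounds paired in opposite directions, is exactly the argument it leaves implicit; you are also right to flag that $\Query$ must return an inserted center and that $\cost(X,S)>0$ is needed.
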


\subsection{Analysis of $(\rho,\delta)$-$k$-means++}\label{app:eps-delta-analysis}

The speedup in our algorithm can be attributed to two things; the first being using rejection sampling for sampling for the $D^2$ distribution and second, using the approximate nearest center for fast computation of the rejection sampling probability. Our goal is to bound the expected cost of the output of our algorithm. To do this, we analyze the following \emph{abstract} version of $k$-means++, which we call $(\rho,\delta)$-$k$-means++.  

\begin{algorithm}
\caption{$(\rho,\delta)$-$k$-means++}
\label{alg:rho-delta}
    \textbf{Input:} dataset $\X = \{x_1,\dots,x_n\} \subset \R^D$, parameters $0 < \rho,\delta < 1$ and $k \in \mathbb{N}$
    \begin{algorithmic}[1]
        \STATE $c_1 \gets \operatorname{Uniform}(\X)$
        \STATE $\L \gets$ ANNS data structure with parameter $\rho$
        \STATE $\Insert(c_1,\L)$ ; $C \gets \{c_1\}$
        \FOR{$i = 2$ to $k$}{

            \STATE Sample $y$ with probability $\pi_{(\rho,\delta)}(y|C) := (1-\delta) \cdot \pi^{\L}(y|C) + \delta \cdot \frac{1}{n} $
            \STATE $c_i \gets y$; $\Insert(c_i,\L)$; $C \gets C \cup \{c_i\}$
        
        }
        \ENDFOR
        \STATE \textbf{Output} $C$
    \end{algorithmic}
\end{algorithm}

\subsubsection{The perturbed distribution}

{}
\begin{definition}[\emph{$(\rho,\delta)$ perturbed $D^2$ distribution}]
    Let $\X = \{x_1,\dots,x_n\} \subset \R^d$ be a dataset and $S \subset X$ be a set of centers sampled from $X$. Suppose that the data structure $\L$ with parameter $\rho
    < 1$ is successful and each $c \in S$ is inserted into the data structure $\L$. Then for each $x \in \X$, we define the $(\rho,\delta)$ perturbed $D^2$ distribution as :
    \begin{equation*}
        \pi_{(\rho,\delta)}(x|S) = (1 - \delta)\cdot \pi^\L(x|S) + \delta \cdot \frac{1}{n}
    \end{equation*}
\end{definition}

Now, we provide some useful bounds that a point is chosen according to the perturbed distribution, conditioned on the event that it belongs to a specified set of points beforehand. These bounds will be useful later in several places.  

{}
\begin{lemma}(Bounds on perturbed sampling probabilities) \label{lem:bounds}
Let $\X = \{x_1,\dots,x_n\} \subset \R^d$ be a dataset and $\Y \subseteq \X$ be an arbitrary non-empty subset of $\X$. Let $C \subset \X$ be a set of centers and suppose the data structure $\L$ with parameter $\rho < 1$ is successful and each $c \in C$ is inserted into $\L$. For any point $y \in \Y$ and parameter $\delta \in (0,1)$, the following hold:
\begin{enumerate}
    \item $\Pr_{y \sim \pi_{(\rho,\delta)}(\cdot|C)}[y \mid y \in \Y] \leq \frac{\cost^\L(y,C)}{\cost^\L(\Y,C)} + \frac{\delta}{1 - \delta} \cdot \frac{1}{n} \cdot \frac{\cost^\L(\X,C)}{\cost^\L(\Y,C)}$
    \item $\Pr_{y \sim \pi_{(\rho,\delta)}(\cdot|C)}[y \mid y \in \Y] \geq \frac{\cost^\L(y,C)}{\cost^\L(\Y,C)} - \frac{\delta}{1 - \delta} \cdot \frac{|\Y|}{n} \cdot \frac{\cost^\L(y,C) \cdot \cost^\L(X,C)}{\cost^\L(\Y,C)^2}$ 
\end{enumerate}
\end{lemma}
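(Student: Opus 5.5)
Both bounds come from writing the conditional probability explicitly and then manipulating the resulting fraction by hand. Set $a := \cost^\L(y,C)$, $A := \cost^\L(\Y,C)$ and $T := \cost^\L(\X,C)$. By the definition of $\pi_{(\rho,\delta)}$,
\[
\Pr[y \mid y\in\Y] = \frac{(1-\delta)\,a/T + \delta/n}{(1-\delta)\,A/T + \delta|\Y|/n}.
\]
Multiply numerator and denominator by $T/(1-\delta)$ and put $\lambda := \frac{\delta}{1-\delta}\cdot\frac{T}{n}$. This gives the single expression
\[
\Pr[y \mid y\in\Y] = \frac{a+\lambda}{A+\lambda|\Y|}.
\]
Both inequalities will be read off from it. If $A=0$ the right-hand sides are undefined, so I assume $A>0$ throughout.

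For the upper bound (item 1), drop the nonnegative term $\lambda|\Y|$ from the denominator. This gives
\[
\frac{a+\lambda}{A+\lambda|\Y|} \le \frac{a+\lambda}{A} = \frac{a}{A} + \frac{\delta}{1-\delta}\cdot\frac{1}{n}\cdot\frac{T}{A},
\]
which is exactly the claimed bound.

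For the lower bound (item 2), first drop $\lambda$ from the numerator. Then use $\frac{1}{A+x} \ge \frac{1}{A} - \frac{x}{A^2}$ for $x\ge 0$, which holds because $\frac1A-\frac1{A+x} = \frac{x}{A(A+x)} \le \frac{x}{A^2}$. Taking $x=\lambda|\Y|$ yields
\[
\frac{a+\lambda}{A+\lambda|\Y|} \ge a\Bigl(\frac1A - \frac{\lambda|\Y|}{A^2}\Bigr) = \frac{a}{A} - \frac{\delta}{1-\delta}\cdot\frac{|\Y|}{n}\cdot\frac{a\,T}{A^2},
\]
which matches item 2.

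The argument is entirely algebraic and has no real obstacle. The one point needing care is the normalization of $\pi^\L$: the global denominator $T$ must be carried correctly when converting the mixture to conditional form, since that is what produces the factor $\cost^\L(\X,C)$ in the error terms.
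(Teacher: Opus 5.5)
Your proof is correct and follows the same route as the paper: for item 1 you drop the $\delta|\Y|/n$ term from the denominator, and for item 2 you drop the $\delta/n$ term from the numerator and then use $(1+x)^{-1}\ge 1-x$, which is exactly your inequality $\frac{1}{A+x}\ge \frac{1}{A}-\frac{x}{A^2}$. Your normalization via $\lambda$ is slightly cleaner, because it avoids the paper's step of factoring out $\cost^\L(y,C)$, which is zero when $y\in C$.
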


\begin{proof}
Let us first compute the probability that a point sampled from the perturbed $D^2$ distribution belongs to $\Y$ :  
\begin{equation*}
\Pr_{y \sim \pi_{(\rho,\delta)}(\cdot|C)}[y \in \Y] = \sum_{y' \in \Y} \left( (1 - \delta) \frac{\cost^\L(y',C)}{\cost^\L(\X,C)} + \delta \cdot \frac{1}{n} \right) 
= (1 - \delta) \frac{\cost^\L(\Y,C)}{\cost^\L(\X,C)} + \delta \cdot \frac{|\Y|}{n}
\end{equation*}
Hence the required conditional probability is
\begin{equation*}
\Pr_{y \sim \pi_{(\rho,\delta)}(\cdot|C)}[y\mid y \in \Y] = \frac{(1 - \delta) \frac{\cost^\L(y,C)}{\cost^\L(\X,C)} + \delta \cdot \frac{1}{n}}{(1 - \delta) \frac{\cost^\L(\Y,C)}{\cost^\L(\X,C)} + \delta \cdot \frac{|\Y|}{n}}
\end{equation*}

For part 1, we have:
\begin{align*}
\frac{(1 - \delta) \frac{\cost^\L(y,C)}{\cost^\L(\X,C)} + \delta \cdot \frac{1}{n}}{(1 - \delta) \frac{\cost^\L(\Y,C)}{\cost^\L(\X,C)} + \delta \cdot \frac{|\Y|}{n}} &\leq \frac{(1 - \delta) \frac{\cost^\L(y,C)}{\cost^\L(\X,C)} + \delta \cdot \frac{1}{n}}{(1 - \delta) \frac{\cost^\L(\Y,C)}{\cost^\L(\X,C)}} \\
&= \frac{\cost^\L(y,C)}{\cost^\L(\Y,C)} + \frac{\delta}{1 - \delta} \cdot \frac{1}{n} \cdot \frac{\cost^\L(\X,C)}{\cost^\L(\Y,C)}
\end{align*}

For part 2, we have:
\begin{align*}
\frac{(1 - \delta) \frac{\cost^\L(y,C)}{\cost^\L(\X,C)} + \delta \cdot \frac{1}{n}}{(1 - \delta) \frac{\cost^\L(\Y,C)}{\cost^\L(\X,C)} + \delta \cdot \frac{|\Y|}{n}} &= \frac{\cost^\L(y,C)}{\cost^\L(\Y,C)} \left( \frac{1 + \frac{\delta}{1 - \delta} \cdot \frac{\cost^\L(\X,C)}{n \cdot \cost^\L(y,C)}}{1 + \frac{\delta}{1 - \delta} \cdot \frac{|\Y| \cdot \cost^\L(\X,C)}{n \cdot \cost^\L(\Y,C)}} \right) \\
&\geq \frac{\cost^\L(y,C)}{\cost^\L(\Y,C)} \left( 1 + \frac{\delta}{1 - \delta} \cdot \frac{|\Y| \cdot \cost^\L(\X,C)}{n \cdot \cost^\L(\Y,C)} \right)^{-1} \\
&\geq \frac{\cost^\L(y,C)}{\cost^\L(\Y,C)} \left( 1 - \frac{\delta}{1 - \delta} \cdot \frac{|\Y| \cdot \cost^\L(\X,C)}{n \cdot \cost^\L(\Y,C)} \right)
\end{align*}
where in the last step we used $(1 + x)^{-1} \geq 1 - x$ for any $x \geq 0$.
\end{proof}

\subsubsection{Some useful lemmas}

Suppose $S = \{s_1,\dots,s_k\}$ is am optimal set of centers for the dataset $\X$ so that $\cost(\X,S) = \opt_k(\X)$. Let $\X_j$ denote the set of points from $\X$ whose nearest center is $s_j$. This induces a partition $\X = \bigcup_{j \in [k]} \X_j$ where $\X_i \cap  \X_j = \emptyset$ and $s_j = \mu(\X_j)$. Notice that we can decompose the optimal $k$-means cost as $\opt_k(\X) = \sum_{j \in [k]} \opt_1(\X_j)$  Let us state a few results from \citep{arthur2007kmeanspp} that will be useful.

{}
\begin{lemma}(First center cost, Lemma 3.1 of \citep{arthur2007kmeanspp}) \label{lem:first-center}
    Let $c_1$ be the first center chosen by Algorithm~\ref{alg:rho-delta}. For an optimal choice of cluster centers $S$, let $\X_j$ denote the cluster corresponding to the $j$th center $s_j$ for $j\in[k]$. Then:
    \begin{equation*}
        \E[\cost(\X_j,c_1) | c_1 \in \X_j] = 2\opt_1(\X_j)
    \end{equation*}
\end{lemma}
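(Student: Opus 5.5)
The plan is to use that the first center $c_1$ is drawn uniformly from $\X$ in Algorithm~\ref{alg:rho-delta}. Conditioned on $c_1 \in \X_j$, it is therefore uniform over $\X_j$. The LSH structure and the perturbation parameter $\delta$ only affect later iterations, so they play no role here. This reduces the claim to the classical computation of Lemma 3.1 in \citep{arthur2007kmeanspp}.

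\textbf{Step 1 (reduce to an average).} Since $\cost(\X_j,\{c_1\}) = \sum_{x\in\X_j}\|x-c_1\|^2$ and $c_1$ is uniform on $\X_j$, we get
\[
\E[\cost(\X_j,c_1)\mid c_1\in\X_j] = \frac{1}{|\X_j|}\sum_{c\in\X_j}\sum_{x\in\X_j}\|x-c\|^2 .
\]

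\textbf{Step 2 (centroid identity).} For any point $z$, the standard bias--variance identity gives
\[
\sum_{x\in\X_j}\|x-z\|^2 = \sum_{x\in\X_j}\|x-\mu(\X_j)\|^2 + |\X_j|\,\|z-\mu(\X_j)\|^2 .
\]
To prove it, expand $x-z = (x-\mu(\X_j)) + (\mu(\X_j)-z)$; the cross term vanishes because $\sum_{x\in\X_j}(x-\mu(\X_j))=0$. The first term on the right equals $\opt_1(\X_j)$, since the optimal single center for $\X_j$ is its centroid $\mu(\X_j)=s_j$.

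\textbf{Step 3 (average over $c$).} Apply Step 2 with $z=c$ and average over $c\in\X_j$. This yields $\opt_1(\X_j) + \frac{1}{|\X_j|}\sum_{c}|\X_j|\,\|c-\mu(\X_j)\|^2$. The second term is $\sum_{c\in\X_j}\|c-\mu(\X_j)\|^2 = \opt_1(\X_j)$, so the total is $2\opt_1(\X_j)$.

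There is no real obstacle. The only points to check are that conditioning on $c_1\in\X_j$ keeps $c_1$ uniform on $\X_j$, and that $s_j=\mu(\X_j)$ holds for the optimal clustering; both are already stated in the setup preceding the lemma.
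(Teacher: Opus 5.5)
Your proposal is correct and follows the same argument as the cited Lemma~3.1 of \citep{arthur2007kmeanspp}; the paper itself gives no proof beyond the citation. Conditioning on $c_1 \in \X_j$ makes $c_1$ uniform on $\X_j$, and averaging the centroid identity $\sum_{x\in\X_j}\|x-z\|^2 = \opt_1(\X_j) + |\X_j|\,\|z-\mu(\X_j)\|^2$ over $z \in \X_j$ gives $2\opt_1(\X_j)$.
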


{}
\begin{lemma}[Standard $k$-means++ bound (Lemma 3.2 of \citep{arthur2007kmeanspp})] \label{lem:kmeanspp-bound}
    For an optimal choice of cluster centers $S$, let $\X_j$ denote the cluster corresponding to the $j$th center for $j\in[k]$. Let $C$ be a set of chosen centers, and let $y$ be a new center chosen according to the (standard) $D^2$ distribution such that $y\in \X_j$. Then:
    \begin{equation*}
        \E[\cost(\X_j, C \cup \{y\}) \mid C,\{y\in \X_j\}]\leq 8\opt_1(\X_j)
    \end{equation*}
\end{lemma}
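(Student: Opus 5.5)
We need to prove Lemma (Standard k-means++ bound): for $y$ drawn from the $D^2$ distribution restricted to $\X_j$, $\E[\cost(\X_j, C\cup\{y\})]\le 8\opt_1(\X_j)$. This is Lemma 3.2 of Arthur–Vassilvitskii. We recall the proof idea.

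We will write $A=\X_j$ and $D(a)=\min_{c\in C}\|a-c\|$, so $\cost(a,C)=D(a)^2$. Conditioned on $y\in A$ and on $C$, the standard $D^2$ distribution picks each $a_0\in A$ with probability $D(a_0)^2/\sum_{a\in A}D(a)^2$. Once $a_0$ is added, every $a\in A$ pays $\min(D(a),\|a-a_0\|)^2$. Hence
\[
\E[\cost(A,C\cup\{y\})\mid C, y\in A]=\sum_{a_0\in A}\frac{D(a_0)^2}{\sum_{a\in A}D(a)^2}\sum_{a\in A}\min\bigl(D(a),\|a-a_0\|\bigr)^2 .
\]

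\textbf{Step 1: bound the sampling weight.} The triangle inequality gives $D(a_0)\le D(a)+\|a-a_0\|$ for every $a\in A$. Squaring with $(u+v)^2\le 2u^2+2v^2$ and averaging over $a\in A$ yields
\[
D(a_0)^2\le \frac{2}{|A|}\sum_{a\in A}D(a)^2+\frac{2}{|A|}\sum_{a\in A}\|a-a_0\|^2 .
\]

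\textbf{Step 2: substitute.} Plugging this into the expectation splits it into two terms.

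In the first term the factor $\sum_a D(a)^2$ cancels against the normalizer. We then bound the inner min by $\|a-a_0\|^2$, which gives
\[
\frac{2}{|A|}\sum_{a_0\in A}\sum_{a\in A}\|a-a_0\|^2 .
\]

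In the second term we bound the inner min by $D(a)^2$. The sum $\sum_a D(a)^2$ then cancels the normalizer, leaving the same double sum. So the total is at most
\[
\frac{4}{|A|}\sum_{a_0\in A}\sum_{a\in A}\|a-a_0\|^2 .
\]

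\textbf{Step 3: centroid identity.} For any $z$, $\sum_{a\in A}\|a-z\|^2=\opt_1(A)+|A|\,\|z-\mu(A)\|^2$. This identity is the same one underlying Lemma~\ref{lem:first-center}. Summing it over $z=a_0\in A$ gives
\[
\frac{1}{|A|}\sum_{a_0\in A}\sum_{a\in A}\|a-a_0\|^2=2\opt_1(A).
\]
Therefore the expectation is at most $4\cdot 2\opt_1(A)=8\opt_1(A)$.

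\textbf{Main obstacle.} The delicate step is the choice in Step 2 of which branch of the min to use in each term. Using $\|a-a_0\|^2$ in the first term and $D(a)^2$ in the second is exactly what makes the normalizer cancel in both. The remaining steps are routine.
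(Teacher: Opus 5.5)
Your proof is correct and is exactly the argument of Lemma~3.2 in Arthur and Vassilvitskii, which the paper cites rather than reproves. You follow each of its steps: the triangle inequality with $(u+v)^2\le 2u^2+2v^2$ to bound the sampling weight $D(a_0)^2$, the complementary choice of branch of the $\min$ in the two resulting terms so that the normalizer cancels, and the uniform-center identity (their Lemma~3.1, the paper's Lemma~\ref{lem:first-center}) to conclude $4\cdot 2\opt_1(\X_j)=8\opt_1(\X_j)$.
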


The following is an analogous version of Lemma 3.2 of \citep{arthur2007kmeanspp}, but we need additional analysis since we use the perturbed distribution instead of the exact $D^2$ distribution:

{}
\begin{lemma}[Perturbed center selection bound] \label{lem:other centers}
    Consider an iteration of Algorithm~\ref{alg:rho-delta} after the first one and let $C$ be the current set of centers. Let $y$ be the center chosen in the current iteration. For an optimal choice of cluster centers $S$, let $\X_j$ denote the cluster corresponding to the $j$th center for $j\in[k]$. Then for any $C \neq \emptyset$:
    \begin{equation*}
        \E[\cost(\X_j,C\cup \{y\}) \mid C, \{y \in \X_j\}] \leq \rho^{-1}\left( 8 \opt_1(\X_j) + \frac{\delta}{1 - \delta} \frac{|\X_j|}{n}\cost(\X,C) \right)
    \end{equation*}
\end{lemma}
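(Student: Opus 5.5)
We follow the proof of Lemma 3.2 in \citep{arthur2007kmeanspp}, replacing the exact conditional $D^2$ probabilities by the bounds of Lemma~\ref{lem:bounds} with $\Y = \X_j$. Write $A := \X_j$ and note that
\[
\E[\cost(A, C\cup\{y\}) \mid C, y\in A] = \sum_{y\in A} \Pr[y \mid y\in A]\cdot \cost(A, C\cup\{y\}).
\]
By part 1 of Lemma~\ref{lem:bounds}, the conditional probability is at most
\[
\frac{\cost^\L(y,C)}{\cost^\L(A,C)} + \frac{\delta}{1-\delta}\cdot\frac{1}{n}\cdot\frac{\cost^\L(\X,C)}{\cost^\L(A,C)}.
\]
This splits the expectation into a main $D^2$-like term and an error term, which we bound separately.

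\emph{Main term.} Lemma~\ref{lem:away} converts $\cost^\L$ back to $\cost$. For the ratio we use $\cost^\L(y,C)/\cost^\L(A,C) \le \rho^{-1}\cost(y,C)/\cost(A,C)$, since the numerator is at most $\rho^{-1}\cost(y,C)$ and the denominator is at least $\cost(A,C)$. Then
\[
\sum_{y\in A} \frac{\cost^\L(y,C)}{\cost^\L(A,C)}\,\cost(A,C\cup\{y\}) \le \rho^{-1}\sum_{y\in A}\frac{\cost(y,C)}{\cost(A,C)}\,\cost(A,C\cup\{y\}).
\]
The remaining sum is exactly the quantity bounded in Lemma~\ref{lem:kmeanspp-bound}, so it is at most $8\opt_1(A)$. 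Concretely, one repeats the Arthur--Vassilvitskii argument: use $\cost(y,C)\le \frac{2}{|A|}\sum_{a\in A}(\cost(a,C)+\|a-y\|^2)$, then $\cost(A,C\cup\{y\}) \le \sum_{a}\min(\cost(a,C),\|a-y\|^2)$, and finish with the identity $\sum_{y\in A}\sum_{a\in A}\|a-y\|^2 = 2|A|\opt_1(A)$.

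\emph{Error term.} We use $\cost(A,C\cup\{y\}) \le \cost(A,C)$, which holds because adding centers only decreases cost. The error term is then at most
\[
\frac{\delta}{1-\delta}\cdot\frac{|A|}{n}\cdot \frac{\cost^\L(\X,C)}{\cost^\L(A,C)}\,\cost(A,C).
\]
Using $\cost^\L(\X,C)\le\rho^{-1}\cost(\X,C)$ and $\cost^\L(A,C)\ge \cost(A,C)$, this is at most
\[
\rho^{-1}\,\frac{\delta}{1-\delta}\,\frac{|A|}{n}\,\cost(\X,C).
\]
Adding the two terms gives the claimed bound. One degenerate case must be handled separately: if $\cost(A,C)=0$, the error term vanishes, and the main term is trivially $0$ in the exact case. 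Here we interpret the conditional probabilities as uniform if needed and note that the cost is then already $0$.

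\emph{Main obstacle.} The hard part is the orientation of the LSH distortion in the main term. The weights $\cost^\L(y,C)/\cost^\L(A,C)$ no longer match the exact $D^2$ weights used inside the Arthur--Vassilvitskii averaging trick, so we cannot invoke Lemma~\ref{lem:kmeanspp-bound} verbatim. Pulling out a single factor of $\rho^{-1}$, with numerator inflated and denominator deflated only in the favorable direction, gives a pointwise domination of the perturbed weights by $\rho^{-1}$ times the exact ones. Since $\cost(A,C\cup\{y\})\ge 0$, this domination suffices, and the exact lemma then applies unchanged.
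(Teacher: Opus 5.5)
Your proposal is correct and follows the paper's proof essentially step for step. You expand the conditional expectation and apply part 1 of Lemma~\ref{lem:bounds} to split it into two pieces. The $D^2$-like piece is bounded by $8\rho^{-1}\opt_1(\X_j)$ via Lemma~\ref{lem:away} and the Arthur--Vassilvitskii bound. The error piece is handled with $\cost(\X_j,C\cup\{y\})\le\cost(\X_j,C)$ and the $\rho^{-1}$ distortion of $\cost^\L(\X,C)/\cost^\L(\X_j,C)$. Your explicit justification of the ratio orientation and of the degenerate case $\cost(\X_j,C)=0$ fills in details the paper leaves implicit.
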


\begin{proof}
When the new center $y$ is added, each point $x \in \X_j$ contributes $\cost(x, C \cup\{y\}) = \min(\|x-y\|^2, \cost(x,C))$ to the overall cost, where $\L$ now contains $C \cup \{y\}$. The expected cost of the cluster $\X_j$ can hence be written as:
\begin{DispWithArrows*}[wrap-lines]
&\E[\cost(\X_j,C\cup\{y\}) \mid C, \{y \in \X_j\}] \Arrow{Expanding expectation}\\ \\  
&= \sum_{y \in \X_j} \Pr_{y \sim \pi_{(\rho,\delta)}(\cdot|C)}[y \mid y \in \X_j] \cdot \cost(\X_j,C \cup \{y\}) \Arrow{Expanding $\cost(\X_j,C\cup\{y\})$ }\\ \\
&= \sum_{y \in \X_j} \Pr_{y \sim \pi_{(\rho,\delta)}(\cdot|C)}[y \mid y \in \X_j] \cdot \left( \sum_{x \in \X_j}\cost(x, C \cup\{y\}) \right) \Arrow{Lemma~\ref{lem:bounds} item (1)}\\ \\ 
& \leq \sum_{y \in \X_j} \left( \frac{\cost^\L(y,C)}{\cost^\L(\X_j,C)} + \frac{\delta}{1 - \delta} \frac{1}{n} \frac{\cost^\L(\X,C)}{\cost^\L(\X_j,C)}  \right) \cdot \left( \sum_{x \in \X_j}\cost(x, C \cup\{y\}) \right)\\ \\ 
&\leq \underbrace{\sum_{y \in \X_j}  \frac{\cost^\L(y,C)}{\cost^\L(\X_j,C)} \sum_{x \in \X_j} \min(\|x-y\|^2, \cost(y,C))}_{(I)}\\& + \underbrace{\sum_{y \in \X_j} \left(\frac{\delta}{1 - \delta} \frac{1}{n} \frac{\cost(\X,C)}{\cost^\L(\X_j,C)}  \right) \sum_{x \in \X_j} \min(\|x-y\|^2, \cost(x,C))}_{II}
\end{DispWithArrows*}
where in the third step we used part 1 of Lemma~\ref{lem:bounds}. Using Lemma~\ref{lem:away} and Lemma 3.2 of \citep{arthur2007kmeanspp}, we have for $(I)$ : 
\begin{align*}
 \sum_{y \in \X_j}  \frac{\cost^\L(y,C)}{\cost^\L(\X_j,C)} \sum_{x \in \X_j} \min(\|x-y\|^2, \cost(x,C)) \leq  8 \rho^{-1}\opt_1(\X_j)
\end{align*}

For the second term $(II)$, noting that $\sum_{x \in \X_j} \min(\|x-y\|^2, \cost(x,C)) \leq \sum_{x \in \X_j}\cost(x,C) = \cost(\X_j,C)$, we have:
\begin{align*}
&\sum_{y \in \X_j} \left(\frac{\delta}{1 - \delta} \frac{1}{n} \frac{\cost^\L(\X,C)}{\cost^\L(\X_j,C)}  \right) \sum_{x \in \X_j} \min(\|x-y\|^2, \cost(x,C)) \\
&\leq \rho^{-1} \cdot \sum_{y \in \X_j} \left(\frac{\delta}{1 - \delta} \frac{1}{n} \frac{\cost(\X,C)}{\cost(\X_j,C)}  \right)\cost(\X_j,C) = \frac{\rho^{-1}\delta}{1 - \delta} \frac{|\X_j|}{n} \cost(\X,C)
\end{align*}
Combining both terms completes the proof.
\end{proof}

\subsubsection{Potential Analysis}

Following \cite{dasgupta_03}, we first setup some notation.  Let $t \in \{0,\dots,k\}$ denote the number of centers already chosen by Algorithm~\ref{alg:rho-delta}. Let $C_t := \{c_1, \dots, c_t\}$ be the set of centers after $t$ iterations . We say that an optimal cluster $\X_j$ is \textit{covered} by $C_t$ if at least one of the chosen centers is in $\X_j$. If not, then it is \textit{uncovered}. We denote ${H}_t = \{j \in \{1,\dots,k\} : \X_j \cap C_t \neq \emptyset \}$ and ${U}_t = \{1,\dots,k\} \backslash {H}_t$. Similarly, the dataset $\X$ can be partitioned into two parts: $\H_t \subset \X$ being the points belonging to \textit{covered} clusters and $\U_t = \X \backslash \H_t$ being the points belonging to \textit{uncovered} clusters. Let ${W}_t = t - |{H}_t|$ denote the number of \textit{wasted} iterations so far, i.e., the number of iterations in which no new cluster was covered. Note that we always have $|{H}_t| \leq t$ and hence $|{U}_t| \geq k-t$. For any $\Y \subset \X$, we use the notation $\cost_t(\Y) = \cost(\Y, C_t)$ and $\cost_t^\L(\Y) = \cost^\L(\Y,C_t)$ for brevity. The total cost can be decomposed as:
\begin{equation*}
\cost_t(\X) = \cost_t(\H_t) + \cost_t(\U_t) \leq \cost_t(\H_t) + \cost_t^\L(\U_t)
\end{equation*}

We can bound the first term directly using the previous lemmas.

{}
\begin{lemma}(Cost bound for covered clusters) \label{lem:cost-H}
For each $t \in \{1,\dots,k\}$ the following holds:
\begin{equation*}
\E[\cost_t(\H_t)] \leq \rho^{-1}\left(8 \opt_k(\X) + \frac{2\delta}{1 - \delta}\opt_1(\X)\right)
\end{equation*}
\end{lemma}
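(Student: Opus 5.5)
We bound $\E[\cost_t(\H_t)]$ by summing, over optimal clusters $\X_j$, the cost at the moment $\X_j$ first becomes covered. Monotonicity of costs under adding centers gives, for each $j\in H_t$, $\cost_t(\X_j)\le \cost_{\tau_j}(\X_j)$, where $\tau_j\le t$ is the iteration in which the first center landing in $\X_j$ was chosen. This monotonicity is the honest cost, which is the minimum over centers, so it is monotone regardless of $\L$. Hence
\[
\E[\cost_t(\H_t)] \le \sum_{j=1}^k \E\bigl[\mathbf{1}\{\tau_j\le t\}\,\cost_{\tau_j}(\X_j)\bigr],
\]
and it suffices to bound each summand by conditioning on the history up to $\tau_j-1$ and on the event that the new center lies in $\X_j$.

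There are two cases. If $\tau_j=1$, the cluster is covered by the uniform first center, and Lemma~\ref{lem:first-center} gives $\E[\cost(\X_j,c_1)\mid c_1\in\X_j]=2\opt_1(\X_j)\le 8\rho^{-1}\opt_1(\X_j)$. If $\tau_j\ge 2$, we apply Lemma~\ref{lem:other centers} with $C=C_{\tau_j-1}$ and obtain
\[
\rho^{-1}\Bigl(8\opt_1(\X_j)+\tfrac{\delta}{1-\delta}\tfrac{|\X_j|}{n}\cost(\X,C_{\tau_j-1})\Bigr).
\]
Summing the first terms over $j$ gives $8\rho^{-1}\opt_k(\X)$. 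For the second terms, we bound $\cost(\X,C_{\tau_j-1})\le \cost(\X,c_1)$ by monotonicity. Since $\sum_j|\X_j|/n=1$, this yields $\frac{\delta}{1-\delta}\rho^{-1}\,\E[\cost(\X,c_1)]$.

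It remains to note that $c_1$ is uniform, so $\E[\cost(\X,c_1)]=2\opt_1(\X)$ by Lemma~\ref{lem:first-center} applied to the whole dataset (i.e., $k=1$). This produces exactly the term $\frac{2\delta}{1-\delta}\rho^{-1}\opt_1(\X)$ and completes the bound.

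The main obstacle is justifying the interchange of conditioning and summation. The bound of Lemma~\ref{lem:other centers} holds conditionally on $C$ and on $y\in\X_j$, while the events $\{\tau_j=s\}$ depend on the history. I would handle this with the tower property. Writing $\mathbf{1}\{\tau_j=s\}=\mathbf{1}\{\X_j\cap C_{s-1}=\emptyset\}\mathbf{1}\{c_s\in\X_j\}$, we have
\[
\E\bigl[\mathbf{1}\{\tau_j=s\}\cost_s(\X_j)\bigr]=\E\Bigl[\mathbf{1}\{\X_j\cap C_{s-1}=\emptyset\}\Pr[c_s\in\X_j\mid C_{s-1}]\,\E[\cost_s(\X_j)\mid C_{s-1},c_s\in\X_j]\Bigr].
\]
Plugging in the conditional bound and using that $\sum_s\Pr[\tau_j=s\mid\cdot]\le 1$ keeps each cluster's contribution weighted by at most one. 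The $\cost(\X,C_{s-1})\le\cost(\X,c_1)$ replacement makes the additive term independent of $s$, so it sums cleanly.
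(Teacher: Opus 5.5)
Your proposal is correct and takes the same route as the paper. Both charge each covered cluster its cost at the iteration that covers it via Lemma~\ref{lem:other centers}, replace $\cost(\X,C_{\tau_j-1})$ by $\cost(\X,c_1)$, and then use $\E[\cost(\X,c_1)]=2\opt_1(\X)$ together with $\sum_j|\X_j|/n=1$. Your version is more careful than the paper's: the paper sums over the random set $H_t$ directly and applies Lemma~\ref{lem:other centers} even when the covering center is $c_1$ (where $C=\emptyset$), whereas your indicator and tower-property bookkeeping, plus the separate use of Lemma~\ref{lem:first-center} for $\tau_j=1$, close those small gaps.
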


\begin{proof}

Consider any cluster $j \in H_t$. This means that at some iteration $t_j$, a point $y \in \X_j$ would have been selected as the new center $c_{t_j}$ so that $j$ becomes covered. This observation allows us to use Lemma~\ref{lem:other centers} in the following manner : 
\begin{DispWithArrows*}[wrap-lines]
\E[\cost_t(\H_t)] &= \E\left[\sum_{j \in {H}_t} \cost_t(\X_j)\right] \Arrow{Using linearity of expectation}\\ \\
&= \sum_{j \in H_t} \E[\cost_t(\X_j)]\Arrow{Using Lemma~\ref{lem:other centers}}\\ \\
&\leq \sum_{j \in H_t} \rho^{-1}\left(8\opt_1(\X_j) + \frac{\delta}{1 - \delta} \frac{|\X_j|}{n} \cost(\X,C_{t_j-1})\right) \Arrow{Using $\E[\cost(\X,C_{t_j-1})] \leq \E[\cost(\X,c_1)] = 2\opt_1(\X)$}\\ \\
&\leq \sum_{j \in H_t} \rho^{-1}\left(8\opt_1(\X_j) + \frac{\delta}{1 - \delta} \frac{|\X_j|}{n} \cdot 2\opt_1(\X) \right) \Arrow{Expanding the sum over $j \in H_t$ to $j \in [k]$}\\\\
&= \rho^{-1}\left(8 \sum_{j\in [k]} \opt_1(\X_j) + \frac{2\delta}{1 - \delta} \cdot  \frac{\sum_{j\in [k]} |\X_j|}{n} \cdot \opt_1(\X)\right) \Arrow{Using $\opt_k(\X) = \sum_{j \in [k]} \opt_1(\X_j)$} \\\\ 
&= \rho^{-1}\left(8\opt_k(\X) + \frac{2\delta}{1 - \delta} \opt_1(\X)\right)
\end{DispWithArrows*}

This completes the proof. 

\end{proof}

\textbf{Potential function.}
To bound the second term, i.e., the cost of the uncovered clusters, we use the potential function method introduced in \citep{dasgupta2013lec3}, with a slightly modified potential function:
\begin{equation*}
\phi_t = \frac{{W}_t}{|{U}_t|} \cost_t^\L(\U_t)
\end{equation*}

Instead of \textit{paying} the complete clustering cost of the uncovered clusters at once, we make sure that at the end of iteration $t$, we have paid an amount of at least $\phi_t$. Observe that when $t = k$, we have ${W}_k = |{U}_k|$, so the potential becomes $\cost^\L_k(\U_k)$, which is atleast $\cost_k(\U_k)$, i.e. the total cost of the uncovered clusters returned by Algorithm~\ref{alg:qkmeans}. We now show how to bound the expected increase in the potential, i.e., $\phi_{t+1} - \phi_t$ at each iteration. The total potential bound will then follow by expressing $\phi_k$ as a telescoping sum of the increments.

 Suppose $t$ centers have been chosen. The next center $c_{t+1}$ is chosen which belongs to some optimal cluster $\X_j$. We consider two cases: the first case is when $j \in {U}_t$, i.e., a new cluster is covered, and the second case is when $j \in {H}_t$, i.e., the center is chosen from an already covered cluster. We shall denote the set of all random variables after the end of iteration $t$ by $\mathtt{F}_t$.

{}
\begin{lemma}(Potential increment for uncovered clusters)\label{lem:uncovered}
For any $t \in \{1,\dots,k-1\}$, the following holds:
\begin{align*}
\E[\phi_{t+1} - \phi_t \mid \mathtt{F}_t, \{j \in {U}_t\}] &\leq \frac{2\delta}{1 - \delta} \frac{t}{\max(1,k-t-1)^2} \opt_1(\X)
\end{align*}
\end{lemma}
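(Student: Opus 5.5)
The plan is to condition on $\mathtt{F}_t$ and on the event that the new center $c_{t+1}$ lands in an uncovered cluster $\X_j$, $j \in U_t$, and to write the increment explicitly. Put $u := |U_t|$. On this event a new cluster becomes covered, so $W_{t+1} = W_t$ and $|U_{t+1}| = u-1$. The ANNS structure is monotone under insertions (Lemma~\ref{lem:lsh}, item 3), so $\cost^\L_{t+1}(\U_{t+1}) \le \cost^\L_t(\U_t) - \cost^\L_t(\X_j)$. This gives
\begin{equation*}
\phi_{t+1}-\phi_t \le W_t\left(\frac{\cost^\L_t(\U_t)-\cost^\L_t(\X_j)}{u-1} - \frac{\cost^\L_t(\U_t)}{u}\right).
\end{equation*}
When $u = 1$ we have $\U_{t+1} = \emptyset$ and the increment is $\le 0$, which is why $\max(1,k-t-1)$ appears. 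So the task is to lower bound $\E[\cost^\L_t(\X_j)\mid \mathtt{F}_t, j\in U_t]$.

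Next, apply part 2 of Lemma~\ref{lem:bounds} with $\Y = \U_t$ and sum over the points $y \in \X_j$. Writing $a_j := \cost^\L_t(\X_j)$ and $A := \cost^\L_t(\U_t) = \sum_{i\in U_t} a_i$, this gives
\begin{equation*}
\Pr[c_{t+1}\in\X_j \mid c_{t+1}\in\U_t] \ge \frac{a_j}{A} - \frac{\delta}{1-\delta}\frac{|\U_t|}{n}\frac{a_j\,\cost^\L_t(\X)}{A^2}.
\end{equation*}
Multiplying by $a_j$ and summing over $j \in U_t$ yields two terms. The main term is $\sum_j a_j^2/A \ge A/u$ by Cauchy--Schwarz, and it exactly cancels the undisturbed part of the increment, because $\frac{A - A/u}{u-1} = \frac{A}{u}$. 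The error term is $\frac{\delta}{1-\delta}\frac{|\U_t|}{n}\cost^\L_t(\X)\sum_j a_j^2/A^2$. Therefore
\begin{equation*}
\E[\phi_{t+1}-\phi_t\mid\mathtt{F}_t, j\in U_t] \le \frac{W_t}{u-1}\cdot\frac{\delta}{1-\delta}\cdot\frac{|\U_t|}{n}\cdot\frac{\sum_j a_j^2}{A^2}\,\cost^\L_t(\X).
\end{equation*}

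To finish, I would use $W_t \le t$ and $u-1 \ge k-t-1$. This accounts for the factor $t/\max(1,k-t-1)$.

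The main obstacle is extracting the second factor $1/\max(1,k-t-1)$ while keeping the constant $2$ and introducing no $\rho^{-1}$. The crude bound $\sum_j a_j^2 \le A^2$ only gives one power of $u$, so I would try to obtain the extra $1/u$ from the combined factor $\frac{|\U_t|}{n}\frac{\sum_j a_j^2}{A^2}$. My first attempt is to redo the lower bound of Lemma~\ref{lem:bounds} per cluster rather than over all of $\U_t$, normalizing by $|\X_j|$ instead of $|\U_t|$ in the perturbation. Combined with $\sum_j a_j^2/A^2 \le \max_j a_j/A$, I hope this gives an error of the form $\frac{\delta}{1-\delta}\frac{1}{u}\cost_t(\X)$ up to the choice of weighting. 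If that fails, I would average the error term over the uniformly perturbed mass: the mixture part picks each uncovered cluster with weight $|\X_j|/n$, and averaging over the $u$ clusters should supply a $1/u$ factor.

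Finally, I would take expectations and use monotonicity of cost under insertion, $\E[\cost_t(\X)] \le \E[\cost(\X,c_1)] = 2\opt_1(\X)$ from Lemma~\ref{lem:first-center}. Here I would keep $\cost$ rather than $\cost^\L$ inside the error term, bounding $\frac{|\U_t|}{n} \le 1$, so as to avoid a stray $\rho^{-1}$. This produces the stated bound $\frac{2\delta}{1-\delta}\frac{t}{\max(1,k-t-1)^2}\opt_1(\X)$.
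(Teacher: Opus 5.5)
Your setup matches the paper's: the monotonicity bound $\phi_{t+1}\le\frac{W_t}{u-1}\bigl(\cost^\L_t(\U_t)-\cost^\L_t(\X_j)\bigr)$, part 2 of Lemma~\ref{lem:bounds} summed over each uncovered cluster, Cauchy--Schwarz, and $W_t\le t$, $u-1\ge k-t-1$. Your handling of $u=1$ is fine and simpler than the paper's event $\mathtt{A}$. The gap is the step you flag yourself, and it cannot be closed the way you propose. Once you replace the main term $S:=\sum_{j\in U_t}a_j^2/A$ by $A/u$ and treat the error separately, that error is genuinely not $O(\cost^\L_t(\X)/u)$. Suppose one uncovered cluster carries almost all of $A$ but few points, while another has nearly zero cost but almost all of $\X$. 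Then the uniform part of $\pi_{(\rho,\delta)}$ lands mostly in the cheap cluster. The true value of $\E[a_j\mid\mathtt{F}_t,j\in U_t]$ then lies below $S\approx A$ by an amount of order $\frac{\delta}{1-\delta}\cost^\L_t(\X)$ (when this is at most $A$), with no factor $1/u$. So any bound of the form $\E[a_j\mid\cdot]\ge S-\mathrm{err}$ has $\mathrm{err}$ at least that large. In particular, normalising the perturbation by $|\X_j|$ instead of $|\U_t|$ does not even give a valid lower bound, since the deficit comes from uniform mass flowing to the other clusters. Averaging over the uniform mass cannot help either, because cluster sizes are arbitrary. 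The slack you need sits in the main term, which you discarded: in exactly this configuration $S\approx A\gg A/u$.

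The missing idea is to keep the two terms coupled, as the paper does. Using $|\U_t|\le n$, part 2 of Lemma~\ref{lem:bounds} gives $\E[a_j\mid\mathtt{F}_t,j\in U_t]\ge(1-\lambda)S$ with $\lambda:=\frac{\delta}{1-\delta}\frac{\cost^\L_t(\X)}{A}$. If $\lambda\le 1$, apply Cauchy--Schwarz to the whole product: $(1-\lambda)S\ge(1-\lambda)\frac{A}{u}=\frac{A}{u}-\frac{\delta}{1-\delta}\frac{\cost^\L_t(\X)}{u}$. If $\lambda>1$, this right-hand side is negative, so the bound holds trivially. Substituting gives $\E[\phi_{t+1}-\phi_t\mid\mathtt{F}_t,j\in U_t]\le\frac{\delta}{1-\delta}\frac{W_t}{u(u-1)}\cost^\L_t(\X)$, and $u(u-1)\ge\max(1,k-t-1)^2$ for $u\ge 2$ supplies both powers. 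A smaller point concerns avoiding $\rho^{-1}$. The error term necessarily involves $\cost^\L_t(\X)$, the normaliser of $\pi^\L$, so you cannot simply ``keep $\cost$'' there. Instead use item 3 of Lemma~\ref{lem:lsh}: $\cost^\L_t(\X)\le\cost^\L_1(\X)=\cost(\X,c_1)$, since $c_1$ is the only stored point at $t=1$, and $\E[\cost(\X,c_1)]=2\opt_1(\X)$.
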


\begin{proof}
When $j \in {U}_t$, we have ${W}_{t+1} = {W}_t$, ${H}_{t+1} = {H}_t \cup\{j\}$, and ${U}_{t+1} = U_{t} \backslash\{j\}$. Thus using the monotonicity of the data structure $L$, we have
\begin{align*}
\phi_{t+1} = \frac{{W}_{t+1}}{|{U}_{t+1}|} \cost_{t+1}^\L(\U_{t+1}) \leq \frac{{W}_t}{|{U}_t| - 1} \left( \cost_t^\L(\U_t) - \cost_t^\L(\X_j) \right)
\end{align*}
We can use part 2 of Lemma~\ref{lem:bounds} for getting a lower bound on the second term:
\begin{DispWithArrows*}[wrap-lines]
&\E[\cost_t^\L(\X_j) \mid \mathtt{F}_t , \{j \in {U}_t\}] \\
&\geq \sum_{j \in {U}_t} \left( \frac{\cost_t^\L(\X_j)}{\cost_t^\L(\U_t)} - \frac{\delta}{1 - \delta} \frac{|\X_j|}{n} \frac{\cost_t^\L(\X_j)\cost_t^\L(X)}{\cost_t^\L(\U_t)^2} \right) \cost_t^\L(\X_j) \\
&\geq \left(1 - \frac{\delta}{1- \delta} \frac{\cost_t^\L(\X)}{\cost_t^\L(\U_t)} \right) \sum_{j \in {U}_t} \frac{\cost_t^\L(\X_j)^2}{\cost_t^\L(\U_t)}
\end{DispWithArrows*}
where in the second step we used the fact that $|\X_j| \leq n$ for each $j \in \mathcal{U}_t$. We can use the Cauchy-Schwarz inequality\footnote{For lists of numbers $a_1,\dots,a_m$ and $b_1,\dots, b_m$ we have $\left(\sum_i a_ib_i\right)^2 \leq \left(\sum_i a_i^2\right)\left(\sum_i b_i^2\right)$.} to simplify the last expression as follows:
\begin{equation*}
|{U}_t|^2\sum_{j \in {U}_t} \cost_t^\L(\X_j)^2 \geq |{U}_t| \sum_{j \in {U}_t} \cost_t^\L(\X_j) = |{U}_t| \cost_t^\L(\U_t)
\end{equation*}
This shows that
\begin{equation*}
\E[\cost_t^\L(\X_j) \mid \mathtt{F}_t , \{j \in {U}_t\}] \geq \frac{\cost_t^\L(\U_t)}{|{U}_t|} - \frac{\delta}{1 - \delta} \frac{\cost_t^\L(\X)}{|{U}_t|}
\end{equation*}
Now,
\begin{align*}
\E[\phi_{t+1} \mid \mathtt{F}_t, i \in {U}_t] &\leq \frac{{W}_t}{|{U}_t| - 1} \left(\cost_t^\L(\U_t) - \E[\cost_t^\L(\X_j)| \mathtt{F}_t, i \in {U}_t] \right) \\
&\leq \frac{{W}_t}{|{U}_t| - 1} \left(\cost_t^\L(\U_t) - \frac{\cost_t^\L(\U_t)}{|{U}_t|} + \frac{\delta}{1 - \delta} \frac{\cost_t^\L(X)}{|{U}_t|} \right) \\
&= \phi_t + \frac{\delta}{1 - \delta} \frac{{W}_t}{|{U}_t|\left(|{U}_t|-1\right)} \cost_t^\L(\X)
\end{align*}
Recall that ${W}_t \leq t$ and $|{U}_t| \geq k-t$. So for $t \leq k-2$, the following holds after taking expectation:
\begin{align*}
\E[\phi_{t+1} - \phi_t\mid \mathtt{F}_t, \{j \in {U}_t\}]
&\leq \frac{\delta}{1 - \delta} \frac{t}{(k-t-1)^2} \E[ \cost_t^\L(X) \mid \mathtt{F}_t, j \in {U}_t] \\
&\leq \frac{2\delta}{1 - \delta} \frac{t}{(k-t-1)^2} \opt_1(X)
\end{align*}
Now consider the case when $t= k-1$. We cannot use the above argument directly because it may so happen that $|{U}_{k-1}| = 1$. If this happens, the potential of the uncovered clusters is always $0$. This only happens when a new cluster is covered in each iteration. Let this event be $\mathtt{A}$ (for \textit{All Clusters} being covered). Denoting $\mathtt{E}$ to be the event: $\{\mathtt{F}_{k-1}, \{j \in U_{k-1}\}\}$, we have the following:
\begin{DispWithArrows*}[wrap-lines]
\E[\phi_{k} - \phi_{k-1} \mid \mathtt{E}] &= \E[\phi_{k} - \phi_{k-1} \mid \mathtt{E},\mathtt{A} ] \Pr[\mathtt{A} \mid \mathtt{E}] + \E[\phi_{k} - \phi_{k-1} \mid \mathtt{E}, \neg \mathtt{A} ] \Pr[ \neg \mathtt{A} \mid \mathtt{E}]  \Arrow{Using $\E[\phi_{k} - \phi_{k-1} \mid \mathtt{E},\mathtt{A} ] \leq 0$} \\ \\
&\leq \E[\phi_{k} - \phi_{k-1} \mid \mathtt{E}, \neg \mathtt{A} ] \Pr[ \neg \mathtt{A} \mid \mathtt{E}] \\
&\leq \E[\phi_{k} - \phi_{k-1} \mid \mathtt{E}, \neg \mathtt{A} ] \\
&\leq \frac{2\delta t}{1 - \delta} \opt_1(X)
\end{DispWithArrows*}
where in the last line we used the fact that $|U_{k-1}| > 1$ if all clusters are not covered. We also used the fact that $\E[\phi_k-\phi_{k-1}|\mathtt{E,A}] \leq 0$ since in this case there are no uncovered clusters .Combining both cases completes the proof.
\end{proof}

{}
\begin{lemma}(Potential increment for covered clusters) \label{lem:covered}
For any $t \in \{1,\dots,k-1\}$, the following holds:
\begin{equation*}
\E[\phi_{t+1} - \phi_t \mid \mathtt{F}_t, \{i \in {H}_t\}] \leq \frac{\cost_t^\L(\U_t)}{k-t}
\end{equation*}
\end{lemma}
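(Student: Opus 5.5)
When the new center $c_{t+1}$ lands in an already covered cluster ($j \in H_t$), the set of covered clusters is unchanged. So $H_{t+1} = H_t$, $U_{t+1} = U_t$, $\U_{t+1} = \U_t$, and the wasted count rises by one, $W_{t+1} = W_t + 1$.

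The plan is to write $\phi_{t+1}$ directly in terms of $\phi_t$ and use only monotonicity of the data structure $\L$. Inserting $c_{t+1}$ can only decrease each queried distance (item 3 of Lemma~\ref{lem:lsh}), so $\cost^\L_{t+1}(\U_{t+1}) = \cost^\L_{t+1}(\U_t) \le \cost^\L_t(\U_t)$. This holds pointwise, for every realization of $c_{t+1}$, and hence also after conditioning on $\F_t$ and $\{j \in H_t\}$.

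Combining these facts gives
\[
\phi_{t+1} = \frac{W_t+1}{|U_t|}\cost^\L_{t+1}(\U_t) \le \frac{W_t+1}{|U_t|}\cost^\L_{t}(\U_t) = \phi_t + \frac{\cost^\L_t(\U_t)}{|U_t|}.
\]
Since $|H_t| \le t$, we have $|U_t| \ge k-t \ge 1$ for $t \le k-1$. Therefore $\cost^\L_t(\U_t)/|U_t| \le \cost^\L_t(\U_t)/(k-t)$, and this quantity is $\F_t$-measurable. Taking the conditional expectation yields the claimed bound.

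No sampling-probability estimate such as Lemma~\ref{lem:bounds} is needed here; the perturbation $\delta$ plays no role, because we do not care which covered point was picked. The only subtlety is bookkeeping. First, the potential must be evaluated using the $\L$-cost, so that monotonicity applies; the true cost $\cost_{t}$ would also be monotone, but $\phi$ is defined through $\cost^\L$. Second, the denominator $|U_t|$ must be nonzero. If $|U_t| = 0$ the potential is taken to be $0$ and the increment is trivially nonpositive, but as noted this case cannot occur for $t \le k-1$.
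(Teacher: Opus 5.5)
Your proof is correct and matches the paper's argument essentially step for step: with $H_{t+1}=H_t$, $U_{t+1}=U_t$ and $W_{t+1}=W_t+1$, monotonicity of $\L$ gives $\phi_{t+1}-\phi_t \le \cost_t^\L(\U_t)/|U_t| \le \cost_t^\L(\U_t)/(k-t)$. You are slightly more explicit than the paper, which uses the monotonicity of $\L$ and the fact that $|U_t|\ge 1$ without naming them.
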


\begin{proof}
When $i \in {H}_t$, we have ${H}_{t+1} = {H}_t$, ${W}_{t+1} = \mathcal{W}_t + 1$, and ${U}_{t+1} = {U}_t$. Thus,
\begin{align*}
\phi_{t+1} - \phi_t &= \frac{{W}_{t+1}}{|{U}_{t+1}|} \cost_{t+1}^\L(\U_{t+1}) - \frac{W_{t}}{|{U}_{t}|} \cost_t^\L(\U_{t}) \\
&\leq \frac{W_{t}+1}{|{U}_{t}|} \cost_t^\L(\U_{t})-\frac{W_{t}}{|{U}_{t}|} \cost_t^\L(\U_{t}) \\
&= \frac{\cost_t^\L(\U_t)}{|{U}_t|} \leq \frac{\cost_t^\L(\U_t)}{k-t}
\end{align*}
\end{proof}

We can now combine the two cases to get:

{}
\begin{lemma}[Combined potential increment bound] \label{lem:combined}
For any $t \in \{1,\dots,k-1\}$, the following holds:
$
\E[\phi_{t+1} - \phi_t \mid \mathtt{F}_t] \leq (1 - \delta) \frac{\E[\cost_t^\L(\H_t)]}{k-t} \\
+ \delta\left(\frac{2}{k-t} + \frac{2t}{\max(1,k-t-1)^2}\right) \opt_1(\X)
$
\end{lemma}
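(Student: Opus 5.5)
We condition on $\F_t$ and split on whether the next center $c_{t+1}$ falls in an uncovered cluster ($j \in U_t$) or a covered one ($j \in H_t$). By the law of total expectation,
\[
\E[\phi_{t+1}-\phi_t \mid \F_t] = p_U\, \E[\phi_{t+1}-\phi_t \mid \F_t, j\in U_t] + p_H\, \E[\phi_{t+1}-\phi_t \mid \F_t, j\in H_t],
\]
where $p_U, p_H$ are the probabilities under $\pi_{(\rho,\delta)}(\cdot|C_t)$ that $c_{t+1}\in\U_t$ or $c_{t+1}\in\H_t$. The first conditional increment is bounded by Lemma~\ref{lem:uncovered}. Since $p_U\le 1$, its contribution is at most $\frac{2\delta}{1-\delta}\frac{t}{\max(1,k-t-1)^2}\opt_1(\X)$. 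We then absorb the factor $\frac{1}{1-\delta}\le 2$ (valid for $\delta<1/2$), or keep it as is, to match the stated $\delta\cdot\frac{2t}{\max(1,k-t-1)^2}\opt_1(\X)$ term up to the constant convention.

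For the covered case, Lemma~\ref{lem:covered} gives the increment $\le \cost_t^\L(\U_t)/(k-t)$. The key step is to compute $p_H$ exactly from the mixture form used in the proof of Lemma~\ref{lem:bounds}:
\[
p_H = (1-\delta)\frac{\cost_t^\L(\H_t)}{\cost_t^\L(\X)} + \delta\frac{|\H_t|}{n}.
\]
Multiplying through, the first part contributes $(1-\delta)\frac{\cost_t^\L(\H_t)}{\cost_t^\L(\X)}\cdot\frac{\cost_t^\L(\U_t)}{k-t}\le(1-\delta)\frac{\cost_t^\L(\H_t)}{k-t}$, since $\cost_t^\L(\U_t)\le\cost_t^\L(\X)$. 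This cancellation of the $\U_t$ cost against the denominator is exactly what makes the potential method work. The second part contributes at most $\delta\frac{\cost_t^\L(\U_t)}{k-t}\le \delta\frac{\cost_t^\L(\X)}{k-t}$.

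Finally, we take expectations over $\F_t$, which is how the statement with $\E[\cost_t^\L(\H_t)]$ should be read, namely as the bound after averaging. We bound $\E[\cost_t^\L(\X)]$ by $2\opt_1(\X)$ using monotonicity of $\L$ and $\E[\cost(\X,c_1)]=2\opt_1(\X)$ (Lemma~\ref{lem:first-center} summed, as in Lemma~\ref{lem:cost-H}). This turns the residual $\delta$ term into $\frac{2\delta}{k-t}\opt_1(\X)$.

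The main obstacle is this last step. $\cost_t^\L$ is the LSH cost, not the true cost, and $\cost^\L_t(\X)\le\rho^{-1}\cost_t(\X)$ would introduce a stray $\rho^{-1}$. I would first try to exploit that $c_1$ is queried exactly, so $\cost^\L(x,C_t)\le\cost^\L(x,\{c_1\})=\|x-c_1\|^2$ by monotonicity, which gives the $2\opt_1$ bound without $\rho$. If that fails, the $\rho^{-1}$ would be carried into the constant. The same point is used, identically, in Lemma~\ref{lem:uncovered}.
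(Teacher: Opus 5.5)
Your proposal is correct and follows the paper's proof essentially step for step. It uses the same split on $U_t$ versus $H_t$, applies Lemma~\ref{lem:uncovered} with $p_U\le 1$ to the first branch, and handles the covered branch with Lemma~\ref{lem:covered}, the exact mixture form of $p_H$, $\cost_t^\L(\U_t)\le\cost_t^\L(\X)$ and $|\H_t|\le n$.

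Your monotonicity argument $\cost^\L(x,C_t)\le\|x-c_1\|^2$ validly gives $\E[\cost_t^\L(\X)]\le\E[\cost(\X,c_1)]=2\opt_1(\X)$ without a stray $\rho^{-1}$. The paper uses this step only implicitly, and the identity is Lemma~\ref{lem:first-center} applied to $\X$ as a single cluster, not a sum over clusters. As you suspected, neither argument gives the literal constant $2\delta t$ in the statement: absorbing $1/(1-\delta)\le 2$ gives $4\delta t$, and the paper itself keeps the $1/(1-\delta)$ when it applies this lemma in Theorem~\ref{thm:approx guarantee}.
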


\begin{proof}
To compute the overall expectation, we have:
\begin{align*}
&\E[\phi_{t+1}-\phi_t \mid \mathtt{F}_t] = \E[\phi_{t+1}-\phi_t \mid \mathtt{F}_t, \{i \in {U}_t\}] \Pr[i \in {U}_t]  + \E[\phi_{t+1}-\phi_t \mid \mathtt{F}_t, \{i \in{H}_t\}] \Pr[i \in {H}_t]
\end{align*}
We can bound the first term using Lemma~\ref{lem:uncovered}:
\begin{align*}
&\E[\phi_{t+1}-\phi_t \mid \mathtt{F}_t, i \in {U}_t] \Pr[i \in {U}_t] \\ &\leq \E[\phi_{t+1}-\phi_t \mid \mathtt{F}_t, i \in {U}_t] \\ &\leq \frac{2\delta}{1 - \delta} \frac{t}{\max(1,k-t-1)^2} \opt_1(\X)
\end{align*}
and the second term using Lemma~\ref{lem:covered}:
\begin{align*}
&\E[\phi_{t+1}-\phi_t \mid \mathtt{F}_t, i \in H_t] \Pr[i \in {H}_t] \\ &\leq \frac{\cost_t^\L(\U_t)}{k-t} \left( (1 - \delta) \frac{\cost_t^\L(\H_t)}{\cost_t^\L(\X)} + \delta \frac{|\H_t|}{n} \right)\\ &\leq (1 - \delta)\frac{\cost_t^\L(\H_t)}{k-t} + \delta \frac{\cost_t^\L(\X)}{k-t}
\end{align*}
where in the last step we used $\cost_t^\L(\U_t) \leq \cost_t^\L(\X)$ and $|\H_t| \leq n$. Combining both terms completes the proof.
\end{proof}

\subsubsection{Final approximation guarantee}
We are now ready to provide a proof for the main theorem regarding $(\rho,\delta)$-$k$-means++:
{}
\begin{theorem}[Main approximation guarantee for $(\rho,\delta)$-$k$-means++] \label{thm:approx guarantee}
Let $\X \subset \R^D$ be any dataset which is to be partitioned into $k$ clusters. Let $C$ be the set of centers returned by Algorithm~\ref{alg:rho-delta} for any $\delta \in (0,0.5)$ and $\rho < 1$. The following approximation guarantee holds:
\begin{equation*}
\E[\cost(\X,C)] \in O(\rho^{-2} \log k)\opt_k(\X) + \delta  \cdot O(k + \rho^{-2} \log k) \opt_1(\X)
\end{equation*}
\end{theorem}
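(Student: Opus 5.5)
Following the approach already set up, I would split the final cost as $\cost_k(\X) = \cost_k(\H_k) + \cost_k(\U_k) \le \cost_k(\H_k) + \cost^\L_k(\U_k)$. Since $W_k = |U_k|$, the potential at the end satisfies $\phi_k = \cost^\L_k(\U_k)$. Lemma~\ref{lem:cost-H} with $t=k$ immediately gives
\[
\E[\cost_k(\H_k)] \le \rho^{-1}\bigl(8\opt_k(\X) + \tfrac{2\delta}{1-\delta}\opt_1(\X)\bigr).
\]
The remaining work is to bound $\E[\phi_k]$ by telescoping, $\phi_k = \phi_1 + \sum_{t=1}^{k-1}(\phi_{t+1}-\phi_t)$.

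At $t=1$ there are no wasted iterations, so $W_1 = 0$ and $\phi_1 = 0$. For each increment I would take total expectation of Lemma~\ref{lem:combined}. This lemma is stated with $\cost^\L_t(\H_t)$, which I would convert to true cost via Lemma~\ref{lem:away}, losing a factor $\rho^{-1}$. Then Lemma~\ref{lem:cost-H} bounds $\E[\cost_t(\H_t)]$ uniformly in $t$ by $\rho^{-1}(8\opt_k + O(\delta)\opt_1)$. This produces the main sum
\[
\sum_{t=1}^{k-1}\frac{\rho^{-2}\bigl(8\opt_k(\X)+O(\delta)\opt_1(\X)\bigr)}{k-t} = O(\rho^{-2}\ln k)\bigl(\opt_k(\X)+\delta\,\opt_1(\X)\bigr),
\]
using the harmonic sum $\sum_{t} 1/(k-t) = H_{k-1} = O(\ln k)$. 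The first term is the claimed $O(\rho^{-2}\log k)\opt_k(\X)$, and the second is part of the $\delta\,O(\rho^{-2}\log k)\opt_1(\X)$ term. Throughout I use $\delta<1/2$, so $\frac{1}{1-\delta}\le 2$ is a constant. In Lemma~\ref{lem:combined}, the $\opt_1$ terms in the increment come from $\E[\cost^\L_t(\X)] \le \rho^{-1}\E[\cost_t(\X)] \le 2\rho^{-1}\opt_1(\X)$, where $\cost_t(\X)\le\cost(\X,c_1)$ and Lemma~\ref{lem:first-center} is summed over clusters.

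The main obstacle is the additive $\delta$ terms: $\delta\sum_t \frac{2}{k-t}\opt_1$ and $\delta\sum_t \frac{2t}{\max(1,k-t-1)^2}\opt_1$. The first is $O(\delta\log k)\opt_1$. For the second I would substitute $s = k-t-1$. The terms with $s\ge1$ give $\sum_{s\ge1}\frac{k-1-s}{s^2} \le k\cdot\frac{\pi^2}{6} = O(k)$. The single boundary term $t=k-1$ (where $s=0$) was handled by the case analysis in Lemma~\ref{lem:uncovered} and contributes $O(k)$. Together these give the $\delta\cdot O(k)\opt_1(\X)$ term. A possible factor of $\rho^{-1}$ attached here, from comparing $\cost^\L$ with $\cost$, is absorbed if one carries the $\opt_1$ bound through $\cost^\L$ carefully; otherwise the theorem's form $\delta\cdot O(k+\rho^{-2}\log k)$ should be read with $\rho$ treated as a constant in the $k$ term. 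I would check this factor first.

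Summing the bound on $\E[\cost_k(\H_k)]$ and the bound on $\E[\phi_k]$ yields
\[
O(\rho^{-2}\ln k)\opt_k(\X) + \delta\,O(k+\rho^{-2}\ln k)\opt_1(\X).
\]
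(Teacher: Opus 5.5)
Your proposal is correct and follows the paper's proof essentially step for step: the split $\cost_k(\H_k)+\phi_k$, Lemma~\ref{lem:cost-H} for the covered part, telescoping the potential with Lemma~\ref{lem:combined}, converting $\cost^\L_t(\H_t)$ to $\cost_t(\H_t)$ at a loss of $\rho^{-1}$, and then the harmonic and $\pi^2/6$ sums. The extra $\rho^{-1}$ you flag on the $\delta\opt_1(\X)$ increments, which the paper silently drops, can indeed be removed: by monotonicity of $\L$ (item~3 of Lemma~\ref{lem:lsh}), and because a structure containing only $c_1$ must return $c_1$, you get $\cost^\L_t(x)\le\|x-c_1\|^2$ pointwise, hence $\E[\cost^\L_t(\X)]\le 2\opt_1(\X)$ with no $\rho^{-1}$, so the stated form $\delta\cdot O(k+\rho^{-2}\log k)$ holds uniformly in $\rho$.
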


\begin{proof}
At the end of $k$ iterations, we have $\cost(\X,C) \leq \cost_k(\H_k) + \cost_k^\L(\U_k) = \cost_k(\H_k) + \phi_k$. The first term can be bounded using Lemma~\ref{lem:cost-H}. For the second term, we can express $\phi_k$ as a telescopic sum:
\begin{align*}
&\E[\cost(\X,C)] \\&\leq  \E[\cost_k(\H_k)] + \sum_{t = 0}^{k-1} \E [\phi_{t+1} - \phi_t \mid \mathtt{F}_t] \\
&\leq \E[\cost_k(\H_k)] + \sum_{t = 0}^{k-1} (1 - \delta) \frac{\E[\cost_t^\L(\H_t)]}{k-t}  + \sum_{t = 0}^{k-1}\delta \left( \frac{2}{k-t} + \frac{2t}{(1 - \delta) \max(1,k-t-1)^2} \right) \opt_1(\X) \\
&\leq \E[\cost_k(\H_k)] + \sum_{t = 0}^{k-1} \frac{(1 - \delta)}{ \eps}\frac{\E[\cost_t(\H_t)]}{k-t}  + \sum_{t = 0}^{k-1}\delta \left( \frac{2}{k-t} + \frac{2t}{(1 - \delta) \max(1,k-t-1)^2} \right) \opt_1(\X) \\
&\leq \rho^{-1}\left(8\opt_k(\X) + \frac{2\delta}{1 - \delta} \opt_1(\X)\right)\left( 1 + (1 - \delta) \cdot \eps^{-1} \cdot \sum_{t = 0}^{k-1} \frac{1}{k-t}\right) \\
&\quad\quad + \frac{2\delta}{1 - \delta} \opt_1(X) \left(k + (1-\delta)(1+\ln k) + \sum_{t=0}^{k-2} \frac{t}{(k-t-1)^2} \right)
\end{align*}

To simplify this, note that $\sum_{t = 0}^{k-1} \frac{1}{k-t} \leq 1 + \ln k$ and  $\sum_{t = 0}^{k-2} \frac{t}{(k-t-1)^2} \leq k \sum_{t = 1}^{\infty}t^{-2} = k \pi^2 /{6} $.
\begin{align*}
&\E[\cost(\X,C)] \\ &\leq \rho^{-1}\left(8\opt_k(\X) + \frac{2\delta}{1 - \delta}\opt_1(\X)\right)\left(1 + \eps^{-1}(1+ \ln k)\right)+ 2\delta  \opt_1(\X) \left(k + 2 + 2\ln k  + k\pi^2/3\right) \\
\end{align*}

For sufficiently large $k$, we obtain :

\begin{align*}
\E[\cost(\X,C)] \in  O(\rho^{-2} \log k)\opt_k(\X) + \frac{\delta}{1 - \delta} \cdot O(k + \rho^{-2} \log k) \opt_1(\X)
\end{align*}
This completes the proof of the theorem.
\end{proof}

\subsection{Rejection Sampling}

The naive implementation of $k$-means++ seeding involves iteratively selecting each cluster center using the $D^2$ distribution with respect to the current set of centers. Sampling a point from this distribution naively is expensive as it takes $O(nkd)$ time. To obtain a faster implementation, we sample each center from a distribution that \emph{oversamples} the $D^2$ distribution (but is faster to sample from), and then perform rejection sampling to ensure that the overall distribution remains close to $D^2$.

Let us formally define what we mean by \emph{oversampling} a distribution.

{}
\begin{definition}($\tau$-oversampling)
Suppose $\pi$, $\kappa$ define probability distributions over $\X$. The distribution $\kappa$ is said to $\tau$-\textit{oversample} $\pi$ for $\tau > 0$ if  $\pi(x) \leq \tau \cdot \kappa(x)$ for each $x \in \X$.
\end{definition}

Let $\pi, \kappa$ be probability distributions over $\X$ such that $\kappa$ $\tau$-\textit{oversamples} $\pi$. Suppose we have a collection of samples $x_1,x_2,\ldots$ from the distribution $D_2$. Consider the following strategy $\mathsf{RejectionSample}$: iterate through samples $\{x_i\}_i$ and terminate if a sample $x_i$ is accepted with probability $\rho(x_i) = \frac{\pi(x_i)}{\tau \kappa(x_i)}$.

It is not difficult to argue that an accepted sample comes from the distribution $\pi$. Moreover, for any $\varepsilon \in (0,1)$ it takes at most $\tau\ln(1/\varepsilon)$ samples from $\kappa$ to accept a sample with probability at least $1-\varepsilon$. Note that the above strategy may not need to know $\tau$ in advance (indeed, computing $\tau$ may be non-trivial), but only requires the ability to compute the quantity $\rho(x_i)$.

In the current form, our strategy does not have any control over the number of samples from $\kappa$ which it may need to examine. However, a bound on the number of samples to be examined can be obtained if we are content with sampling from a slightly perturbed distribution. Suppose we have another distribution $\nu$ over $\X$. This time we are allowed to use samples coming from $\kappa$ and $\nu$ and instead of a sample from $\pi$, we are content with obtaining a sample generated by a hybrid distribution $\pi_\delta(x) = (1-\delta)\pi(x) + \delta \nu(x)$ for some small enough $\delta \in (0,1)$. For this we can modify the above strategy which we now call $\mathsf{RejectionSample}(m)$: Iterate through $m$ samples $x_1,\ldots,x_m$ from $\kappa$, terminate if a sample $x_i$ is accepted with probability $\rho(x_i)$. If no sample is accepted, terminate with a sample from $\nu$. It can be shown that the \textit{failure} probability $\delta$ diminishes with increasing $m$. Indeed, the following holds:

{}
\begin{lemma}\label{lem:failure probability bound}(Failure probability bound)
The failure probability of $\mathsf{RejectionSample}(m)$ satisfies $\delta \leq e^{-m/\tau}$.
\end{lemma}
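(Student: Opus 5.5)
The plan is to analyze a single round of $\mathsf{RejectionSample}(m)$ and then use independence across the $m$ trials. In one trial we draw $x \sim \kappa$ and accept with probability $\pi(x)/(\tau\kappa(x))$. This acceptance probability is a valid number in $[0,1]$ precisely because $\kappa$ $\tau$-oversamples $\pi$, i.e.\ $\pi(x)\le\tau\kappa(x)$ for every $x\in\X$.

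The first step computes the per-trial acceptance probability exactly:
\[
\Pr[\text{accept}] = \sum_{x\in\X}\kappa(x)\cdot\frac{\pi(x)}{\tau\kappa(x)} = \frac{1}{\tau}\sum_{x\in\X}\pi(x)=\frac{1}{\tau}.
\]
Points with $\kappa(x)=0$ contribute nothing. This is harmless, since oversampling forces $\pi(x)=0$ at such points.

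The same computation also shows that, conditioned on acceptance, the output has law $\pi$. This justifies calling $\delta$ the failure probability in the hybrid distribution $\pi_\delta=(1-\delta)\pi+\delta\nu$.

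The second step uses the fact that the samples $x_1,\dots,x_m$ and the uniform variables $R_i$ are drawn independently. Hence the events that trial $i$ rejects are independent, each with probability $1-1/\tau$. The procedure fails, and falls back to $\nu$, exactly when all $m$ trials reject, so
\[
\delta=\left(1-\frac{1}{\tau}\right)^{m}\le e^{-m/\tau},
\]
where the last inequality uses $1-z\le e^{-z}$ for $z=1/\tau\in(0,1]$. Note that $\tau\ge1$ automatically, because summing $\pi\le\tau\kappa$ over $\X$ gives $1\le\tau$.

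There is no real obstacle here; the only point needing care is bookkeeping. In $\qkmeans$ the loop runs $m\ln k$ times, and the acceptance ratio is $\cost(x,\Query(\L,x))/(2\rho^{-1}(\|x\|^2+\|c_1\|^2))$. The target is then $\pi^\L(\cdot|C)$, and $\tau$ is the corresponding oversampling constant, of order $\rho^{-1}\cost^\L$-normalization relative to $\|\X\|_F^2+n\|c_1\|^2$. Once one checks that this ratio equals $\pi^\L(x|C)/(\tau\kappa(x|C))$ for that $\tau$, the same argument gives $\delta\le e^{-m\ln k/\tau}=k^{-m/\tau}$. This is the form that feeds into Theorem~\ref{thm:approx guarantee}.
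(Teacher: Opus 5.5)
Your proposal is correct and follows the paper's argument: each trial accepts with probability exactly $1/\tau$ (the computation in the paper's rejection-sampling correctness lemma), the $m$ trials are independent, and so $\delta=(1-1/\tau)^m\le e^{-m/\tau}$ by $1-z\le e^{-z}$. Your bookkeeping remark also agrees with the paper: the paper's Lemma~\ref{lem:oversample} uses the larger constant with $\cost(\X,S)$ in place of your $\cost^\L(\X,C)$, and this only weakens the resulting bound $\delta\le k^{-m/\tau}$.
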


In our algorithm, we appropriately choose the distributions $\kappa$ and $\nu$ over $\X$ so that $\kappa$ $\tau$-\textit{oversamples} $\pi$ (for a suitable $\tau$) and for which we obtain samples efficiently.

\subsection{Proof of Correctness for Rejection Sampling}

We now prove that the rejection sampling procedure indeed produces samples from the desired distribution.

{}
\begin{lemma}(Rejection sampling correctness)
Let $\pi, \kappa$ be probability distributions over $\X$ such that $\kappa$ $\tau$-oversamples $\pi$. If we sample $x$ from $\kappa$ and accept it with probability $\rho(x) = \frac{\pi(x)}{\tau \kappa(x)}$, then the accepted sample comes from distribution $\pi$.
\end{lemma}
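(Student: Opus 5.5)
The plan is to compute the output distribution of a single trial of the procedure directly and then condition on acceptance; this is the same calculation as in the proof of Lemma~\ref{lem:rejection}, with $M$ replaced by $\tau$. Fix any $x \in \X$. A single trial draws a candidate from $\kappa$ and accepts it with probability $\rho(x) = \pi(x)/(\tau\kappa(x))$.

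First I would check that $\rho(x)$ is a valid probability. By $\tau$-oversampling, $\pi(x) \le \tau\kappa(x)$, so $\rho(x) \le 1$. For points with $\kappa(x) = 0$, the oversampling condition forces $\pi(x) = 0$ as well. Such points are never proposed, so they can be ignored, or we can adopt the convention $\rho(x) = 0$ for them.

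Next I would compute the joint probability that a trial proposes $x$ and accepts it:
\[
\Pr[\text{propose } x,\ \text{accept}] = \kappa(x)\cdot \frac{\pi(x)}{\tau\kappa(x)} = \frac{\pi(x)}{\tau}.
\]
Summing over $x \in \X$ gives $\Pr[\text{accept}] = \frac{1}{\tau}\sum_{x}\pi(x) = 1/\tau$. Dividing, the probability of $x$ conditioned on acceptance is $\pi(x)$, so an accepted sample in a single trial is distributed according to $\pi$.

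Finally, I would extend this to the iterated procedure that scans the samples $x_1, x_2, \ldots$ until the first acceptance. Trials are i.i.d., so conditioned on the first acceptance occurring at trial $i$, the earlier rejections are independent of trial $i$. The accepted value is therefore distributed as a single trial conditioned on acceptance, which is $\pi$. Summing over $i$, the output of the full procedure conditioned on termination by acceptance is exactly $\pi$.

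The only mild subtlety is the zero-probability points together with this independence argument across trials; there is no real obstacle.
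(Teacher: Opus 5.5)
Your proposal is correct and follows the paper's proof essentially verbatim: compute $\kappa(x)\cdot\frac{\pi(x)}{\tau\kappa(x)} = \pi(x)/\tau$, sum over $x$ to get total acceptance probability $1/\tau$, and divide. Your extra checks are valid details that the paper leaves implicit: that $\rho(x)\le 1$, that points with $\kappa(x)=0$ are harmless, and that the conclusion carries over to the procedure of scanning i.i.d.\ trials until the first acceptance.
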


\begin{proof}
For any $x \in \X$, the probability that $x$ is the accepted sample is:
\begin{align*}
\Pr[\text{accept } x] &= \Pr[\text{sample } x \text{ from } \kappa] \cdot \Pr[\text{accept} \mid \text{sampled } x] \\
&= \kappa(x) \cdot \frac{\pi(x)}{\tau \kappa(x)} = \frac{\pi(x)}{\tau}
\end{align*}

The total probability of accepting any sample is:
\begin{equation*}
\Pr[\text{accept}] = \sum_{x \in \X} \Pr[\text{accept } x] = \sum_{x \in \X} \frac{\pi(x)}{\tau} = \frac{1}{\tau}
\end{equation*}

Therefore, the conditional probability that the accepted sample is $x$ is:
\begin{equation*}
\Pr[x \mid \text{accept}] = \frac{\Pr[\text{accept } x]}{\Pr[\text{accept}]} = \frac{\pi(x)/\tau}{1/\tau} = \pi(x)
\end{equation*}

Thus, accepted samples follow distribution $\pi$.
\end{proof}

{}
\begin{lemma}\label{lem:expected samples}(Expected number of samples)
Using the rejection sampling procedure with $\kappa$ $\tau$-oversampling $\pi$, the expected number of samples from $\kappa$ needed to obtain one accepted sample is exactly $\tau$.
\end{lemma}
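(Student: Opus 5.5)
The plan is to view the procedure as a sequence of independent Bernoulli trials and compute the mean of the resulting geometric random variable. Each trial draws $x \sim \kappa$ and then an independent $R \sim \textsc{Uniform}(0,1)$, and accepts if $R \le \rho(x) = \pi(x)/(\tau\kappa(x))$. The $\tau$-oversampling hypothesis $\pi(x) \le \tau\kappa(x)$ guarantees $\rho(x) \in [0,1]$, so this is a valid acceptance probability. For points with $\kappa(x)=0$ we also have $\pi(x)=0$, so they are drawn with probability zero and can be ignored.

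\emph{Step 1: success probability of one trial.} Condition on the sampled point and then average over it:
\[
p := \Pr[\text{accept}] = \sum_{x \in \X} \kappa(x)\cdot \frac{\pi(x)}{\tau \kappa(x)} = \frac{1}{\tau}\sum_{x\in\X}\pi(x) = \frac{1}{\tau}.
\]
This is exactly the computation in the proof of the preceding correctness lemma, and I will simply reuse it. Note that $\tau \ge 1$ automatically: summing $\pi(x) \le \tau\kappa(x)$ over $x$ gives $1 \le \tau$. Hence $p \in (0,1]$.

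\emph{Step 2: independence across trials.} The pairs $(x_i,R_i)$ are i.i.d. by construction. The acceptance events are therefore independent, and each has probability $p = 1/\tau$. Let $N$ be the index of the first accepted sample. Then $N$ is geometric with parameter $p$:
\[
\Pr[N=j] = (1-p)^{j-1}p, \qquad j \ge 1.
\]
Because $p>0$, $N$ is finite almost surely.

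\emph{Step 3: expectation.} Use the tail-sum formula:
\[
\E[N] = \sum_{j\ge 1}\Pr[N \ge j] = \sum_{j\ge1}(1-p)^{j-1} = \frac{1}{p} = \tau.
\]
There is no real obstacle here. The only point needing care is that the statement says \emph{exactly} $\tau$, which requires the acceptance rule to use the same constant $\tau$ that appears in the oversampling bound. In particular, $\tau$ is not the tightest ratio $\max_x \pi(x)/\kappa(x)$. I would state this explicitly, since with any valid $\tau$ the per-trial acceptance probability is exactly $1/\tau$.
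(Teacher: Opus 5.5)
Your proof is correct and follows the same route as the paper: each trial accepts with probability exactly $1/\tau$ (reusing the correctness computation), so the number of trials is geometric with parameter $1/\tau$ and has mean $\tau$. Your extra remarks (that $\tau \ge 1$, that points with $\kappa(x)=0$ are harmless, and the tail-sum evaluation) only fill in details the paper leaves implicit.
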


\begin{proof}
Each sample from $\kappa$ is accepted with probability $1/\tau$. Therefore, the number of samples follows a geometric distribution with success probability $p = 1/\tau$, which has expected value $\tau$.
\end{proof}

{}
\begin{lemma}(Bounded rejection sampling) \label{lem:bounded rejection sampling}
For $\mathsf{RejectionSample}(m)$ with fallback distribution $\nu$, the output distribution is $(1-\delta)\pi + \delta \nu$ where $\delta \leq e^{-m/\tau}$.
\end{lemma}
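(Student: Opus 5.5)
We compute the output law of $\mathsf{RejectionSample}(m)$ exactly by conditioning on the iteration at which the procedure stops. In each of the $m$ trials a fresh $x\sim\kappa$ is drawn and accepted with probability $\rho(x)=\pi(x)/(\tau\kappa(x))$, which lies in $[0,1]$ because $\kappa$ $\tau$-oversamples $\pi$. As in the correctness lemma, the probability that a single trial accepts and outputs a particular $x$ is $\kappa(x)\cdot \pi(x)/(\tau\kappa(x)) = \pi(x)/\tau$. Summing over $x$, a single trial accepts with probability exactly $1/\tau$. Since the trials are independent, the probability that the first $i-1$ trials reject and the $i$-th accepts and outputs $x$ is $(1-1/\tau)^{i-1}\pi(x)/\tau$.

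Summing over $i=1,\dots,m$ gives the probability of outputting $x$ through some acceptance:
\[
\Pr[\text{output } x,\ \text{accepted}] = \frac{\pi(x)}{\tau}\sum_{i=1}^{m}\Bigl(1-\frac1\tau\Bigr)^{i-1} = \Bigl(1-\bigl(1-\tfrac1\tau\bigr)^m\Bigr)\pi(x).
\]
With probability $\delta := (1-1/\tau)^m$ all $m$ trials reject, and the procedure then outputs an independent sample from $\nu$. Hence the output distribution is exactly $(1-\delta)\pi(x)+\delta\nu(x)$.

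For the bound on $\delta$, we use $1-u\le e^{-u}$ with $u=1/\tau$, which gives $\delta=(1-1/\tau)^m\le e^{-m/\tau}$. This also yields Lemma~\ref{lem:failure probability bound}. If $\tau<1$ were allowed, $\pi\le\tau\kappa$ would force $\tau\ge1$ after summing, so $1/\tau\in(0,1]$ and the geometric series is valid.

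There is no real obstacle here. The only care needed is to treat $\delta$ as the exact quantity $(1-1/\tau)^m$, independent of which point is output, so that the mixture form holds with an exact $\delta$ rather than merely a bound. In the application, where $\pi$ is replaced by $\pi^\L$ and $\nu$ is uniform, this gives precisely the $(\rho,\delta)$ perturbed distribution.
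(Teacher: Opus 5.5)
Your proof is correct and follows the paper's route: the failure probability is exactly $(1-1/\tau)^m \le e^{-m/\tau}$, an accepted sample is $\pi$-distributed, and otherwise the output comes from $\nu$. Your explicit sum over the index of the accepting trial makes rigorous a step the paper delegates to the single-trial correctness lemma, namely that the accepted output has law $\pi$ regardless of when acceptance occurs.
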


\begin{proof}
The probability that no sample is accepted after $m$ trials is:
\begin{equation*}
\delta = \left(1 - \frac{1}{\tau}\right)^m \leq e^{-m/\tau}
\end{equation*}
where we used the inequality $(1-x) \leq e^{-x}$ for $x \geq 0$.

With probability $1-\delta$, a sample is accepted and comes from $\pi$ (by the first lemma). With probability $\delta$, we fallback to sampling from $\nu$. Therefore, the output distribution is $(1-\delta)\pi + \delta \nu$.
\end{proof}

\subsection{Using rejection sampling for $Qk$-means++} 

In the context of our algorithm, we identify the proposal distribution as $\kappa = \kappa(\cdot | S)$, the target distribution as $\pi = \pi(\cdot | S)$ and the fallback distribution as $\nu = \operatorname{uniform}(\X)$. We now show that $\kappa$ $\tau$-oversamples $\pi$ for a suitable $\tau$.

\begin{lemma}(Oversampling factor)\label{lem:oversample} The distribution $\kappa(\cdot|s_1)$ $\tau$-oversamples the distribution $\pi^\L(\cdot|S)$ for 
\begin{equation*}
    \tau = 2 \rho^{-1}\frac{\|\X\|_F^2 + n\|s_1\|^2}{\cost(\X,S)}
\end{equation*}
\end{lemma}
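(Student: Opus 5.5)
We need to show that $\pi^\L(x|S) \le \tau\,\kappa(x|s_1)$ for every $x \in \X$, where $\tau$ is as in the statement. The plan is a pointwise comparison: bound the numerator of $\pi^\L(x|S)$ from above by a quantity proportional to $\|x\|^2 + \|s_1\|^2$, and bound its denominator from below by $\cost(\X,S)$.

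\textbf{Numerator.} By Lemma~\ref{lem:away} we have $\cost^\L(x,S) \le \rho^{-1}\cost(x,S)$. Since $s_1 \in S$, it follows that $\cost(x,S) \le \|x - s_1\|^2$. The parallelogram inequality $\|a-b\|^2 \le 2\|a\|^2 + 2\|b\|^2$ then gives $\|x-s_1\|^2 \le 2(\|x\|^2 + \|s_1\|^2)$. Chaining these,
\[
\cost^\L(x,S) \le 2\rho^{-1}\bigl(\|x\|^2 + \|s_1\|^2\bigr).
\]
This is exactly the quantity appearing in the denominator of the acceptance ratio $r$ in Algorithm~\ref{proc:sample}, so the acceptance probability is at most $1$, which is consistent with the claim.

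\textbf{Denominator.} Summing the left inequality of Lemma~\ref{lem:away} over $x \in \X$ gives $\cost^\L(\X,S) \ge \cost(\X,S)$.

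\textbf{Combining.} Putting the two bounds together,
\[
\pi^\L(x|S) = \frac{\cost^\L(x,S)}{\cost^\L(\X,S)} \le \frac{2\rho^{-1}(\|x\|^2+\|s_1\|^2)}{\cost(\X,S)} = 2\rho^{-1}\frac{\|\X\|_F^2 + n\|s_1\|^2}{\cost(\X,S)}\cdot \kappa(x|s_1) = \tau\,\kappa(x|s_1).
\]
The middle equality follows by multiplying and dividing by $\|\X\|_F^2 + n\|s_1\|^2$ and recognizing $\kappa(x|s_1)$ from Definition~\ref{def:proposal}. Throughout we work in centered coordinates, after preprocessing.

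The argument is elementary; the only point needing care is the notational ambiguity between $s_1$ and $c_1$. The statement needs $s_1$ to be the first \emph{chosen} center, i.e.\ an element of $S$, not the optimal center. That membership is exactly what licenses $\cost(x,S) \le \|x-s_1\|^2$. Note also that this step uses the true cost; the approximate cost is only invoked through Lemma~\ref{lem:away}, conditioned on $\L$ being successful.
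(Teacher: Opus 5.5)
Your proof is correct and follows the same route as the paper. It bounds the numerator via the chain $\cost^\L(x,S)\le\rho^{-1}\cost(x,S)\le\rho^{-1}\|x-s_1\|^2\le 2\rho^{-1}(\|x\|^2+\|s_1\|^2)$, bounds the denominator via $\cost^\L(\X,S)\ge\cost(\X,S)$, and then rescales to recognize $\kappa(x|s_1)$; your observation that $s_1$ must be the first chosen center (an element of $S$), not an optimal center, correctly flags the one notational point the paper leaves implicit.
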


\begin{proof}
    The key to this observation is the following use of Cauchy-Schwartz inequality : 
    \begin{equation*}
        \cost^\L(x,S) \leq \rho^{-1}\cost(x,S) \leq \rho^{-1}\|x-s_1\|^2 \leq 2\rho^{-1}(\|x\|^2 + \|s_1\|^2)
    \end{equation*}
Dividing both sides by $\cost^{L}(\X,S)$, we get 
\begin{equation*}
    \frac{\cost^\L(x,S)}{\cost^\L(\X,S)} \leq \frac{2\rho^{-1}(\|x\|^2 + \|s_1\|^2)}{\cost^\L(\X,S)} \leq \frac{2\rho^{-1}(\|x\|^2 + \|s_1\|^2)}{\cost(\X,S)}
\end{equation*}

from which we can see 
\begin{equation*}
     \frac{\cost^\L(x,S)}{\cost^\L(\X,S)} \leq \left(2 \rho^{-1} \frac{\|\X\|_F^2+n\|s_1\|^2}{\cost(\X,S)}\right) \frac{\|x\|^2+\|s_1\|^2}{\|\X\|_F^2 + n \|s_1\|^2}
\end{equation*} 

This completes the proof. 
\end{proof}

Finally, let us relate this oversampling factor to the quantity $\beta(\X) := \frac{\Delta_1(\X)}{\Delta_k(\X)}$
\begin{lemma}\label{lem:bounding E[tau]}
For the oversampling factor $\tau$ in Lemma~\ref{lem:oversample}, we have $\E[\tau] \leq 4\rho^{-1}\beta(\X)$.    
\end{lemma}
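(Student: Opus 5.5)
Since the data have been centered in \textsc{PreProcess}, we have $\mu(\X)=0$, so $\|\X\|_F^2 = \cost(\X,\mu) = \opt_1(\X)$. Moreover, $s_1$ is the first center, drawn uniformly from $\X$. Hence
\[
\E_{s_1}\bigl[n\|s_1\|^2\bigr] = \sum_{x\in\X}\|x\|^2 = \|\X\|_F^2 = \opt_1(\X),
\]
and the expected numerator of $\tau$ in Lemma~\ref{lem:oversample} is $2\rho^{-1}\cdot 2\opt_1(\X)$. The whole argument therefore reduces to controlling the denominator $\cost(\X,S)$ against $\opt_1(\X)$.

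\textbf{Step 1 (denominator).} For any set $S$ of at most $k$ centers, $\cost(\X,S)\ge \opt_k(\X)$. This bound is deterministic and holds for every realization of the algorithm, including sets with fewer than $k$ centers, since adding centers only decreases cost. It gives the pointwise inequality
\[
\tau \le 2\rho^{-1}\,\frac{\|\X\|_F^2 + n\|s_1\|^2}{\opt_k(\X)}.
\]

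\textbf{Step 2 (expectation over $s_1$).} The right-hand side now depends only on the random choice of $s_1$. Taking expectation and using the computation above yields
\[
\E[\tau]\le 2\rho^{-1}\cdot \frac{2\opt_1(\X)}{\opt_k(\X)} = 4\rho^{-1}\beta_k(\X),
\]
which is the claim, reading $\beta(\X)$ as $\opt_1(\X)/\opt_k(\X)$. If one insists on the quantization-theoretic $\Delta_1/\Delta_k$ form, this last step would instead invoke Theorem~\ref{thm:scaling-laws}.

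\textbf{Main obstacle.} The delicate point is the dependence between the numerator and the denominator: both depend on $s_1$, and $\cost(\X,S)$ also depends on the later centers. Step 1 avoids this by replacing the denominator with the deterministic lower bound $\opt_k(\X)$ before taking expectations, so no Jensen-type argument or independence assumption is needed. A second point to check is that centering does not affect $\opt_1$ or $\opt_k$, which holds because both are translation invariant, and likewise for the optional JL step up to a $(1\pm\eps_{\operatorname{JL}})$ factor. The bound is stated with $\beta(\X)=\opt_1/\opt_k$, which upper bounds $\beta(\X,C)$ for every $C$, as the paper notes after Definition~\ref{def:parameters}.
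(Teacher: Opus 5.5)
Your proof is correct and matches the paper's argument: you lower-bound the denominator $\cost(\X,S)$ pointwise by the optimal $k$-means cost, use that $s_1$ is uniform so that $\E[n\|s_1\|^2]=\|\X\|_F^2$, and use centering to identify $\|\X\|_F^2$ with $\opt_1(\X)$. The paper's $\Delta_1(\X)/\Delta_k(\X)$ here is just the empirical ratio $\opt_1(\X)/\opt_k(\X)$, so your fallback appeal to Theorem~\ref{thm:scaling-laws} is unnecessary, and it would in any case only give an approximate, high-probability bound rather than the exact inequality.
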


\begin{proof}
    \begin{align*}
        \E[\tau] &\leq 2\rho^{-1}\frac{\|X\|^2 + n\E[\|s_1\|^2]}{\Delta_k(\X)}\\
        &= 2\rho^{-1}\frac{\|X\|^2 + n\cdotp \frac{1}{n}\sum_{x\in\X}\|x\|^2}{\Delta_k(\X)}\\
        &= \frac{4 \rho^{-1}\|\X\|^2}{\Delta_k(\X)}\ = \frac{4\rho^{-1}\Delta_1(\X)}{\Delta_k(\X)}
    \end{align*}
    where the last inequality follows from the fact that $\X$ is translated so that its centroid is at the origin.
\end{proof}

\subsection{Proof of main theorem}
We finally need to combine all the results from the previous sections to prove Theorem~\ref{thm:tradeoffs}, which claims existence of a fast rejection-sampling based $k$-means++ while giving an approximation bound of the clustering from the optimal one. Corollary~\ref{thm:assumption} will then follow as a corollary, together with Assumption~\ref{ass} and Theorem~\ref{thm:scaling-laws}. 

Let us see how $\textsc{QKmeans}(\X,k,m)$ (Algorithm~\ref{alg:qkmeans}) satisfies the claims made by Theorem~\ref{thm:tradeoffs}. As seen in Subsection~\ref{subsec:data structure}, the preprocessing step takes a time of $O(nD)$. Let us now see how it helps in fast sampling from the distribution $\kappa(\cdot|C)$ which is step 4 of $\textsc{Sample}(\L,C,m)$. 

Here is how this step can be performed:

\begin{algorithm}
    \floatname{algorithm}{Procedure}
    \caption{$\mathtt{SampleDistribution}$}
    \label{alg:sample D_2}
    \textbf{Input: } A set of centers $C \subseteq \X$ \\
    \textbf{Output: } A sample according to the distribution $\kappa(\cdot|C)$ defined as $\kappa(x|C) = \frac{\|x\|^2 + \|c_1\|^2}{\|\X\|^2 + n\|c_1\|^2}$ \\
    \vspace{-12pt}
    \begin{algorithmic}[1]
    \STATE Generate $r \sim \textsc{Uniform}([0,1])$
    \IF{$r \leq \frac{\|\X\|_F^2}{\|\X\|_F^2 + n\|c_1\|^2 }$}
        \STATE Generate a sample $x$ with probability $\kappa(x)$
    \ELSE
        \STATE Generate a sample $x \sim \textsc{Uniform}(\X)$
    \ENDIF
    \STATE \textbf{output} $x$
    \end{algorithmic}
\end{algorithm}

\begin{lemma}\label{lem:sample D_2}
    Procedure~\ref{alg:sample D_2} produces a sample from $\X$ according to the distribution $\kappa(\cdot|C)$. Moreover, it takes $O(\log n)$ time after a one time preprocessing time of $O(nD)$.
\end{lemma}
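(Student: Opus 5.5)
The plan is to verify that Procedure~\ref{alg:sample D_2} is the mixture of two easy-to-sample distributions, and then to account for the cost of each step separately. Write $Z := \|\X\|_F^2 + n\|c_1\|^2$. The target can be decomposed as
\[
\kappa(x|C) = \frac{\|\X\|_F^2}{Z}\cdot\frac{\|x\|^2}{\|\X\|_F^2} + \frac{n\|c_1\|^2}{Z}\cdot\frac{1}{n},
\]
that is, as a convex combination of $\kappa(\cdot)$ (the $\ell_2$-norm distribution computed in \textsc{PreProcess}) and $\textsc{Uniform}(\X)$. The two mixing weights are nonnegative and sum to $1$.

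For correctness, I will condition on the coin $r$. With probability $\|\X\|_F^2/Z$ the procedure outputs $x$ with probability $\kappa(x)=\|x\|^2/\|\X\|_F^2$. Otherwise it outputs $x$ with probability $1/n$. Adding the two branches via the law of total probability gives exactly the displayed decomposition, hence $\kappa(x|C)$. I will treat the degenerate case $\|\X\|_F^2=0$ separately: then all centered points are $0$, so $c_1=0$ as well and $Z=0$. In this case I either take the convention that $\kappa(\cdot|C)$ is uniform or note that the procedure's first branch never fires.

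For the running time, I will invoke Corollary~\ref{cor:preprocessing time}. It gives an $O(nD)$-time construction, built once during \textsc{PreProcess} together with the centering and the computation of $\|\X\|_F^2$, after which a sample from $\kappa(\cdot)$ costs $O(\log n)$ using Lemma~\ref{lem:bst}. The remaining per-call costs are as follows.
\begin{itemize}
\item $\|c_1\|^2$ is computed once, after $c_1$ is drawn. This takes $O(D)$ time, which is absorbed into the preprocessing, since $c_1$ is fixed for the whole run. Alternatively one can simply read the stored leaf weight of $c_1$ in the tree in $O(\log n)$.
\item The threshold $\|\X\|_F^2/Z$ is then an $O(1)$ arithmetic operation, since $\|\X\|_F^2$ is available in $O(1)$ by Lemma~\ref{lem:bst}.
\item A uniform sample from $\X$ costs $O(1)$, or $O(\log n)$ if we only count random bits.
\end{itemize}
Each call therefore takes $O(\log n)$ time.

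The only subtle point is that the threshold must not cost $\Theta(D)$ per call, which would spoil the bound. Caching $\|c_1\|^2$ once, since the proposal depends on $C$ only through $c_1$, resolves this.
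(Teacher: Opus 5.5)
Your proposal is correct and follows the paper's proof. The paper likewise conditions on the coin $r$, writes the output probability as $\frac{\|\X\|_F^2}{\|\X\|_F^2+n\|c_1\|^2}\cdot\frac{\|x\|^2}{\|\X\|_F^2}+\frac{n\|c_1\|^2}{\|\X\|_F^2+n\|c_1\|^2}\cdot\frac{1}{n}=\kappa(x|C)$, and takes the running time from Corollary~\ref{cor:preprocessing time}. Your extra remarks, on the degenerate case $\|\X\|_F^2=0$ and on caching $\|c_1\|^2$ so the threshold costs $O(1)$ per call, are sound refinements that the paper leaves implicit.
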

\begin{proof}
    The probability of a sampled point is as follows:
    \begin{align*}
        \Pr[x] &= \Pr\left[r \leq \frac{\|\X\|_F^2}{\|\X\|_F^2 + n\|c_1\|^2 } \right]\Pr[x \sim \kappa(\cdot)]\\
        &\ \ \ \ + \Pr \left[ r > \frac{\|\X\|_F^2}{\|\X\|_F^2 + n\|c_1\|^2} \right]\Pr[x \sim \textsc{Uniform}(\X)] \\
 &= \frac{\|\X\|_F^2}{\|\X\|_F^2 + n\|c_1\|^2 } \frac{\|x\|^2}{\|\X\|_F^2}+ \frac{n\|c_1\|^2}{\|\X\|_F^2 + n\|c_1\|^2 } \frac{1}{n} \\
 &= \frac{(\|x\|^2 + \|c_1\|^2)}{(\|\X\|_F^2 + n\|c_1\|^2)} = \kappa(x|C)
    \end{align*} 
    The time complexity follows from Corollary~\ref{cor:preprocessing time}.
\end{proof}

Next, let us recall from Lemma~\ref{lem:lsh} that the LSH-based ANNS data structure supports (conditioned on success with probability $1-1/t$) a $\Insert(\L,p)$ operation in time $O\left(D \log \eta (t \log \eta)^{\rho}\right)$ and a $\Query(\L,p)$ operation in time $O\left(D \log \eta (t \log \eta)^{\rho}\right)$, where $t$ is the size of the set of points $P$ to be inserted. In our case, the $\textsc{QKmeans}(\X,k,m)$ algorithm only inserts the chosen cluster centers in the dataset, i.e. $t=k$ (Note that the $\Query(\L,p)$ operation can be made for any $p\in\X$). 

Also, recall that here, $D$ is the dimension of the dataset after applying the linear transformation from \cite{makarychev2019jl}, so that $D=O(\log k)$, with a constant distortion in the clustering cost w.r.t. any centers. 

Now, observe that the $\textsc{Insert}$ operation is invoked $k$ times, and the $\textsc{Sample}$ subroutine is invoked $k$ times from the $\textsc{QKmeans}$ algorithm. Within the $\textsc{Sample}$ subroutine, a $\textsc{Query}$ operation and a sampling from $\kappa(\cdot|C)$ operation are invoked atmost $m\ln k$ times. This means that there are a total of $k$ $\textsc{Insert}$ operations, $mk\ln k$ $\textsc{Query}$ operations and $mk\ln k$ sampling operations from $\kappa(\cdot|C)$.

Thus we obtain a time complexity (after the processing step) of 
$$O(kD\log\eta(t\log\eta)^\rho) + O(mk(\ln k)D\log\eta(t\log\eta)^\rho) + O(mk\ln k\log n)$$
which after substituting values of $D,t$ can be simplified to $O(m(\log^2k)(k\log\eta)^{1+\rho} + mk\ln k\log n)$. This completes the time complexity analysis. 

Now let us analyse the approximation guarantee. From Theorem~\ref{thm:approx guarantee}, the $(\rho,\delta)$-$k$-means++ follows an approximation guarantee of: \begin{equation*}
\E[\cost(\X,C)] \in O(\rho^{-2} \log k)\opt_k(\X) + \delta  \cdot O(k + \rho^{-2} \log k) \opt_1(\X)
\end{equation*}
where $\delta$ is the probability of choosing centers via the uniform distribution. From Lemma~\ref{lem:failure probability bound} and Lemma~\ref{lem:oversample}, we get that $\delta\leq e^{-m\ln k/\tau}$ for $\tau = 2 \rho^{-1}\frac{\|\X\|_F^2 + n\|s_1\|^2}{\cost(\X,S)} = 2\rho^{-1}\beta(\X,S)$. Thus $\delta\leq e^{-\frac{\rho m\ln k}{2\beta(\X,S)}}$. Clearly $\delta  \cdot O(k + \rho^{-2} \log k) \leq O(\rho^{-2})e^{-\rho m/\beta}$ and hence we obtain the expected approximation guarantee of Theorem~\ref{thm:tradeoffs}.

Corollary~\ref{thm:rejection-sampling} follows as a corollary from Theorem~\ref{thm:tradeoffs}; if we allow $m=\infty$ in $\textsc{Sample}(\L,C,m)$, we can see that the `break' statement will be issued in expected time $O(\tau)$, since we need $\tau = 2\rho^{-1}\beta(\X,C)$, $\E[\tau] \leq 4\rho^{-1}\beta(\X)$ (from Lemma~\ref{lem:bounding E[tau]}) samples of distribution $\kappa(\cdot|C)$ in expectation to get a desired sample by Lemma~\ref{lem:expected samples}. 

Finally, Corollary~\ref{thm:assumption} can be obtained as a corollary of Corollary~\ref{thm:rejection-sampling} assuming the manifold hypothesis. Theorem~\ref{thm:scaling-laws} gives us $\beta(\X) \leq \beta_0 k^\eps, \eta(\X) \leq \eta_0 n^{3\eps/2}$ for quantization exponent $\eps$. Substituting these in the time complexity bound of Corollary~\ref{thm:rejection-sampling} gives Corollary~\ref{thm:assumption}.

\newpage
\section{Empirical Study} \label{app:empirical}

In this section, we empirically evaluate the extent to which the theoretical predictions developed in this paper are reflected in real-world data. 

\paragraph{Compute environment. }All experiments were conducted on a dual-socket machine equipped with $\operatorname{Intel Xeon Gold 5220R}$ CPUs (48 physical cores total, 2.20 GHz) and 96 hardware threads, running $\operatorname{Linux}$ on $\operatorname{x86-64}$  architecture. Unless otherwise stated, all experiments were executed in a CPU-only setting with a fixed number of threads. Implementations were compiled with aggressive compiler optimizations, and all reported runtimes are wall-clock times averaged over multiple random seeds.

\paragraph{Datasets. } We use datasets across several domains for our empirical evaluation. We include popular image datasets, both as raw pixels as well as image embeddings which are produced through joint vision-language models such as $\operatorname{CLIP-ViT-B/32}$; text datasets in the form of text embeddings generated by the $\operatorname{All-MiniLM-L6-v2}$ model; tabular datasets which record sensor data as well as data generated by scientific experiments. The details of the datasets used are presented in Table~\ref{tab:datasets}. 

\begin{table}[ht]
\centering
\begin{tabular}{p{4cm}lcc}
\toprule
\textsc{Category} & \textsc{Dataset} & $n$ & $d$ \\
\midrule

\multirow{4}{*}{\parbox{4cm}{\centering\textsc{Raw Images}}}
& \textsc{MNIST}     & 60{,}000  & 784    \\
& \textsc{FMNIST}    & 60{,}000  & 784    \\
& \textsc{CIFAR10}   & 50{,}000  & 3{,}072 \\
& \textsc{CIFAR100}  & 50{,}000  & 3{,}072 \\
\midrule

\multirow{4}{*}{\parbox{4cm}{\centering\textsc{CLIP ViT-B/32\\Image Embeddings}}}
& \textsc{MNIST}     & 60{,}000  & 512 \\
& \textsc{FMNIST}    & 60{,}000  & 512 \\
& \textsc{CIFAR10}   & 50{,}000  & 512 \\
& \textsc{CIFAR100}  & 50{,}000  & 512 \\
\midrule

\multirow{2}{*}{\parbox{4cm}{\centering\textsc{all-MiniLM-L6-v2\\Text Embeddings}}}
& \textsc{Reddit}        & 100{,}000 & 384 \\
& \textsc{StackExchange} & 20{,}000  & 384 \\
\midrule

\multirow{2}{*}{\parbox{4cm}{\centering\textsc{Tabular}}}
& \textsc{SUSY} & 5{,}000{,}000 & 18  \\
& \textsc{HAR}  & 10{,}299      & 561 \\
\bottomrule
\end{tabular}
\end{table}


\subsection{Scaling Laws}

We test the hypothesis that the data-dependent parameters $\beta_k = \opt_1/\opt_k$ and $\eta_k$ grow polynomially with the number of clusters $k$, as predicted by optimal quantization theory. For each dataset, we compute $k$-means clusterings for
$k \in \{5,10,50,100,250,500,750,1000\}$ using $10$ independent runs. We fit linear regressions to $\log \beta_k$ and $\log \eta_k$ versus $\log k$ and report slopes, $R^2$, and $95\%$ confidence intervals. Across datasets and modalities, both quantities exhibit reasonable linear behavior in log--log coordinates. The plots are shown in Figures \ref{fig:beta-grid} and \ref{fig:eta-grid}. Table~\ref{tab:scaling-results} provides the estimated quantization exponents and $R^2$ scores for the best fit estimates. The high $R^2$ scores ($ \gtrsim 0.95$) show that the scaling laws fit well on the data. 

\begin{table*}[t]
\centering

\begin{minipage}[t]{0.48\textwidth}
  \centering
  \caption{Scaling law parameters across datasets. $\varepsilon$ is estimated from $\beta_k \sim k^{\varepsilon}$.
  $R^2_\beta$ and $R^2_\eta$ are the coefficients of determination for the $\beta$ and $\eta$ scaling fits respectively.}
  \label{tab:scaling-results}
  \vspace{0.4em}
  \begin{tabular}{lccc}
    \toprule
    Dataset & $\hat{\varepsilon}$ & $\hat{R^2_\beta}$ & $\hat{R^2_\eta}$ \\
    \midrule
    \textsc{MNIST} & 0.1449 & 0.9998 & 0.9944 \\
    \textsc{Fashion-MNIST} & 0.1808 & 0.9847 & 0.9702 \\
    \textsc{CIFAR-10} & 0.0862 & 0.9945 & 0.9661 \\
    \textsc{CIFAR-100} & 0.0961 & 0.9919 & 0.9659 \\
    \textsc{MNIST-CLIP} & 0.1759 & 0.9980 & 0.9981 \\
    \textsc{FMNIST-CLIP} & 0.1665 & 0.9887 & 0.9809 \\
    \textsc{CIFAR-10-CLIP} & 0.1101 & 0.9898 & 0.9737 \\
    \textsc{CIFAR-100-CLIP} & 0.1166 & 0.9988 & 0.9938 \\
    \textsc{Reddit} & 0.0622 & 0.9899 & 0.9814 \\
    \textsc{StackExchange} & 0.0565 & 0.9826 & 0.9898 \\
    \textsc{HAR} & 0.1894 & 0.9984 & 0.8791 \\
    \textsc{SUSY} & 0.2644 & 0.9993 & 0.9693 \\
    \bottomrule
  \end{tabular}
\end{minipage}
\hfill
\begin{minipage}[t]{0.48\textwidth}
  \centering
  \caption{Comparison of intrinsic dimension estimates. $\varepsilon$ is the scaling exponent from $\beta \sim k^{\varepsilon}$.
  $d_{\varepsilon} = 2/\varepsilon$ is the intrinsic dimension estimate from the scaling law. $d_{\text{MLE}}$ is the maximum likelihood
  estimate of intrinsic dimension from \cite{levina2005maximum}.}
  \label{tab:id}
  \vspace{0.4em}
  \begin{tabular}{lccc}
    \toprule
    Dataset & $\varepsilon$ & $\hat{d_{\varepsilon}}$ & $\hat{d}_{\text{MLE}}$ \\
    \midrule
    \textsc{MNIST} & 0.1449 & 13.80 & 14.25 \\
    \textsc{Fashion-MNIST} & 0.1808 & 11.06 & 15.64 \\
    \textsc{CIFAR-10} & 0.0862 & 23.21 & 28.39 \\
    \textsc{CIFAR-100} & 0.0961 & 20.81 & 25.69 \\
    \textsc{MNIST-CLIP} & 0.1759 & 11.37 & 14.43 \\
    \textsc{FMNIST-CLIP} & 0.1665 & 12.01 & 16.22 \\
    \textsc{CIFAR-10-CLIP} & 0.1101 & 18.16 & 22.05 \\
    \textsc{CIFAR-100-CLIP} & 0.1166 & 17.15 & 20.13 \\
    \textsc{Reddit} & 0.0622 & 32.13 & 29.22 \\
    \textsc{HAR} & 0.1894 & 10.56 & 14.75 \\
    \textsc{SUSY} & 0.2644 & 7.56 & 7.88 \\
    \textsc{StackExchange} & 0.0565 & 35.40 & 28.48 \\
    \bottomrule
  \end{tabular}
\end{minipage}

\end{table*}

\subsection{Intrinsic Dimension}

We test the hypothesis that the empirically estimated quantization exponent $\widehat{\varepsilon}$ is inversely proportional to the intrinsic dimension. Independent intrinsic dimension estimates $\widehat{d}_{\operatorname{MLE}}$ are obtained by using the popular maximum likelihood method (MLE) \cite{levina2005maximum}. For a dataset $\X = \{x_1,\dots,x_n\}$, let $T_j(x_i)$ denote the distance of the $j$th nearest neighbor of $x_i$ from $x_i$. For a hyperparameter $k$, the ID estimate is calculated as : 
$$\widehat{d}_{\operatorname{MLE}}(\X,k) := \frac{1}{n} \sum_{i=1}^n \left(\frac{1}{k-1} \sum_{j=1}^{k-1} \log \frac{T_k(x_i)}{T_j(x_i)}\right)^{-1}$$
For each dataset we take $10$ independent samples each  of size $n' = 10,000$ to estimate the ID for a particular value of $k$. This is repeated for $k \in \{5,10,20,50,100\}$ and the average is reported. We plot the estimates $\widehat{\eps}$ of the quantization exponent against $\widehat{d}_{\operatorname{MLE}}$ in Figure~\ref{fig:eps-id}, and the estimated dimension $\hat{d_\eps} = 2/\hat{\eps}$ against $\hat{d}_{\operatorname{MLE}}$. We observe significant agreement with Pearson's $R = 0.92$ and an $R^2$ score of $0.84$. It is to be noted that ID estimation in itself is a challenging problem with several proposed estimators in the literature. We refer the reader to the surveys \cite{camastra201626,binnie2025surveydimensionestimationmethods} for further discussion on ID estimators. We have chosen the MLE estimator of \cite{levina2005maximum} since it is one of the most established and widely studied estimators in the literature. 
\begin{figure*}[ht]
\centering
\begin{minipage}[t]{0.48\textwidth}
  \centering
  \includegraphics[width=0.8\linewidth]{plots/eps_vs_d_eps.png}
  \caption{Relationship between $\varepsilon$ (from $\beta$ scaling) and $d_{\text{MLE}}$.}
  \label{fig:eps-id}
\end{minipage}
\hfill
\begin{minipage}[t]{0.48\textwidth}
  \centering
  \includegraphics[width=0.8\linewidth]{plots/d_eps_vs_d_mle.png}
  \caption{Comparison of $d_{\varepsilon}$ and $d_{\text{MLE}}$.}
  \label{fig:id}
\end{minipage}
\end{figure*}

\subsection{Effect of manifold stucture}

  To assess the robustness of the quantization scaling law $\beta_k \sim k^\varepsilon$ under distributional perturbation, we inject 
  isotropic Gaussian noise at 20 linearly-spaced noise-to-signal ratios (NSR) from 0 to 2 across four datasets (\textsc{MNIST}, \textsc{MNIST-CLIP}, \textsc{HAR}, 
  \textsc{StackExchange}), where $\X_{\operatorname{noisy}} = X + \text{NSR} \cdot \sigma(\X) \cdot \mathcal{N}(0,1)$, and re-fit the power-law exponent 
  $\varepsilon$ at each level. Across all datasets, $\varepsilon$ decays monotonically and convexly toward $2/D$, with the majority of 
  degradation occurring at low noise levels (NSR $< 0.5$), after which the exponent plateaus well above the ambient-dimension limit $2/D$. 
  The rate of decay varies with intrinsic structure: compact, low-dimensional representations such as MNIST-CLIP ($d \approx 11$, $D = 
  512$) lose most of their scaling signal by NSR $\approx 0.5$, while higher-dimensional datasets like StackExchange ($d \approx 39$, $D = 
  384$) degrade more gradually. This confirms that the scaling exponent captures genuine geometric structure rather than statistical 
  artifacts, and suggests that $\varepsilon$ could serve as a noise diagnostic — its sensitivity to perturbation is itself informative 
  about the signal-to-noise regime of the data. The plots are shown in Figure~\ref{fig:noisy}. 

\subsection{Rejection sampling analysis}

  Rejection Rate Experiment. We investigate the rejection sampling behavior of QKMEANS across 12 datasets by varying the chain length $m
  \in \{1, 2, 3, 5, 7, 10, 15, 20, 30, 50\}$ and number of clusters $k \in \{10, 50, 100\}$, averaging over 5 independent runs. For each center
  selection step, QKMEANS draws up to $m \ln(k+1)$ proposals from the distribution $\kappa(x) \propto \|x\|^2 + \|c_1\|^2$ and accepts a proposal
  with probability proportional to its distance to the nearest existing center; if all proposals are rejected, the algorithm falls back to
  uniform random selection. Low-dimensional and well-separated datasets (MNIST, FashionMNIST, SUSY) exhibit low rejection even at $m=1$ and reach
  zero rejection by $m \approx 5$–$10$. High-dimensional raw-pixel datasets (CIFAR-10, CIFAR-100) show moderate rejection at small $m$
  ($\sim$20–30) but converge to zero by $m \approx 60$–$80$. CLIP embeddings present the most challenging setting: FMNIST-CLIP and
  MNIST-CLIP maintain elevated rejection rates ($\sim$75–90 at $m=1$), with MNIST-CLIP retaining $\sim$10–15 rejection even at $m=50$,
  reflecting the tighter concentration of norm-based proposal distributions in these embedding spaces. In contrast, text embedding datasets
   (StackExchange, Reddit) exhibit near-zero rejection across all $m$ values, indicating that their $\kappa$ distribution closely
  approximates the $D^2$ distribution. Across all datasets, larger $k$ generally yields higher rejection fractions at a given $m$, as later
   center selections must distinguish among increasingly fine-grained cluster structure. These results suggest that $m \approx 10$–$20$
  suffices for most practical settings, while highly concentrated embedding datasets may benefit from larger chain lengths. The plots are shown in Figure~\ref{fig:rejection}.

\subsection{Effect of ANNS used }

Although our theoretical analysis uses Locality Sensitive Hashing for implementing nearest neighbor search, there are several other methods for doing the same. Our algorithm uses the ANNS algorithm as a black-box and hence, it can be easily replaced by a different method. We test several ANNS methods on the datasets to study the runtime-solution quality tradeoffs. We include brute-force search, hierarchicaly navigable small worlds \cite{malkov2020hnsw}, locality sensitive hasing \cite{indyk1998lsh}, inverted file index \cite{jegou2011product} and random projection trees \cite{dasgupta2008rptree}.For small $k \approx 50$, $\qkmeans$ is fastest with brute-force nearest-neighbor search, which outperforms LSH and tree-based ANNS by 2–4$\times$ even in very high dimensions, since ANNS overhead dominates at low k.
As k increases, this trend reverses: tree-based methods (e.g., RP-trees, HNSW) become increasingly advantageous by amortizing query costs across many seeding steps. The results are in Table~\ref{tab:anns-all}.  
\begin{table*}[htbp]
\centering
\caption{QKMEANS seeding time (ms) with different ANNS methods across all datasets and $k$ values.}
\label{tab:anns-all}
\small
\begin{tabular}{llrrrrr}
\toprule
\textsc{Dataset} & $k$ & \textsc{HNSW} & \textsc{BruteForce} & \textsc{LSH} & \textsc{IVF} & \textsc{RPTree} \\
\midrule

\multirow{3}{*}{\textsc{MNIST} ($d=784$)}
 & \(50\)   & 12.6  & 9.4   & 16.8   & 10.5  & 12.7 \\
 & \(200\)  & 43.3  & 24.8  & 46.6   & 22.1  & 23.9 \\
 & \(1000\) & 526.6 & 544.0 & 693.1  & 348.8 & 93.6 \\
\midrule

\multirow{3}{*}{\textsc{Fashion-MNIST} ($d=784$)}
 & \(50\)   & 14.8  & 8.8   & 19.4   & 9.1   & 12.5 \\
 & \(200\)  & 73.1  & 48.4  & 99.0   & 46.2  & 43.3 \\
 & \(1000\) & 852.6 & 1506.8 & 1961.1 & 1256.3 & 233.6 \\
\midrule

\multirow{3}{*}{\textsc{CIFAR-10} ($d=3072$)}
 & \(50\)   & 36.1  & 31.4  & 113.1  & 40.6  & 60.6 \\
 & \(200\)  & 341.5 & 301.6 & 681.5  & 327.4 & 231.3 \\
 & \(1000\) & 5242.8 & 8899.1 & 10951.5 & 8596.0 & 1396.1 \\
\midrule

\multirow{3}{*}{\textsc{CIFAR-100} ($d=3072$)}
 & \(50\)   & 31.8  & 29.3  & 105.3  & 33.4  & 52.3 \\
 & \(200\)  & 264.6 & 241.6 & 556.7  & 253.8 & 189.6 \\
 & \(1000\) & 4187.5 & 7065.1 & 8738.9 & 7012.6 & 1156.1 \\
\midrule

\multirow{3}{*}{\textsc{MNIST-CLIP} ($d=512$)}
 & \(50\)   & 33.0  & 15.6  & 78.4   & 23.8  & 34.5 \\
 & \(200\)  & 405.5 & 193.5 & 878.2  & 233.9 & 207.6 \\
 & \(1000\) & 4702.3 & 6664.2 & 26362.4 & 6985.7 & 1052.7 \\
\midrule

\multirow{3}{*}{\textsc{FMNIST-CLIP} ($d=512$)}
 & \(50\)   & 21.9  & 10.6  & 28.4   & 11.9  & 16.8 \\
 & \(200\)  & 155.7 & 76.6  & 299.7  & 123.7 & 72.2 \\
 & \(1000\) & 1828.8 & 2510.2 & 4885.6 & 2689.8 & 387.9 \\
\midrule

\multirow{3}{*}{\textsc{CIFAR10-CLIP} ($d=512$)}
 & \(50\)   & 11.8  & 5.5   & 16.2   & 11.8  & 13.2 \\
 & \(200\)  & 73.0  & 37.9  & 65.5   & 38.9  & 27.9 \\
 & \(1000\) & 735.5 & 825.2 & 960.7  & 856.4 & 141.6 \\
\midrule

\multirow{3}{*}{\textsc{CIFAR100-CLIP} ($d=512$)}
 & \(50\)   & 13.9  & 7.2   & 13.0   & 7.2   & 10.4 \\
 & \(200\)  & 59.6  & 31.3  & 52.7   & 30.4  & 24.8 \\
 & \(1000\) & 695.4 & 707.8 & 816.7  & 742.9 & 120.4 \\
\midrule

\multirow{3}{*}{\textsc{Reddit} ($d=384$)}
 & \(50\)   & 9.5   & 4.1   & 6.4    & 5.0   & 6.1 \\
 & \(200\)  & 21.2  & 12.2  & 14.4   & 9.7   & 9.7 \\
 & \(1000\) & 221.7 & 93.9  & 40.8   & 96.1  & 26.2 \\
\midrule

\multirow{3}{*}{\textsc{StackExchange} ($d=384$)}
 & \(50\)   & 7.6   & 3.9   & 7.5    & 5.9   & 5.0 \\
 & \(200\)  & 24.0  & 10.2  & 13.9   & 12.6  & 9.5 \\
 & \(1000\) & 225.4 & 84.8  & 39.1   & 87.6  & 24.4 \\
\midrule

\multirow{3}{*}{\textsc{HAR} ($d=561$)}
 & \(50\)   & 19.6  & 8.6   & 37.0   & 13.5  & 20.2 \\
 & \(200\)  & 171.8 & 107.8 & 212.9  & 113.1 & 87.5 \\
 & \(1000\) & 2186.7 & 3958.6 & 4748.6 & 3947.5 & 556.0 \\
\midrule

\multirow{3}{*}{\textsc{SUSY} ($d=19$)}
 & \(50\)   & 6.5   & 2.5   & 3.4    & 1.5   & 3.0 \\
 & \(200\)  & 35.8  & 6.1   & 12.1   & 6.7   & 5.3 \\
 & \(1000\) & 370.5 & 68.3  & 85.7   & 81.6  & 40.0 \\
\bottomrule
\end{tabular}
\vspace{1mm}
\end{table*}

\subsection{Comparison with Prior Art}

In our comparative analysis, we consider the following algorithms : 

\begin{itemize}
    \item $\afkmc$ from \cite{bachem2016good}
    \item $\rejsample$ from \cite{cohenaddad2020fast}
    \item $\pronecoreset$ from \cite{charikar2023simple}
    \item $\fastkmeans$ from \cite{draganov2024settle}
\end{itemize}

We followed the reference implementations from \href{https://github.com/adriangoe/afkmc2}{https://github.com/adriangoe/afkmc2} for $\afkmc$, from \href{https://git.informatik.uni-hamburg.de/0moin/google-research/-/tree/master}{https://git.informatik.uni-hamburg.de/0moin/google-research/-/tree/master}for $\rejsample$ , from \href{https://github.com/boredoms/prone}{https://github.com/boredoms/prone} for $\pronecoreset$ and from \href{https://github.com/Andrew-Draganov/Fast-Coreset-Generation}{https://github.com/Andrew-Draganov/Fast-Coreset-Generation} for $\fastkmeans$. We use the default hyperparameters which are specified in these implementations. For implementing $\qkmeans$, we use the HNSW ANNS algorithm. 

Figure~\ref{fig:tradeoff-plots} shows the seeding time and seeding quality tradeoffs for various algorithms across datasets. These results are also tabulated in Tables~\ref{tab:benchmark_mnist}-\ref{tab:benchmark_susy}.

\subsection{Summary of Empirical Findings}

Across all datasets, we observe consistent evidence for quantization-theoretic scaling, a strong inverse relationship between intrinsic dimension and scaling exponents, and stable rejection-sampling behavior. These results collectively support the central premise that geometric structure governs the empirical behavior of fast $k$-means seeding algorithms.

\begin{figure}[H]
    \centering
    \includegraphics[width=0.95\linewidth]{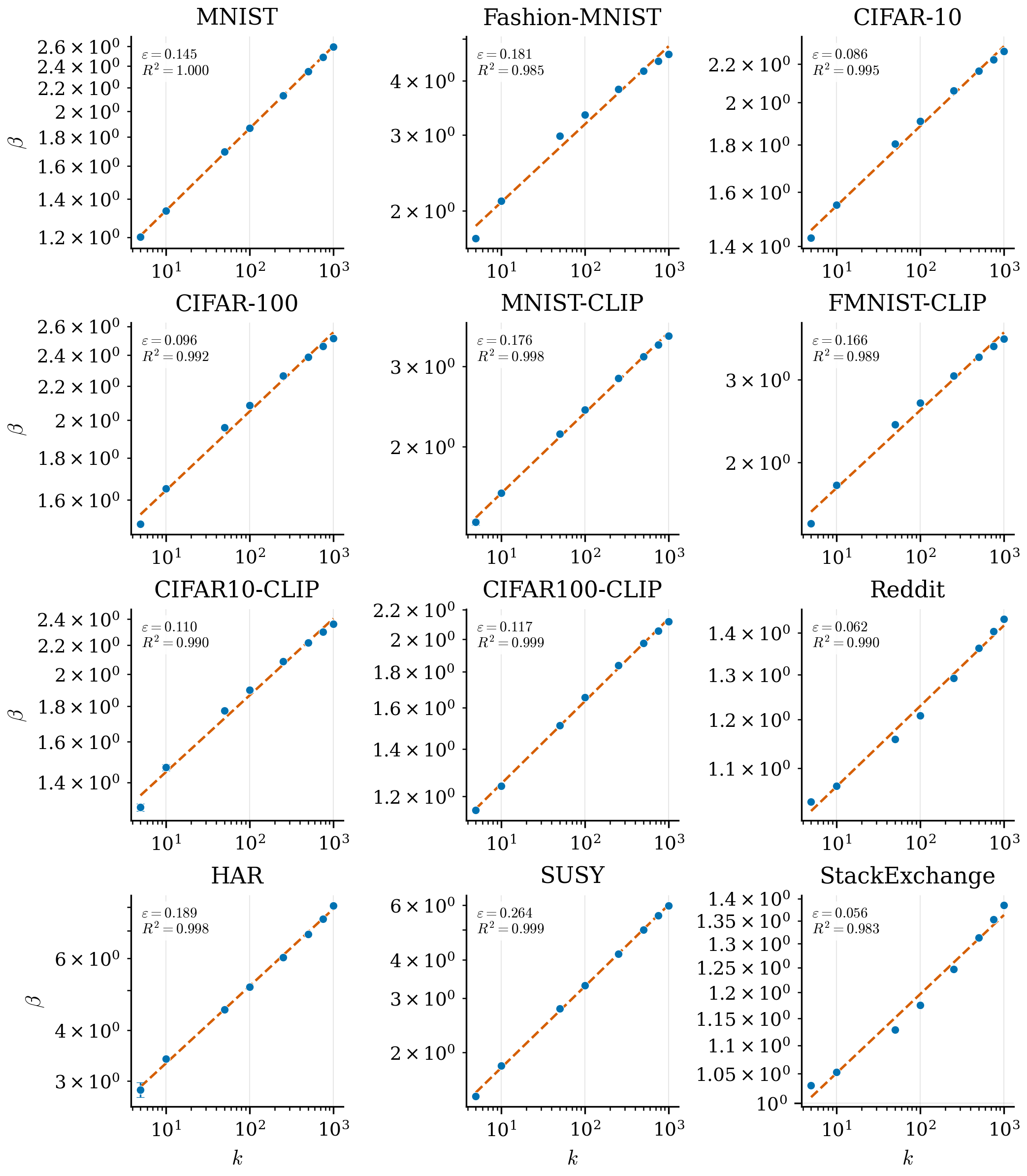}
    \caption{Scaling Laws for $\beta_k$ vs $k$}
    \label{fig:beta-grid}
\end{figure}

\begin{figure}[H]
    \centering
    \includegraphics[width=0.95\linewidth]{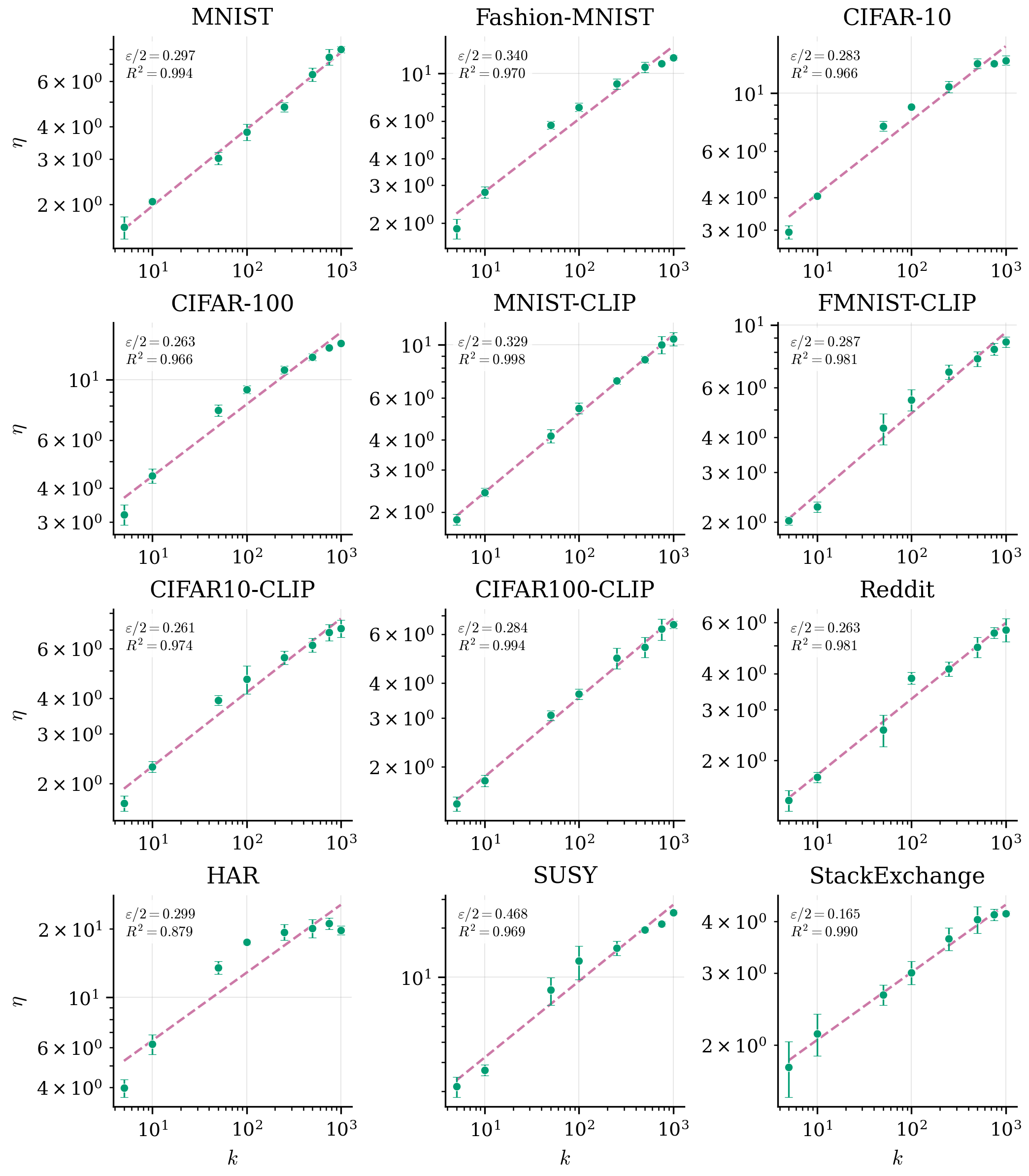}
    \caption{Scaling Laws for $\eta_k$ vs $k$}
    \label{fig:eta-grid}
\end{figure}

\begin{figure}
    \centering
    \includegraphics[width=\linewidth]{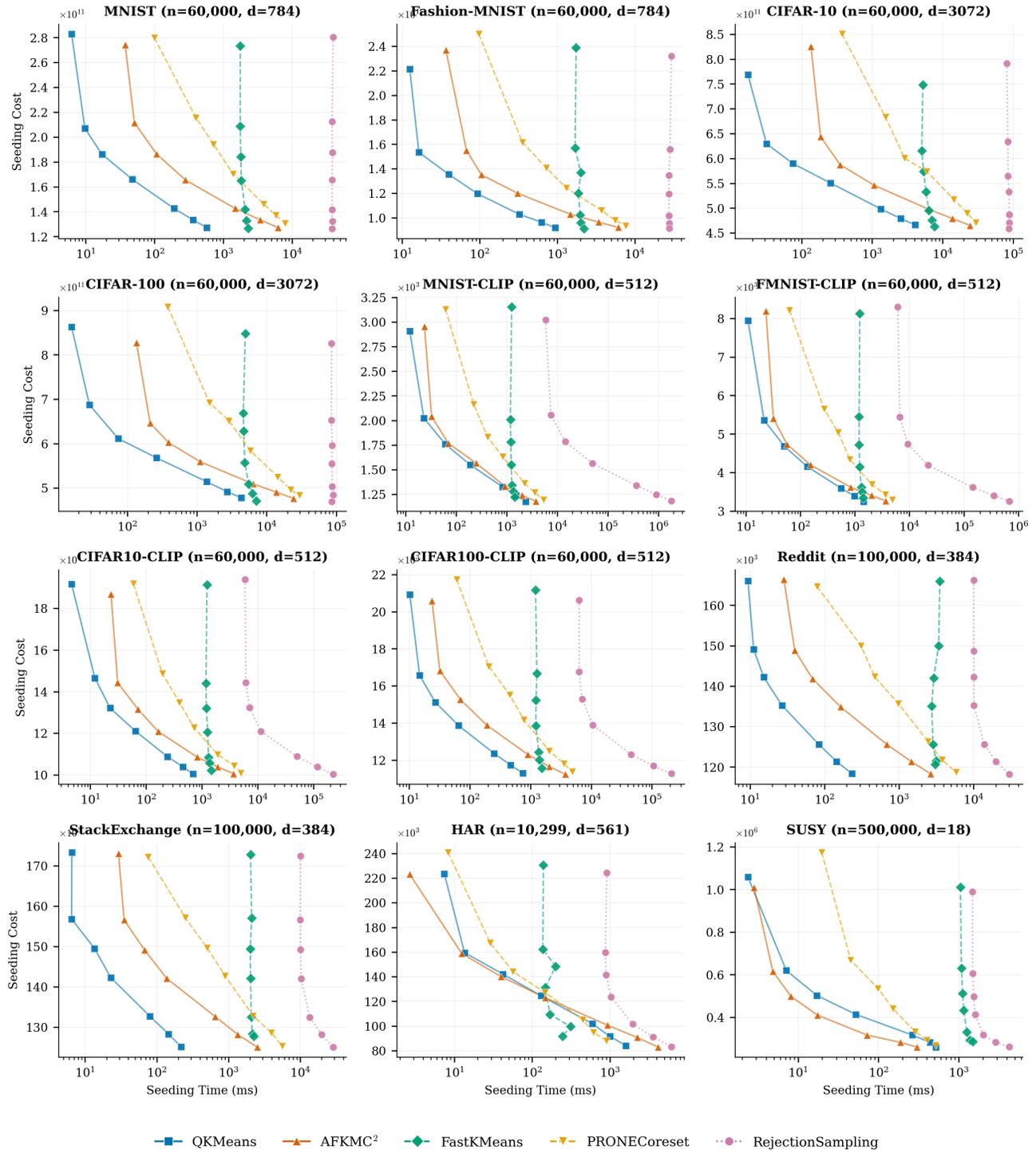}
    \caption{Plots for seeding time and seeding quality tradeoffs for various algorithms}
    \label{fig:tradeoff-plots}
\end{figure}

\begin{figure}
    \centering
    \includegraphics[width=\linewidth]{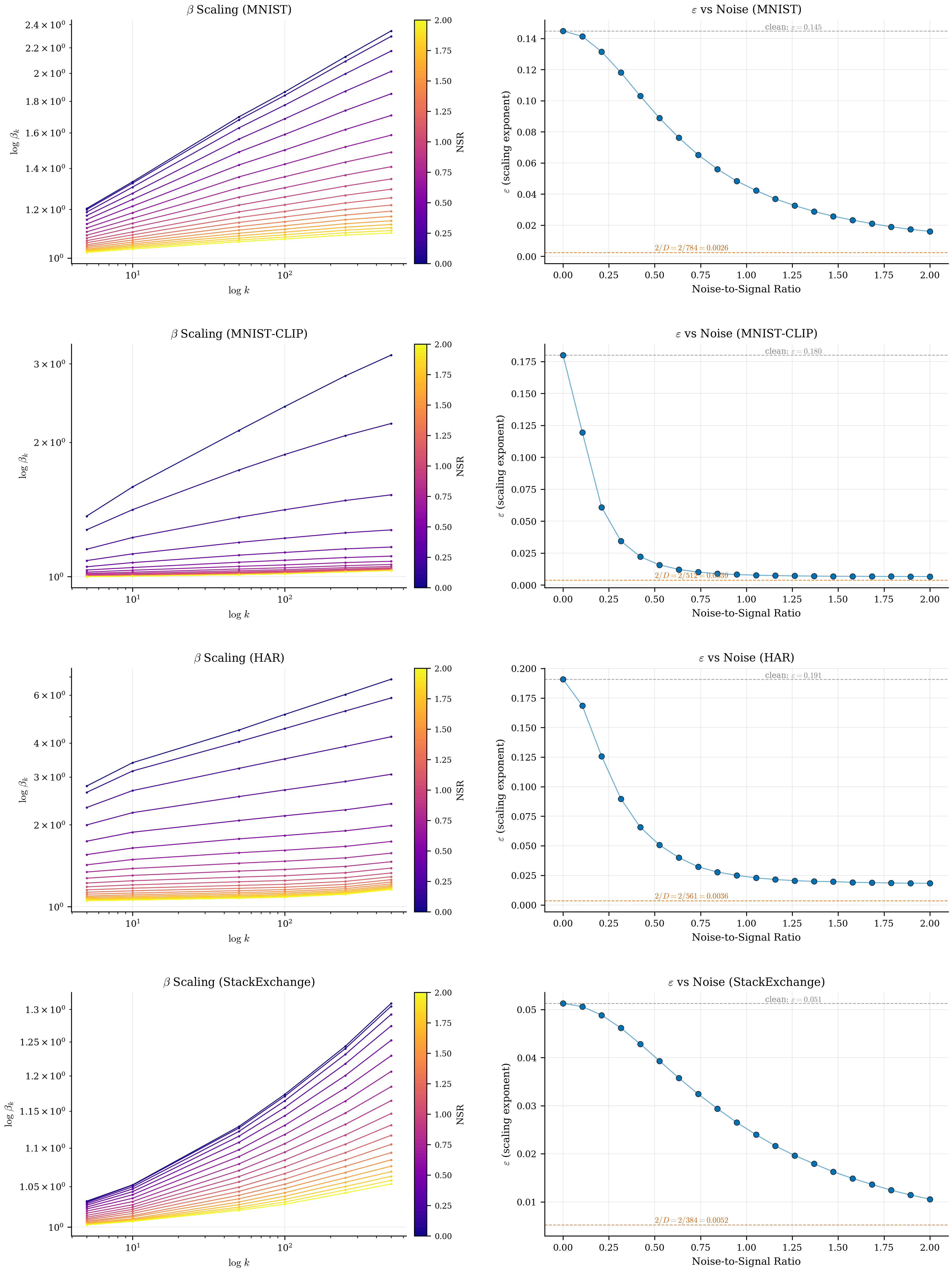}
    \caption{Effect of noise on the scaling laws, demonstrated on 4 datasets}
    \label{fig:noisy}
\end{figure}

\begin{figure}
    \centering
    \includegraphics[width=\linewidth]{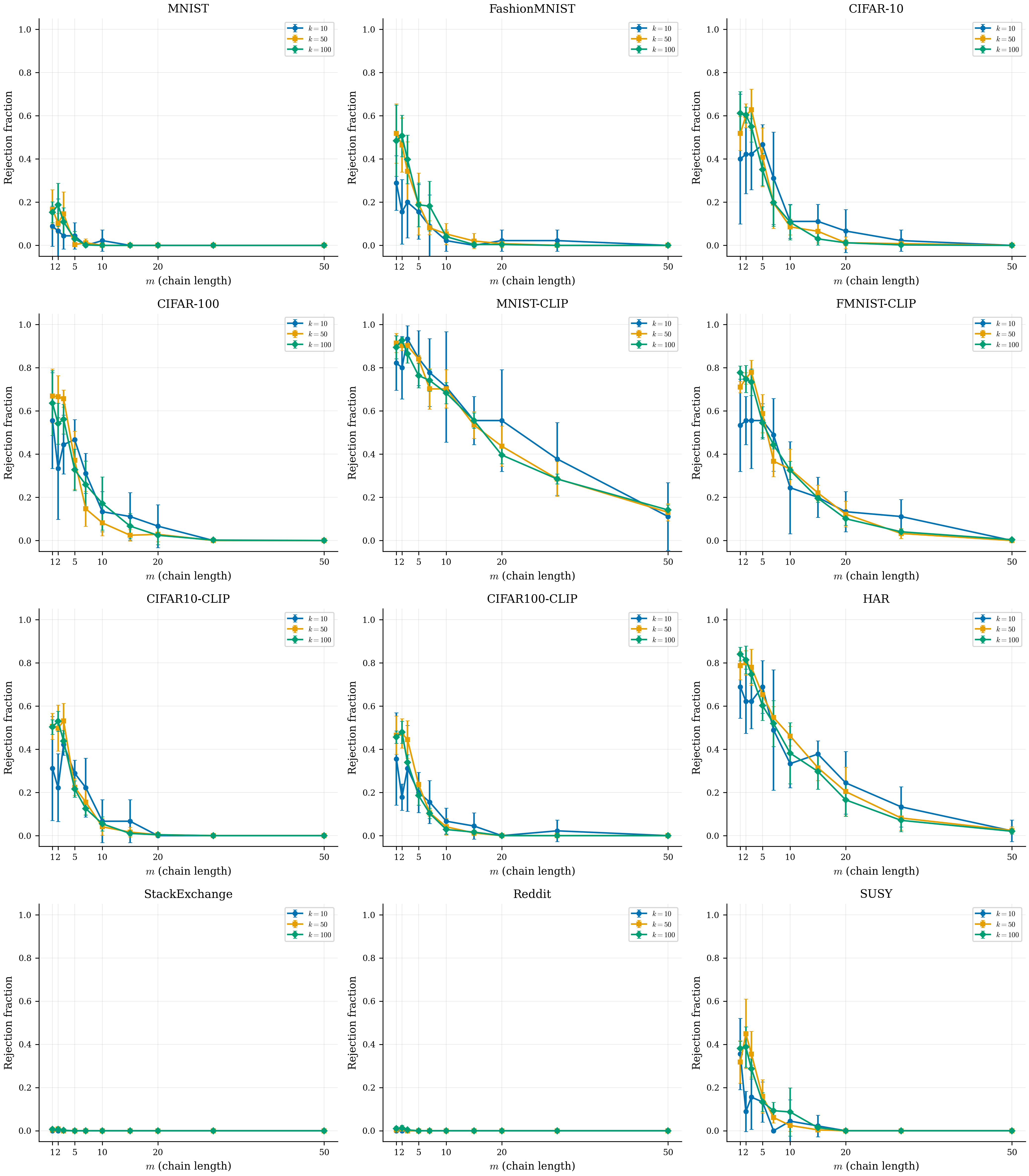}
    \caption{Fraction of samples rejected vs the chain length $m$}
    \label{fig:rejection}
\end{figure}


\begin{table}[htbp]
\centering
\small
\caption{Seeding benchmark on mnist. Time in ms, cost $\times 10^{11}$.}
\label{tab:benchmark_mnist}
\begin{tabular}{lrrrrrrrrrrrrrr}
\toprule
\textbf{Algorithm} & \multicolumn{2}{c}{$k=10$} & \multicolumn{2}{c}{$k=50$} & \multicolumn{2}{c}{$k=100$} & \multicolumn{2}{c}{$k=200$} & \multicolumn{2}{c}{$k=500$} & \multicolumn{2}{c}{$k=750$} & \multicolumn{2}{c}{$k=1000$} \\
\cmidrule(lr){2-3} \cmidrule(lr){4-5} \cmidrule(lr){6-7} \cmidrule(lr){8-9} \cmidrule(lr){10-11} \cmidrule(lr){12-13} \cmidrule(lr){14-15} 
 & Time & Cost & Time & Cost & Time & Cost & Time & Cost & Time & Cost & Time & Cost & Time & Cost \\
\midrule
$\afkmc$ & 38 & 2.74 & 51 & 2.12 & 107 & 1.86 & 281 & 1.65 & 1.5s & 1.43 & 3.4s & 1.33 & 6.2s & 1.27 \\
$\pronecoreset$ & 99 & 2.80 & 400 & 2.16 & 719 & 1.94 & 1.4s & 1.70 & 3.9s & 1.46 & 5.8s & 1.37 & 7.9s & 1.31 \\
$\fastkmeans$ & 1.8s & \textbf{2.73} & 1.8s & 2.09 & 1.8s & \textbf{1.84} & 1.8s & \textbf{1.65} & 2.1s & 1.42 & 2.2s & 1.33 & 2.3s & 1.27 \\
$\rejsample$ & 39.2s & 2.80 & 37.8s & 2.12 & 38.3s & 1.88 & 38.0s & 1.65 & 37.8s & \textbf{1.41} & 38.3s & \textbf{1.32} & 38.0s & \textbf{1.26} \\
$\qkmeans$ & \textbf{6.2} & 2.83 & \textbf{9.7} & \textbf{2.07} & \textbf{17} & 1.86 & 48 & 1.66 & 192 & 1.43 & 365 & 1.33 & 573 & 1.27 \\
\bottomrule
\end{tabular}
\end{table}

\begin{table}[htbp]
\centering
\small
\caption{Seeding benchmark on fmnist. Time in ms, cost $\times 10^{11}$.}
\label{tab:benchmark_fmnist}
\begin{tabular}{lrrrrrrrrrrrrrr}
\toprule
\textbf{Algorithm} & \multicolumn{2}{c}{$k=10$} & \multicolumn{2}{c}{$k=50$} & \multicolumn{2}{c}{$k=100$} & \multicolumn{2}{c}{$k=200$} & \multicolumn{2}{c}{$k=500$} & \multicolumn{2}{c}{$k=750$} & \multicolumn{2}{c}{$k=1000$} \\
\cmidrule(lr){2-3} \cmidrule(lr){4-5} \cmidrule(lr){6-7} \cmidrule(lr){8-9} \cmidrule(lr){10-11} \cmidrule(lr){12-13} \cmidrule(lr){14-15} 
 & Time & Cost & Time & Cost & Time & Cost & Time & Cost & Time & Cost & Time & Cost & Time & Cost \\
\midrule
$\afkmc$ & 37 & 2.37 & 67 & 1.55 & 105 & 1.35 & 308 & 1.20 & 1.5s & 1.03 & 3.4s & 0.96 & 6.1s & 0.92 \\
$\pronecoreset$ & 97 & 2.50 & 356 & 1.62 & 722 & 1.41 & 1.3s & 1.24 & 3.7s & 1.06 & 5.6s & 0.98 & 7.7s & 0.93 \\
$\fastkmeans$ & 1.7s & 2.39 & 1.7s & 1.57 & 2.0s & 1.37 & 1.9s & 1.20 & 2.0s & 1.02 & 2.0s & 0.96 & 2.2s & \textbf{0.91} \\
$\rejsample$ & 29.6s & 2.32 & 28.5s & 1.56 & 27.6s & \textbf{1.35} & 27.6s & \textbf{1.19} & 27.7s & \textbf{1.02} & 27.9s & \textbf{0.95} & 27.9s & 0.91 \\
$\qkmeans$ & \textbf{13} & \textbf{2.21} & \textbf{16} & \textbf{1.53} & 40 & 1.35 & 94 & 1.20 & 325 & 1.03 & 627 & 0.96 & 941 & 0.92 \\
\bottomrule
\end{tabular}
\end{table}

\begin{table}[htbp]
\centering
\small
\caption{Seeding benchmark on cifar10. Time in ms, cost $\times 10^{11}$.}
\label{tab:benchmark_cifar10}
\begin{tabular}{lrrrrrrrrrrrrrr}
\toprule
\textbf{Algorithm} & \multicolumn{2}{c}{$k=10$} & \multicolumn{2}{c}{$k=50$} & \multicolumn{2}{c}{$k=100$} & \multicolumn{2}{c}{$k=200$} & \multicolumn{2}{c}{$k=500$} & \multicolumn{2}{c}{$k=750$} & \multicolumn{2}{c}{$k=1000$} \\
\cmidrule(lr){2-3} \cmidrule(lr){4-5} \cmidrule(lr){6-7} \cmidrule(lr){8-9} \cmidrule(lr){10-11} \cmidrule(lr){12-13} \cmidrule(lr){14-15} 
 & Time & Cost & Time & Cost & Time & Cost & Time & Cost & Time & Cost & Time & Cost & Time & Cost \\
\midrule
$\afkmc$ & 136 & 8.25 & 186 & 6.44 & 352 & 5.87 & 1.1s & 5.46 & 6.2s & 4.99 & 13.8s & 4.79 & 24.6s & 4.65 \\
$\pronecoreset$ & 374 & 8.50 & 1.6s & 6.83 & 2.9s & 6.00 & 6.0s & 5.74 & 14.5s & 5.17 & 22.4s & 4.90 & 29.7s & 4.70 \\
$\fastkmeans$ & 5.3s & \textbf{7.48} & 5.1s & \textbf{6.15} & 5.4s & 5.74 & 5.8s & 5.33 & 6.4s & 4.95 & 7.1s & 4.76 & 7.7s & 4.63 \\
$\rejsample$ & 81.4s & 7.91 & 84.5s & 6.33 & 85.1s & \textbf{5.65} & 87.8s & \textbf{5.32} & 88.5s & \textbf{4.87} & 89.1s & \textbf{4.71} & 87.8s & \textbf{4.58} \\
$\qkmeans$ & \textbf{17} & 7.68 & \textbf{32} & 6.30 & \textbf{75} & 5.90 & \textbf{258} & 5.50 & \textbf{1.3s} & 4.98 & \textbf{2.5s} & 4.79 & \textbf{4.1s} & 4.66 \\
\bottomrule
\end{tabular}
\end{table}

\begin{table}[htbp]
\centering
\small
\caption{Seeding benchmark on cifar100. Time in ms, cost $\times 10^{11}$.}
\label{tab:benchmark_cifar100}
\begin{tabular}{lrrrrrrrrrrrrrr}
\toprule
\textbf{Algorithm} & \multicolumn{2}{c}{$k=10$} & \multicolumn{2}{c}{$k=50$} & \multicolumn{2}{c}{$k=100$} & \multicolumn{2}{c}{$k=200$} & \multicolumn{2}{c}{$k=500$} & \multicolumn{2}{c}{$k=750$} & \multicolumn{2}{c}{$k=1000$} \\
\cmidrule(lr){2-3} \cmidrule(lr){4-5} \cmidrule(lr){6-7} \cmidrule(lr){8-9} \cmidrule(lr){10-11} \cmidrule(lr){12-13} \cmidrule(lr){14-15} 
 & Time & Cost & Time & Cost & Time & Cost & Time & Cost & Time & Cost & Time & Cost & Time & Cost \\
\midrule
$\afkmc$ & 135 & 8.27 & 212 & \textbf{6.46} & 389 & 6.03 & 1.1s & 5.59 & 6.4s & 5.09 & 13.8s & 4.91 & 24.4s & 4.77 \\
$\pronecoreset$ & 382 & 9.08 & 1.5s & 6.92 & 2.9s & 6.51 & 5.9s & 5.84 & 14.5s & 5.25 & 22.2s & 4.96 & 29.7s & 4.83 \\
$\fastkmeans$ & 4.9s & 8.48 & 4.6s & 6.69 & 4.7s & 6.28 & 4.9s & 5.57 & 5.5s & 5.09 & 6.3s & 4.87 & 7.2s & 4.71 \\
$\rejsample$ & 85.9s & \textbf{8.26} & 85.5s & 6.53 & 87.7s & \textbf{5.96} & 86.7s & \textbf{5.54} & 87.9s & \textbf{5.04} & 90.2s & \textbf{4.84} & 86.6s & \textbf{4.70} \\
$\qkmeans$ & \textbf{16} & 8.63 & \textbf{29} & 6.87 & \textbf{74} & 6.11 & \textbf{262} & 5.68 & \textbf{1.4s} & 5.14 & \textbf{2.7s} & 4.92 & \textbf{4.3s} & 4.78 \\
\bottomrule
\end{tabular}
\end{table}

\begin{table}[htbp]
\centering
\small
\caption{Seeding benchmark on mnist\_clip. Time in ms, cost .}
\label{tab:benchmark_mnist_clip}
\begin{tabular}{lrrrrrrrrrrrrrr}
\toprule
\textbf{Algorithm} & \multicolumn{2}{c}{$k=10$} & \multicolumn{2}{c}{$k=50$} & \multicolumn{2}{c}{$k=100$} & \multicolumn{2}{c}{$k=200$} & \multicolumn{2}{c}{$k=500$} & \multicolumn{2}{c}{$k=750$} & \multicolumn{2}{c}{$k=1000$} \\
\cmidrule(lr){2-3} \cmidrule(lr){4-5} \cmidrule(lr){6-7} \cmidrule(lr){8-9} \cmidrule(lr){10-11} \cmidrule(lr){12-13} \cmidrule(lr){14-15} 
 & Time & Cost & Time & Cost & Time & Cost & Time & Cost & Time & Cost & Time & Cost & Time & Cost \\
\midrule
$\afkmc$ & 24 & 2955 & 33 & 2041 & 70 & 1768 & 248 & 1566 & 927 & 1330 & 2.0s & \textbf{1239} & 3.8s & 1178 \\
$\pronecoreset$ & 62 & 3131 & 224 & 2162 & 422 & 1830 & 838 & 1629 & 2.3s & 1361 & 3.6s & 1265 & 5.4s & 1191 \\
$\fastkmeans$ & 1.3s & 3156 & 1.2s & \textbf{2009} & 1.2s & 1781 & 1.2s & \textbf{1550} & 1.3s & 1340 & \textbf{1.4s} & 1279 & \textbf{1.5s} & 1221 \\
$\rejsample$ & 5.9s & 3021 & 7.5s & 2055 & 14.4s & 1783 & 49.9s & 1563 & 363.8s & 1340 & 912.2s & 1246 & 1800.0s & 1180 \\
$\qkmeans$ & \textbf{12} & \textbf{2907} & \textbf{23} & 2024 & \textbf{61} & \textbf{1759} & \textbf{190} & \textbf{1550} & \textbf{828} & \textbf{1325} & 1.5s & \textbf{1239} & 2.4s & \textbf{1175} \\
\bottomrule
\end{tabular}
\end{table}

\begin{table}[htbp]
\centering
\small
\caption{Seeding benchmark on fmnist\_clip. Time in ms, cost .}
\label{tab:benchmark_fmnist_clip}
\begin{tabular}{lrrrrrrrrrrrrrr}
\toprule
\textbf{Algorithm} & \multicolumn{2}{c}{$k=10$} & \multicolumn{2}{c}{$k=50$} & \multicolumn{2}{c}{$k=100$} & \multicolumn{2}{c}{$k=200$} & \multicolumn{2}{c}{$k=500$} & \multicolumn{2}{c}{$k=750$} & \multicolumn{2}{c}{$k=1000$} \\
\cmidrule(lr){2-3} \cmidrule(lr){4-5} \cmidrule(lr){6-7} \cmidrule(lr){8-9} \cmidrule(lr){10-11} \cmidrule(lr){12-13} \cmidrule(lr){14-15} 
 & Time & Cost & Time & Cost & Time & Cost & Time & Cost & Time & Cost & Time & Cost & Time & Cost \\
\midrule
$\afkmc$ & 23 & 8185 & 32 & 5407 & 57 & 4732 & 153 & 4201 & 839 & 3620 & 2.0s & 3407 & 3.7s & 3262 \\
$\pronecoreset$ & 63 & 8210 & 272 & 5646 & 498 & 5038 & 812 & 4339 & 2.0s & 3689 & 3.6s & 3430 & 4.9s & 3294 \\
$\fastkmeans$ & 1.2s & 8128 & 1.2s & 5445 & 1.2s & 4720 & 1.2s & \textbf{4143} & 1.3s & 3622 & 1.4s & 3490 & \textbf{1.4s} & 3348 \\
$\rejsample$ & 6.1s & 8301 & 6.7s & 5437 & 9.3s & 4742 & 22.1s & 4188 & 145.5s & 3616 & 355.1s & 3401 & 685.6s & 3257 \\
$\qkmeans$ & \textbf{11} & \textbf{7944} & \textbf{22} & \textbf{5361} & \textbf{51} & \textbf{4682} & \textbf{134} & 4155 & \textbf{555} & \textbf{3596} & \textbf{976} & \textbf{3394} & 1.4s & \textbf{3249} \\
\bottomrule
\end{tabular}
\end{table}

\begin{table}[htbp]
\centering
{\small
\caption{Seeding benchmark on cifar10\_clip. Time in ms, cost $\times 10^2$.}
\label{tab:benchmark_cifar10_clip}
\begin{tabular}{lrrrrrrrrrrrrrr}
\toprule
\textbf{Algorithm} & \multicolumn{2}{c}{$k=10$} & \multicolumn{2}{c}{$k=50$} & \multicolumn{2}{c}{$k=100$} & \multicolumn{2}{c}{$k=200$} & \multicolumn{2}{c}{$k=500$} & \multicolumn{2}{c}{$k=750$} & \multicolumn{2}{c}{$k=1000$} \\
\cmidrule(lr){2-3} \cmidrule(lr){4-5} \cmidrule(lr){6-7} \cmidrule(lr){8-9} \cmidrule(lr){10-11} \cmidrule(lr){12-13} \cmidrule(lr){14-15} 
 & Time & Cost & Time & Cost & Time & Cost & Time & Cost & Time & Cost & Time & Cost & Time & Cost \\
\midrule
$\afkmc$ & 24 & \textbf{187} & 31 & 144 & 72 & \textbf{131} & 165 & 121 & 823 & 108 & 1.9s & \textbf{104} & 3.7s & 101 \\
$\pronecoreset$ & 59 & 192 & 195 & 149 & 396 & 135 & 720 & 123 & 1.9s & 110 & 3.8s & 104 & 5.0s & 101 \\
$\fastkmeans$ & 1.2s & 191 & 1.2s & \textbf{144} & 1.2s & 132 & 1.3s & \textbf{120} & 1.3s & \textbf{108} & 1.4s & 106 & 1.5s & 102 \\
$\rejsample$ & 5.9s & 194 & 6.1s & 144 & 7.2s & 132 & 11.4s & 121 & 50.5s & 109 & 117.1s & 104 & 224.0s & \textbf{100} \\
$\qkmeans$ & \textbf{4.6} & 192 & \textbf{12} & 146 & \textbf{23} & 132 & \textbf{65} & 121 & \textbf{240} & 109 & \textbf{453} & 104 & \textbf{695} & 100 \\
\bottomrule
\end{tabular}}
\end{table}

\begin{table}[htbp]
\centering
\small
\caption{Seeding benchmark on cifar100\_clip. Time in ms, cost $\times 10^2$.}
\label{tab:benchmark_cifar100_clip}
\begin{tabular}{lrrrrrrrrrrrrrr}
\toprule
\textbf{Algorithm} & \multicolumn{2}{c}{$k=10$} & \multicolumn{2}{c}{$k=50$} & \multicolumn{2}{c}{$k=100$} & \multicolumn{2}{c}{$k=200$} & \multicolumn{2}{c}{$k=500$} & \multicolumn{2}{c}{$k=750$} & \multicolumn{2}{c}{$k=1000$} \\
\cmidrule(lr){2-3} \cmidrule(lr){4-5} \cmidrule(lr){6-7} \cmidrule(lr){8-9} \cmidrule(lr){10-11} \cmidrule(lr){12-13} \cmidrule(lr){14-15} 
 & Time & Cost & Time & Cost & Time & Cost & Time & Cost & Time & Cost & Time & Cost & Time & Cost \\
\midrule
$\afkmc$ & 24 & \textbf{206} & 32 & 168 & 70 & 153 & 191 & 139 & 906 & \textbf{123} & 2.0s & \textbf{117} & 3.7s & \textbf{112} \\
$\pronecoreset$ & 61 & 218 & 206 & 170 & 455 & 155 & 782 & 142 & 2.0s & 125 & 3.6s & 118 & 4.9s & 114 \\
$\fastkmeans$ & 1.2s & 212 & 1.3s & 167 & 1.2s & 152 & 1.2s & \textbf{138} & 1.4s & 124 & 1.4s & 120 & 1.5s & 116 \\
$\rejsample$ & 6.3s & 206 & 6.3s & 168 & 7.1s & 153 & 10.6s & 139 & 45.4s & 123 & 106.6s & 117 & 208.3s & 113 \\
$\qkmeans$ & \textbf{10} & 209 & \textbf{15} & \textbf{166} & \textbf{27} & \textbf{151} & \textbf{65} & 139 & \textbf{252} & 124 & \textbf{470} & 117 & \textbf{745} & 113 \\
\bottomrule
\end{tabular}
\end{table}

\begin{table}[htbp]
\centering
\small
\caption{Seeding benchmark on reddit. Time in ms, cost $\times 10^2$.}
\label{tab:benchmark_reddit}
\begin{tabular}{lrrrrrrrrrrrrrr}
\toprule
\textbf{Algorithm} & \multicolumn{2}{c}{$k=10$} & \multicolumn{2}{c}{$k=50$} & \multicolumn{2}{c}{$k=100$} & \multicolumn{2}{c}{$k=200$} & \multicolumn{2}{c}{$k=500$} & \multicolumn{2}{c}{$k=750$} & \multicolumn{2}{c}{$k=1000$} \\
\cmidrule(lr){2-3} \cmidrule(lr){4-5} \cmidrule(lr){6-7} \cmidrule(lr){8-9} \cmidrule(lr){10-11} \cmidrule(lr){12-13} \cmidrule(lr){14-15} 
 & Time & Cost & Time & Cost & Time & Cost & Time & Cost & Time & Cost & Time & Cost & Time & Cost \\
\midrule
$\afkmc$ & 29 & 1664 & 40 & 1489 & 69 & \textbf{1418} & 165 & \textbf{1348} & 687 & 1256 & 1.5s & 1214 & 2.7s & 1183 \\
$\pronecoreset$ & 80 & \textbf{1647} & 309 & 1500 & 472 & 1423 & 982 & 1357 & 2.5s & 1263 & 3.8s & 1218 & 5.9s & 1187 \\
$\fastkmeans$ & 3.5s & 1659 & 3.4s & 1500 & 2.9s & 1420 & 2.8s & 1350 & 2.9s & 1256 & 3.2s & 1215 & 3.1s & 1207 \\
$\rejsample$ & 10.1s & 1662 & 10.1s & \textbf{1486} & 10.2s & 1423 & 10.2s & 1352 & 14.0s & \textbf{1255} & 20.2s & \textbf{1213} & 30.5s & \textbf{1182} \\
$\qkmeans$ & \textbf{9.4} & 1660 & \textbf{11} & 1491 & \textbf{15} & 1422 & \textbf{27} & 1352 & \textbf{85} & 1256 & \textbf{146} & 1213 & \textbf{235} & 1183 \\
\bottomrule
\end{tabular}
\end{table}

\begin{table}[htbp]
\centering
\small
\caption{Seeding benchmark on stackexchange. Time in ms, cost $\times 10^2$.}
\label{tab:benchmark_stackexchange}
\begin{tabular}{lrrrrrrrrrrrrrr}
\toprule
\textbf{Algorithm} & \multicolumn{2}{c}{$k=10$} & \multicolumn{2}{c}{$k=50$} & \multicolumn{2}{c}{$k=100$} & \multicolumn{2}{c}{$k=200$} & \multicolumn{2}{c}{$k=500$} & \multicolumn{2}{c}{$k=750$} & \multicolumn{2}{c}{$k=1000$} \\
\cmidrule(lr){2-3} \cmidrule(lr){4-5} \cmidrule(lr){6-7} \cmidrule(lr){8-9} \cmidrule(lr){10-11} \cmidrule(lr){12-13} \cmidrule(lr){14-15} 
 & Time & Cost & Time & Cost & Time & Cost & Time & Cost & Time & Cost & Time & Cost & Time & Cost \\
\midrule
$\afkmc$ & 29 & 1730 & 35 & \textbf{1566} & 68 & \textbf{1492} & 137 & 1421 & 655 & 1326 & 1.3s & 1282 & 2.5s & \textbf{1251} \\
$\pronecoreset$ & 76 & \textbf{1722} & 250 & 1572 & 504 & 1497 & 900 & 1427 & 2.2s & 1328 & 4.0s & 1286 & 5.6s & 1253 \\
$\fastkmeans$ & 2.0s & 1728 & 2.1s & 1571 & 2.0s & 1494 & 2.0s & 1421 & 2.1s & 1326 & 2.1s & 1284 & 2.3s & 1277 \\
$\rejsample$ & 10.1s & 1724 & 10.0s & 1566 & 10.1s & 1492 & 10.3s & \textbf{1420} & 13.6s & \textbf{1325} & 19.9s & \textbf{1282} & 28.7s & 1251 \\
$\qkmeans$ & \textbf{6.5} & 1733 & \textbf{6.5} & 1568 & \textbf{14} & 1495 & \textbf{23} & 1423 & \textbf{80} & 1328 & \textbf{146} & 1283 & \textbf{220} & 1251 \\
\bottomrule
\end{tabular}
\end{table}

\begin{table}[htbp]
\centering
\small
\caption{Seeding benchmark on har. Time in ms, cost $\times 10^2$.}
\label{tab:benchmark_har}
\begin{tabular}{lrrrrrrrrrrrrrr}
\toprule
\textbf{Algorithm} & \multicolumn{2}{c}{$k=10$} & \multicolumn{2}{c}{$k=50$} & \multicolumn{2}{c}{$k=100$} & \multicolumn{2}{c}{$k=200$} & \multicolumn{2}{c}{$k=500$} & \multicolumn{2}{c}{$k=750$} & \multicolumn{2}{c}{$k=1000$} \\
\cmidrule(lr){2-3} \cmidrule(lr){4-5} \cmidrule(lr){6-7} \cmidrule(lr){8-9} \cmidrule(lr){10-11} \cmidrule(lr){12-13} \cmidrule(lr){14-15} 
 & Time & Cost & Time & Cost & Time & Cost & Time & Cost & Time & Cost & Time & Cost & Time & Cost \\
\midrule
$\afkmc$ & \textbf{2.7} & \textbf{2230} & \textbf{13} & \textbf{1587} & \textbf{40} & \textbf{1401} & 148 & \textbf{1230} & 931 & \textbf{1007} & 2.2s & \textbf{907} & 4.2s & \textbf{829} \\
$\pronecoreset$ & 8.3 & 2408 & 29 & 1674 & 57 & 1440 & 147 & 1273 & 449 & 1049 & 617 & 949 & 905 & 880 \\
$\fastkmeans$ & 140 & 2305 & 138 & 1623 & 201 & 1483 & 147 & 1312 & \textbf{170} & 1094 & \textbf{313} & 996 & \textbf{247} & 917 \\
$\rejsample$ & 913 & 2244 & 873 & 1596 & 895 & 1415 & 1.0s & 1236 & 2.0s & 1016 & 3.6s & 911 & 6.1s & 832 \\
$\qkmeans$ & 7.5 & 2235 & 14 & 1594 & 42 & 1421 & \textbf{130} & 1245 & 595 & 1018 & 1.0s & 916 & 1.6s & 841 \\
\bottomrule
\end{tabular}
\end{table}

\begin{table}[htbp]
\centering
\small
\caption{Seeding benchmark on susy. Time in ms, cost $\times 10^{6}$.}
\label{tab:benchmark_susy}
\begin{tabular}{lrrrrrrrrrrrrrr}
\toprule
\textbf{Algorithm} & \multicolumn{2}{c}{$k=10$} & \multicolumn{2}{c}{$k=50$} & \multicolumn{2}{c}{$k=100$} & \multicolumn{2}{c}{$k=200$} & \multicolumn{2}{c}{$k=500$} & \multicolumn{2}{c}{$k=750$} & \multicolumn{2}{c}{$k=1000$} \\
\cmidrule(lr){2-3} \cmidrule(lr){4-5} \cmidrule(lr){6-7} \cmidrule(lr){8-9} \cmidrule(lr){10-11} \cmidrule(lr){12-13} \cmidrule(lr){14-15} 
 & Time & Cost & Time & Cost & Time & Cost & Time & Cost & Time & Cost & Time & Cost & Time & Cost \\
\midrule
$\afkmc$ & 2.8 & 1.01 & \textbf{4.8} & 0.62 & \textbf{8.1} & 0.50 & \textbf{17} & \textbf{0.41} & \textbf{72} & \textbf{0.32} & \textbf{187} & \textbf{0.28} & \textbf{302} & 0.26 \\
$\pronecoreset$ & 20 & 1.17 & 45 & 0.67 & 99 & 0.54 & 152 & 0.44 & 288 & 0.33 & 404 & 0.29 & 521 & 0.27 \\
$\fastkmeans$ & 1.1s & 1.01 & 1.1s & 0.63 & 1.1s & 0.51 & 1.2s & 0.43 & 1.3s & 0.33 & 1.4s & 0.29 & 1.5s & 0.29 \\
$\rejsample$ & 1.5s & \textbf{0.99} & 1.5s & \textbf{0.61} & 1.5s & \textbf{0.50} & 1.6s & 0.41 & 2.0s & 0.32 & 2.9s & 0.28 & 4.3s & 0.26 \\
$\qkmeans$ & \textbf{2.4} & 1.06 & 7.1 & 0.62 & 17 & 0.50 & 52 & 0.41 & 264 & 0.32 & 437 & 0.28 & 518 & \textbf{0.26} \\
\bottomrule
\end{tabular}
\end{table}


\end{document}